\newif\ifarxiv 
\numberwithin{equation}{section} 
\tikzset{
  symbol/.style={
    draw=none,
    every to/.append style={
      edge node={node [sloped, allow upside down, auto=false]{$#1$}}}
  }
}
\newtheorem{lem}{Lemma}[section]
\newtheorem{prop}[lem]{Proposition}
\newtheorem{theo}[lem]{Theorem}
\newtheorem{cor}[lem]{Corollary}
\theoremstyle{definition}
\newtheorem{defi}[lem]{Definition}
\theoremstyle{remark}
\newtheorem{exam}[lem]{Example}
\newtheorem{remark}[lem]{Remark}
\newtheorem{assumption}[lem]{Assumption}
\newcommand{\eproof}{\hfill\qed}
\def\mcP{\mathcal{P}}
\def\mcZ{\mathcal{Z}}
\def\bbE{\mathbb{E}}
\def\bbR{\mathbb{R}}
\def\bbZ{\mathbb{Z}}
\def\bbN{\mathbb{N}}
\def\bbC{\mathbb{C}}
\def\bbP{\mathbb{P}}
\def\bbD{\mathbb{D}}
\def\fkf{\mathfrak{f}}
\def\fkc{\mathfrak{c}}
\def\fkx{\mathfrak{x}}
\def\fky{\mathfrak{y}}
\def\fkz{\mathfrak{z}}
\def\sgn{\mathsf{sgn}}
\def\NN{\mathbb{N}\xspace}
\def\RR{\mathbb{R}\xspace}
\DeclareMathOperator{\rem}{rem}
\def\Pd{\mcP_{d}}
\def\Oh{\mathcal{O}}
\DeclareMathOperator{\cond}{\texttt{C}}
\DeclareMathOperator{\condR}{\texttt{C}_{\bbR}}
\def\uR{^{\mathbb{R}}}
\def\uC{^{\mathbb{C}}}
\def\enumber{\mathrm{e}}
\def\dist{\mathrm{dist}}
\DeclarePairedDelimiter\ceil{\lceil}{\rceil}
\DeclarePairedDelimiter\abs{\lvert}{\rvert}
\DeclarePairedDelimiter\norm{\lVert}{\rVert}
\newcommand{\OO}{\ensuremath{\mathcal{O}}\xspace}
\newcommand{\sO}{\ensuremath{\widetilde{\mathcal{O}}}\xspace}
\newcommand{\OB}{\ensuremath{\mathcal{O}_B}\xspace}
\newcommand{\sOB}{\ensuremath{\widetilde{\mathcal{O}}_B}\xspace}
\newcommand{\sturm}{\textsc{sturm}\xspace}
\newcommand{\Descartes}{\textsc{descartes}\xspace}
\newcommand{\descartes}{\textsc{descartes}\xspace}
\newcommand{\anewdsc}{\textsc{aNewDsc}\xspace}
\newcommand{\jssparse}{\textsc{JS-sparse}\xspace}
\newcommand{\uObD}{{\overline{\mathcal{D}}}\xspace}
\newcommand{\dObD}{{\underline{\mathcal{D}}}\xspace}
\newcommand{\ObL}{\mathcal{L}\xspace}
\newcommand{\ObA}{\mathcal{A}\xspace}
\newcommand{\wid}{\mathtt{wid}\xspace}
\newcommand{\xmid}{\mathtt{mid}\xspace}
\newcommand{\img}{\mathtt{i}\xspace}
\newcommand{\ST}{\mathtt{ST}\xspace}
\newcommand{\var}{\textsc{var}\xspace}
\newcommand{\func}[1]{\textsc{#1}\xspace}
\let\original@algocf@latexcaption\algocf@latexcaption
\long\def\algocf@latexcaption#1[#2]{%
  \@ifundefined{NR@gettitle}{%
    \def\@currentlabelname{#2}%
  }{%
    \NR@gettitle{#2}%
  }%
  \original@algocf@latexcaption{#1}[{#2}]%
}
  \providecommand\BibTeX{{%
    \normalfont B\kern-0.5em{\scshape i\kern-0.25em b}\kern-0.8em\TeX}}}
\begin{document}

\title{Beyond Worst-Case Analysis for Symbolic Computation: Root Isolation Algorithms}


\author{Alperen A. Ergür}
\authornote{All authors contributed equally to this research.
The paper is an extended version of a conference paper presented in ISSAC 2022 \cite{etct-bwc-issac-22}.}
\email{alperen.ergur@utsa.edu}
\orcid{0000-0002-2340-6551}
\authornotemark[1]
\affiliation{%
  \institution{The University of Texas at San Antonio}
  \department{Department of Computer Science and Department of Mathematics}
  \streetaddress{One UTSA Circle}
  \city{San Antonio}
  \state{TX}
  \postcode{78249}
  \country{USA}
}

\author{Josué Tonelli-Cueto}
\authornotemark[1]
\email{josue.tonelli.cueto@bizkaia.eu}
\orcid{0000-0002-2904-1215}
\affiliation{%
  \institution{Johns Hopkins University}
  \department{Department of Applied Mathematics and Statistics}
  \streetaddress{3400 North Charles Street}
  \city{Baltimore}
  \state{MD}
  \postcode{21218}
  \country{USA}
}

\author{Elias Tsigaridas}
\authornotemark[1]
\email{elias.tsigaridas@inria.fr}
\affiliation{%
  \institution{Inria Paris \& Sorbonne University}
  \department{IMJ-PRG}
  \city{Paris}
  \country{France}}

\renewcommand{\shortauthors}{A.A. Ergür, J. Tonelli-Cueto and E. Tsigaridas}

\begin{abstract}
We introduce beyond-worst-case analysis into symbolic computation. This is an extensive field which almost entirely relies on worst-case bit complexity, and we start from a basic problem in the field: isolating the real roots of univariate polynomials. This is a fundamental problem in symbolic computation and it is arguably one of the most basic problems in computational mathematics. The problem has a long history decorated with numerous ingenious algorithms and furnishes an active area of research. However, most available results in literature either focus on worst-case analysis in the bit complexity model or simply provide experimental benchmarking without any theoretical justifications of the observed results. We aim to address the discrepancy between practical performance of root isolation algorithms and prescriptions of worst-case complexity theory: We develop a smoothed analysis framework for polynomials with integer coefficients to bridge this gap.  
We demonstrate (quasi-)linear (expected and smoothed) complexity bounds
for \Descartes algorithm, that is one most well know symbolic algorithms for isolating the real roots of univariate polynomials with integer coefficients. 
Our results explain the surprising efficiency of Descartes solver in comparison to sophisticated algorithms that have superior worst-case complexity.
We also analyse the \sturm solver,  \anewdsc a symbolic-numeric algorithm that combines \Descartes with Newton operator,
and a symbolic algorithm for sparse polynomials.
\end{abstract}

\begin{CCSXML}
<ccs2012>
   <concept>
       <concept_id>10003752.10003809.10003636.10003815</concept_id>
       <concept_desc>Theory of computation~Numeric approximation algorithms</concept_desc>
       <concept_significance>500</concept_significance>
       </concept>
   <concept>
       <concept_id>10010147.10010148.10010149</concept_id>
       <concept_desc>Computing methodologies~Symbolic and algebraic algorithms</concept_desc>
       <concept_significance>500</concept_significance>
       </concept>
   <concept>
       <concept_id>10003752.10010061</concept_id>
       <concept_desc>Theory of computation~Randomness, geometry and discrete structures</concept_desc>
       <concept_significance>300</concept_significance>
       </concept>
   <concept>
       <concept_id>10003752.10003777.10003787</concept_id>
       <concept_desc>Theory of computation~Complexity theory and logic</concept_desc>
       <concept_significance>300</concept_significance>
       </concept>
 </ccs2012>
\end{CCSXML}

\ccsdesc[500]{Theory of computation~Numeric approximation algorithms}
\ccsdesc[500]{Computing methodologies~Symbolic and algebraic algorithms}
\ccsdesc[300]{Theory of computation~Randomness, geometry and discrete structures}
\ccsdesc[300]{Theory of computation~Complexity theory and logic}

\keywords{
univariate polynomials,
root-finding,
Descartes solver,
bit complexity,
condition-based complexity,
average complexity,
beyond worst-case analysis}

\received{????}
\received[revised]{????}
\received[accepted]{????}

\maketitle

\newpage
\tableofcontents
\newpage

\section{Introduction}

The complementary influence between design and analysis of algorithms has transformative implications on both domains. On the one hand, surprisingly efficient algorithms, such as the simplex algorithm, reshape the landscape of complexity analysis frameworks. On the other hand, the identification of fundamental complexity parameters has the potential to transform the algorithm development; the preconditioned conjugate gradient algorithm is a case in point. This interplay between complexity analysis frameworks and algorithmic design  represents a dynamic and vibrant area of contemporary research in discrete computation \cite{roughgarden2021book,downeyfellows2013} with roots in the early days of complexity theory~\cite[Ch.~18]{arorabarak2009}. This line of thought already demonstrated remarkable success starting from the pioneering work of Spielman and Teng on linear programming \cite{sp-smooth-jacm}, continued with remarkable works on local search algorithms in discrete optimization \cite{roughgarden2021book}[Chapters 13 and 15], and more recently in other fields such as online algorithms \cite{haghtalab2020smoothed} and statistical learning \cite{diakonikolas2023algorithmic}.  All these efforts fall under the umbrella of the framework of \emph{beyond worst case analysis of algorithms}.

In the (specific) domain of numerical algorithms, condition numbers proved to be the fundamental notion connecting design and complexity analysis of algorithms. On the one hand, condition numbers provide a means to elucidate the success of specific numerical algorithms, and on the other hand, are the pivotal complexity parameters guiding the development of novel algorithms. This was already noticed by Turing~\cite{turing1948} in his efforts to explain the practical efficacy of Gaussian elimination as documented by Wilkinson~\cite{wilkinson1971}. This tight connection of theoretical and practical aspects of numerical computation resulted in "the ability to compute quantities that are typically uncomputable from an analytical point of view, and do it with lightning speed", quoting Trefethen \cite{trefethen1992definition}.

Motivated by the success of  beyond worst-case analysis in general
and the success of condition numbers in numerical algorithms in particular, we embark on an endeavor to introduce such algorithmic analysis tools into the domain of symbolic computation.  To the best of our knowledge, this expansive field has predominantly relied upon worst-case bit complexity for analysis of algorithms. More precisely, in this paper we pursue two ideas simultaneously: (1)~develop a theory of condition numbers as a basic parameter to understand the behaviour of symbolic algorithms, and (2)~develop data models on discrete, that is integer input, that captures the problem instances in symbolic computation.

Our overarching aim is to enrich symbolic computation with ideas from beyond worst-case analysis and numerical computation. So we naturally start from the most basic and fundamental questions in this field: We work on delineating the performance of algorithms for computing
the roots of univariate polynomials. This is a singularly important problem with whole range of applications in computer science and engineering. It is
extensively studied from theoretical and practical perspectives for decades and
it is still a very active area of research
\cite{McP-book-13,Pan-survey-97,ept-crc-2012,pan-bb-soda-2024,moroz2021,ImbPan-issac-2020,pan-survey-2022,MorImb-rf-23}.
Our main focus is on the \emph{real root isolation} problem:
given a univariate polynomial with integer coefficients, our goal is to compute intervals with
rational endpoints that contain only one real root of the polynomial and each
real root is contained in an interval. Besides its countless direct
applications, this problem is omnipresent in symbolic computation; it is a crucial subroutine for elimination-based multivariate polynomial systems solvers, see e.g., ~\cite{ept-crc-2012}.

Despite the ubiquity of (real) root isolation in engineering and its relatively
long history in theoretical computer science, the state-of-the-art complexity
analysis falls short of providing guidance for practical computations. Pan's
algorithm~\cite{Pan02jsc}, which finds, that is approximates, all the complex roots and not just the real ones, has the best worst-case complexity since nearly two
decades; it is colloquially referred to as the "optimal" algorithm. However,
Pan's algorithm is rather sophisticated and has only, to our knowledge, a prototype implementation
in PARI/GP \cite{PARI2}.
In contrast, other algorithms with inferior worst-case complexity estimates
have excellent practical performance, e.g.,~
\cite{krs-for-real-16,htzekm-solve-09,et-tcs-2007,ImbPan-issac-2020}.
The algorithms that are used in practice,
even though they achieve disappointing worst case (bit) complexity bounds, are conceptually simpler and, surprisingly, they outperform the rivals with superior worst-case bounds by several orders of magnitude \cite{t-slv-16,RouZim:solve:03,htzekm-solve-09}.  In our view, this lasting
discrepancy between theoretical complexity analyses and practical performance
is related to the insistence on using the worst-case framework in the symbolic
computation community besides a few exceptions, e.g.~\cite{egt-issac-2010,et-tcs-2007,PanTsi-refine-2013}. 


Despite the importance of root isolation and its extensive literature, bearing the aforementioned few exceptions, there remains a big discrepancy between theoretical analysis and practice of solving univariate polynomials. Basically, symbolic computation literature lacks appropriate randomness models and technical tools to perform beyond worst-case analysis. Our approach addresses this gap.
 
We introduce tools that allow us to demonstrate how average/smoothed analysis frameworks can help to
 predict the practical performance of symbolic (real) root isolation
  algorithms. In particular, we show that in our discrete random  model the 
\descartes solver, a solver commonly used in practice, has quasi-linear bit complexity in the input size. This provides an explanation for the excellent practical performance of \descartes: See \Cref{subsec:results1} for a simple statement and \Cref{subsec:results2} for the full technical statement.
Besides \descartes, we consider \sturm solver (\Cref{sec:Sturm})
that is based on Sturm' sequences. Our average and smoothed analysis bounds 
are worse than the one of \descartes by an order of magnitude.
This provides the first theoretical explanation of the superiority of \descartes over \sturm that is commonly seen in practice. 
In addition, we analyze a hybrid symbolic/numeric solver, \anewdsc, 
(\Cref{sec:aNewDsc})
that combines Descartes' rule of signs with Newton operators; its bounds are similar to \descartes. 
Finally, we consider \jssparse solver by Jindal and Sagraloff \cite{JinSag-sparse-17}, that isolates the real roots of univariate polynomials in the sparse encoding (\Cref{sec:jindal-sagraloff}). We are not aware of any other analysis, except worst case, of a sparse solver. 

To justify our main focus on \descartes solver
we emphasize that is the symbolic algorithm 
commonly used in practice because of its simplicity and efficiency. 
Furthermore, from the theoretical point of view, 
it is the algorithm that requires the widest arsenal of tools
for its beyond worst-case analysis: We can analyze the other solvers
using a (suitable modified) subset of the tools that we employ for \descartes, not necessarily the same for all of them.



\subsection{Warm-up: A simple form of the main results}
\label{subsec:results1}

The main complexity parameters for univariate polynomials with integer (or
rational) coefficients is the degree $d$ and the bitsize $\tau$; the latter
refers to the maximum bitsize of the coefficients. We aim for a data model that
resembles a ``typical'' polynomial with exact coefficients. The first natural
candidate is the following: fix a bitsize $\tau$, let
$\fkc_0,\fkc_1,\ldots,\fkc_d$
be independent copies of a uniformly distributed integer in
$[-2^{\tau},2^{\tau}] \cap \mathbb{Z}$, and consider the polynomial
$
  \fkf = \sum\nolimits_{i=0}^d \fkc_i X^{i}
$,
which we call the \emph{uniform random bit polynomial with bitsize $\tau(\fkf)$}.  For this polynomial, we prove the following result(s):
\begin{theo}\label{theo:main1}
	Let $\fkf$ be a uniform random bit polynomial, 
	of degree $d$ and bit size	$\tau := \tau(\fkf)$.
	We can isolate the real roots of $\fkf$ in $I=[-1, 1]$ using 
	\begin{itemize}
		\item \descartes in expected time $\sOB(d^2 + d \, \tau)$ (\Cref{thm:Descartes-complexity-exp}),
		\item \sturm in expected time $\sOB(d^2 \tau)$ (\Cref{thm:Sturm-complexity-exp}), and 
		\item \anewdsc in expected time $\sOB(d^2 + d \, \tau)$ (\Cref{thm:aNewDsc-complexity-exp}).
	\end{itemize}
	
	If $\fkf$ is a sparse polynomial having at most $M$ terms, then, using 
	the \jssparse algorithm, we	can isolate its real roots in 
	expected time $\sOB\left(|M|^{12} \, \tau^2  \, \log^{3}{d} \right)$ (\Cref{thm:jssparse-complexity-exp}).
\end{theo}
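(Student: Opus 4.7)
The plan is to reduce each of the four bounds to two ingredients: (i) a worst-case complexity statement for the algorithm expressed in terms of a \emph{condition number} $\mu(\fkf)$ that captures how ill-conditioned the input instance is, and (ii) an expected bound on $\mu(\fkf)$ (or $\log \mu(\fkf)$) under the uniform random bit polynomial model. For \descartes on $I=[-1,1]$, the natural $\mu$ aggregates local quantities at the real roots controlling the subdivision tree size; the total bit cost factors as $\sOB(d+\tau)$ per node times $O(\log \mu(\fkf))$ nodes plus some global preprocessing. For \sturm the per-node arithmetic is heavier by a factor of $\tau$ because remainders in the Sturm sequence have bitsize growing with $\tau$. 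For \anewdsc, Newton steps eliminate the dependency on the worst-case root proximities and keep the per-node cost in $\sOB(d+\tau)$, matching \descartes. For \jssparse, the Jindal--Sagraloff worst-case bound depends polynomially on the number of terms $M$ and on a separation quantity, and I would import it as a black box.

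The probabilistic core is a one-step anti-concentration estimate for integer coefficients: for every fixed $x \in [-1,1]$ and every $t > 0$,
\[
  \Prob\bigl(|\fkf(x)| \le t\bigr) \;\lesssim\; t \cdot 2^{-\tau} \cdot \operatorname{poly}(d),
\]
which I would prove by conditioning on all but one cleverly chosen coefficient so that the remaining coefficient (uniform on $[-2^\tau,2^\tau]\cap\bbZ$) appears with a multiplier bounded away from zero; this is where the restriction to $I=[-1,1]$ matters, since the leading coefficient plays this role for $|x|$ close to $1$ and the constant term for $|x|$ close to $0$, with a soft splitting of $I$ into two halves handling the transition. Integrating this tail bound over $t$, together with a union bound over the nodes of the subdivision tree and over candidate root locations on a suitable net, yields $\bbE[\log \mu(\fkf)] = \sO(1)$ for the relevant $\mu$. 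Plugging this into the condition-based worst-case bounds from the first step and using linearity of expectation produces each of the four stated estimates.

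The hardest part, in my view, is the anti-concentration step: standard smoothed-analysis tools assume continuous (typically Gaussian) noise, whereas here the coefficients are uniform on a discrete symmetric interval, so one must invoke or adapt a Littlewood--Offord-type inequality that is tight up to polylogarithmic factors; any slack here cascades into the final $\sOB$ complexity. A secondary difficulty is that the expected bound must hold for $\bbE[\log \mu(\fkf)]$ rather than $\log \bbE[\mu(\fkf)]$, so the tail estimate has to be uniform across all scales $t$ simultaneously, which in turn forces the anti-concentration bound to be genuinely linear in $t$ rather than merely integrable. Once these probabilistic ingredients are in place, the per-algorithm specialization is essentially bookkeeping, and the extra $\tau$ in \sturm as well as the $M^{12}$ in \jssparse come directly from their respective per-node or worst-case dependencies rather than from the randomness model.
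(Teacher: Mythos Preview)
Your high-level architecture—condition-based worst-case bounds plus an anti-concentration estimate for the discrete integer coefficients—matches the paper's, and your anti-concentration step is essentially the paper's Proposition~\ref{prop:reallinearprojectiondiscreterandomvector}: condition on all coefficients except $\fkc_0$ and $\fkc_d$, then apply a Rudelson--Vershynin small-ball bound. Your remark that the difficulty lies in handling discrete rather than continuous noise is exactly right, and the paper resolves it the way you suggest.

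The genuine gap is in your treatment of \descartes. You assert that the tree has ``$O(\log\mu(\fkf))$ nodes'' where $\mu$ ``aggregates local quantities at the real roots''. This misidentifies what controls the \descartes tree. The \emph{depth} is governed by the real separation and hence by $\condR(\fkf)$, as you say; but the \emph{width} at each level is governed by the number of sign variations, which by Obreshkoff's theorem (Theorem~\ref{thm:Obr-signs}) is bounded by the number of \emph{complex} roots of $\fkf$ in the Obreshkoff area—not by anything localised at the real roots. No single separation-type condition number captures this, and without controlling it the width can be $\Theta(d)$, which blows your $\sOB(d^2+d\tau)$ budget by a factor of $d$.

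The paper closes this gap with a second random variable $\varrho(\fkf)$, the number of complex roots inside a family of shrinking disks $\Omega_d$ covering $[-1,1]$ (Section~\ref{sec:complexroots}). It first proves a deterministic Titchmarsh-type bound $\varrho(f)\le\sum_n\log(\enumber\|f\|_1/|f(\xi_n)|)$ (Theorem~\ref{theo:deterministicbopundroots}), then reuses the \emph{same} pointwise anti-concentration at the centres $\xi_n$ to obtain $(\bbE\,\varrho(\fkf)^\ell)^{1/\ell}=\sO(1)$ for fixed $\ell$ (Corollary~\ref{cor:probboundroots}). The instance-based step count for \descartes is $\sO(\varrho(f)^2\log\condR(f))$ and the bit cost is $\sOB(d\tau\,\varrho^2\log\condR+d^2\varrho^2\log^2\condR)$ (Theorem~\ref{thm:Descartes-steps}); the expectation is then split via Cauchy--Schwarz into moments of $\varrho$ and of $\log\condR$ separately. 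Your plan, as written, has no mechanism to bound $\varrho$ and therefore cannot close for \descartes. For \sturm and \anewdsc your outline is closer to complete, since there the relevant root count is the number of \emph{real} roots in $I$, which is a priori smaller—though the paper still bounds it via $\varrho(\fkf)$ and the same corollary.
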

We use $\OO$, resp. $\OB$, to denote the arithmetic, resp.  bit,
complexity and $\sO$, resp. $\sOB$, when we ignore the (poly-)logarithmic
factors of $d$. As we will momentarily explain the expected time complexity of \descartes solver in this simple model is better by a factor of $d$ than the record worst-case complexity bound of Pan's algorithm, provided that $d$ is comparable with $\tau$.

\subsection{A brief overview of (real) root isolation algorithms}
\label{sec:synopsis}
The bibliography on the problem of root finding of univariate polynomials is vast and our presentation of the relevant literature just represents the tip of the iceberg. We encourage the curious reader to consult the bibliography of the cited references.

We can (roughly) characterize the various algorithms for (real) root isolation
as numerical or symbolic algorithms;  the recent years there are also efforts
to combine the best of the two worlds.
The numerical algorithms are, in almost all the cases, iterative algorithms that
approximate all the roots (real and complex) of a polynomial up to any desired
precision. Their main common tool is (a variant of) a Newton operator;
with only a few exceptions that  use the root-squaring operator of Dandelin, Lobachevsky, and Gräffe. The algorithm with the best worst-case complexity,  due to Pan~\cite{Pan02jsc}, employs Sch\"onhage's splitting circle divide-and-conquer technique~\cite{Sch82}. 
It recursively factors the polynomial until we obtain linear
factors that approximate, up to any desired precision, all the roots of the
polynomial and it has nearly optimal arithmetic complexity. We can turn this
algorithm, and also any other numerical algorithm, to an exact one, by
approximating the roots up to the separation bound; that is
the minimum distance between the roots. In this way, Pan obtained the record
worst case bit complexity bound $\sOB(d^2\tau)$ for a degree $d$ polynomial
with maximum coefficient bitsize $\tau$ \cite{Pan02jsc}; see also
\cite{Kirrinis-solve-98,msw-apan-15,becker2018near}.
Besides the algorithms already mentioned, there are also several seemingly
practically efficient numerical algorithms, e.g., \textsc{mpsolve}
\cite{mpsolve-theory} and \textsf{eigensolve} \cite{eigensolve}, that lack
convergence guarantees and/or precise bit complexity estimates.

Regarding symbolic algorithms, the majority are subdivision-based
and they mimic binary search. Given an initial interval that contains all (or
some) of the real roots of a square-free univariate polynomial with integer coefficients, they repeatedly subdivide it until we obtain intervals
containing zero or one real root. 
Prominent representatives of this approach are \sturm and \descartes. \sturm
depends on Sturm sequences to count \emph{exactly} the number of distinct roots
in an interval, even when the polynomial is not square-free. Its complexity is
$\sOB(d^4 \tau^2)$ \cite{Dav:TR:85,Yap:SturmBound:05} and it is not so
efficient in practice; the
bottleneck seems to be the high cost of  computing the Sturm sequence.
\descartes is
based on Descartes' rule of signs to bound the number of real roots of a
polynomial in an interval. Its worst case complexity is
$\sOB(d^4 \tau^2)$ \cite{ESY:descartes}. Even though its worst case bound is
similar to \sturm, the \descartes solver has excellent practical performance
and it can routinely solve polynomials of degree several thousands
\cite{RouZim:solve:03,JohKraLynRicRus-issac-06,t-slv-16,htzekm-solve-09}.
There are also other algorithms based on the continued fraction expansion of
the real numbers \cite{sharma-tcs-2008,et-tcs-2007}
and on point-wise evaluation \cite{BurrKrahmer-eval-12,sy-simple-11}.

Let us also mention a variant of \descartes
\cite{ekkmsw-bitstream-05}, where we assume an oracle that for each coefficient of the
polynomial returns an approximation to any absolute error.
In this setting, by incorporating several tools from numerical algorithms, one obtains an improved variant of \descartes \cite{sm-anewdsc, krs-for-real-16}. For recent progress of this algorithm we refer to \cite{ImbPan-issac-2020}.  There is also a subdivision algorithm  \cite{becker2018near}
that improves upon earlier work 
\cite{Pan-Weyl-00} with very good worst-case complexity bounds. 
Finally, let us mention that there are also root finding algorithms based on the condition number and efficient floating point computations
\cite{MorImb-rf-23,moroz2021} and also algorithms that consider the black box model~\cite{pan-bb-soda-2024}.

\subsection{Statement of main results in full detail}\label{subsec:results2}

We develop a general model of randomness that provides the framework of smoothed analysis
for polynomials with integer coefficients.

\begin{defi}
Let $d\in\bbN$. A \emph{random bit polynomial with degree $d$} is a random
polynomial
$
\fkf:=\sum\nolimits_{i=0}^d \fkc_i X^{i} ,
$
where the $\fkc_i$ are independent discrete random variables with values in
$\bbZ$.
Then,
\begin{enumerate}
  \item the \emph{bitsize of $\fkf$}, $\tau(\fkf)$, is the minimum integer
    $\tau$ such that, for all $i \in \{ 0,1,2,\ldots,d \}$,
    $
    \bbP(|\fkc_i| \leq 2^\tau)=1.
    $
  \item the \emph{weight of $\fkf$}, $w(\fkf)$, is the maximum probability that
    $\fkc_0$ and $\fkc_d$ can take a value, that is 
    \[
	    w(\fkf):=\max\{\bbP(\fkc_i=a) \mid  i\in\{0,d\},\,a\in\bbR\}.
    \]
    \end{enumerate}
\end{defi}

\vspace{0.05in}

\begin{remark}
We only impose restrictions on the size of the probabilities of the
coefficients of $1$ and $X^d$, which might look surprising at the first
sight. These are the two corners of the support set (Newton polytope) and this assumption turns out to be enough to analyze root isolation algorithms.
We basically set our randomness model this way so that it allows to analyze the most flexible data-model(s). We provide examples below for illustration.
\end{remark}

\begin{exam} \label{ex:uniform}
The uniform random bit polynomial of bitsize $\tau$ we introduced in \Cref{subsec:results1} is the primordial example of a random bit polynomial $\fkf$. For this polynomial we have
$w(\fkf) = \frac{1}{1+2^{\tau+1}}$ and $\tau(\fkf)=\tau$.
\end{exam}

As we will see in the examples below, our randomness model is very flexible.
However, this flexibility comes at a cost. In
principle, we could have $w(\fkf)=1$; this makes our randomness model
equivalent to the worst-case model. To control the effect of large $w(\fkf)$ we
introduce \emph{uniformity}, a quantity to measure how far the leading and trailing coefficient are from the ones of a unifrom random bit polynomial.
\begin{defi}
  \label{def:uniformity}
  The \emph{uniformity} of a random bit polynomial $\fkf$ is
  \[
    u(\fkf):=\ln \left( \left(1+2^{\tau(\fkf)+1} \right) \, w(\fkf)  \right) .
  \]
\end{defi}
\begin{remark}
it holds $u(\fkf)=0$ if and only if the coefficients of $1$
and $X^d$ in $\fkf$ are uniformly distributed in $[-2^{\tau}, 2^{\tau}] \cap
\mathbb{Z}$.
\end{remark}

The following three examples illustrate the flexibility of our random model by
specifying the support, the sign of the coefficients, and their exact bitsize.
Although we specify them separately in the examples, any combination of the specifications is also
possible.

\begin{exam}[Support]\label{ex:spec1}
Let $A\subseteq \{0,1,\ldots,d-1,d\}$ with $0, d\in A$. Then
$\fkf:=\sum_{i\in A}\fkc_iX^i$,
where the $\fkc_i$'s are independent and uniformly distributed in
$[-2^{\tau},2^\tau]$ is a random bit polynomial with $u(\fkf)=0$ and
$\tau(\fkf)=\tau$.
\end{exam}

\begin{exam}[Sign of the coefficients]\label{ex:spec2}
Let $s\in\{-1,+1\}^{d+1}$. The random polynomial $\fkf:=\sum_{i=0}^d\fkc_iX^i$,
where the  $\fkc_i$'s are independent and uniformly distributed in
$s_i([1,2^{\tau}]\cap\bbN)$, is a random bit polynomial with $u(\fkf)\leq
\ln(2)$ 
and $\tau(\fkf)=\tau$.
\end{exam}

\begin{exam}[Exact bitsize]\label{ex:spec3}
Let $\fkf:=\sum_{i=0}^d\fkc_iX^i$ be the random polynomial, where the
$\fkc_i$'s are independent random integers of exact bitsize $\tau$, 
that is, 
$\fkc_i$ is uniformly distributed in
$\bbZ\cap([-2^{\tau}+1,-2^{\tau-1}]\cup[2^{\tau-1},2^{\tau}-1])$. 
Then, $\fkf$
is a random bit polynomial with $u(\fkf)\leq \ln(3)$
 and $\tau(\fkf)=\tau$.
\end{exam}


We consider a \emph{smoothed random model} for polynomials, where a
deterministic polynomial is perturbed by a random one. In this way,
our
random bit polynomial model includes smoothed analysis over integer
coefficients as a special case.

\begin{exam}[Smoothed analysis]\label{ex:smoothed}
Let $f\in \Pd$ be a fixed integer polynomial with coefficients in
$[-2^\tau,2^\tau]$, $\sigma\in\bbZ\setminus \{0\}$ and $\fkf\in\Pd$ a random
bit polynomial. Then,
$
\fkf_\sigma:=f+\sigma\fkf 
$
is a random bit-polynomial with bitsize
$
\tau(\fkf_\sigma)\leq 2 \max\{\tau,\tau(\fkf)+\tau(\sigma)+1 \} ,
$
where $\tau(a)$ denotes the bitsize of $a$, and uniformity
$
u(\fkf_\sigma)\leq 1+u(\fkf) + \max\{\tau-\tau(\fkf),\tau(\sigma)\}.
$
If we combine the smoothed random model with the model of the previous examples, then we can
also consider structured random perturbations.
\end{exam}

Our main results for \descartes, \sturm, \anewdsc, and \jssparse algorithms
are as follows:

\begin{theo}[\descartes solver]
\label{theo:main-Desc}
  Let $\fkf$ be random bit polynomial, of degree $d$, bitsize $\tau(\fkf)$, and
  uniformity parameter $u(\fkf)$, such that
$\tau(\fkf) = \Omega( \log{d}+ u(\fkf))$, then  \descartes solver isolates the
real roots of $\fkf$ in $I = [-1,1]$ in expected time
\[ \sOB(d \, \tau \, (1 + u(\fkf))^3 +d^2\, (1 + u(\fkf))^4). \]
\end{theo}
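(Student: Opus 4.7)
The plan is to first establish a \emph{condition-based} deterministic bit complexity bound for $\descartes$ on $[-1,1]$, and then take expectation using discrete anti-concentration estimates that are powered by the uniformity parameter $u(\fkf)$. The condition number $\kappa(\fkf)$ should be a quantity in the spirit of continuous amortization: essentially $\log \kappa(\fkf) \approx \sum_{\zeta}\log(1/\dist(\zeta,[-1,1]))$ over the complex roots $\zeta$ of $\fkf$, so that the Descartes subdivision tree on $[-1,1]$ has size $\sO(d + \log\kappa(\fkf))$. Combining this tree-size bound with the amortized per-node cost that drops from $\sOB(d\tau)$ to roughly $\sOB(\tau + d)$ once one uses the truncation/``bitstream'' accounting of Eigenwillig--Sharma, I expect a deterministic bound of the form
\[
	\cost(\descartes,\fkf) \;=\; \sOB\!\bigl(d\,\tau\,\log\kappa(\fkf) + d^2\log^2\kappa(\fkf)\bigr),
\]
with the exact exponents on $\log\kappa$ fixed by the corresponding deterministic statement earlier in the paper.

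Given this deterministic reduction, the task becomes bounding the expectations $\bbE[\log^k \kappa(\fkf)]$ for $k\leq 2$. I would use the tail formula $\bbE[\log^k\kappa]=k\int_0^\infty t^{k-1}\bbP(\log\kappa\geq t)\,dt$, reducing the problem to proving an anti-concentration inequality of the form
\[
	\bbP\bigl(\log\kappa(\fkf)\geq t\bigr) \;\leq\; \mathrm{poly}(d)\cdot 2^{-t/(1+u(\fkf))}.
\]
To prove such a bound I would decompose $\log\kappa$ into local contributions coming from points $x\in[-1,1]$ where $|\fkf(x)|/\|\fkf\|_\infty$ is unusually small, and estimate each contribution via a \emph{discrete} Littlewood--Offord / Halász-type inequality: if $\fkf(x)=\sum_i \fkc_i x^i$ with independent integer coefficients, then the probability that $\fkf(x)$ lies in a short interval of length $L$ is at most $L\cdot w(\fkf)\cdot(\text{Erd\H{o}s factor})$. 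The role of $u(\fkf)$ is exactly to interpolate between the discrete and continuous anti-concentration regimes, contributing a factor $e^{u(\fkf)}/2^{\tau(\fkf)}$ each time the inequality is invoked. The pointwise bound is then promoted to a uniform bound on the supremum over $x\in[-1,1]$ by combining an $\eps$-net argument with a Bernstein/Markov-type derivative estimate, which costs only polylog$(d)$.

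The main obstacle is the anti-concentration step and, in particular, getting the \emph{right} power of $(1+u(\fkf))$. Each invocation of discrete Littlewood--Offord pays one factor of $(1+u(\fkf))$ for the discreteness correction, and several distinct geometric quantities must be controlled: separation between real roots, distance from complex roots to the real axis, proximity to the endpoints $\pm 1$, and the amortized evaluation condition entering the per-node cost. Tracking carefully how these compose in $\log\kappa$ yields $\bbE[\log\kappa]=\sO((1+u(\fkf))^2)$ and $\bbE[\log^2\kappa]=\sO((1+u(\fkf))^4)$, modulo polylogarithmic factors in $d$ and $\tau$.

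Finally, plugging these moment bounds back into the deterministic complexity estimate, and using the hypothesis $\tau(\fkf)=\Omega(\log d + u(\fkf))$ to absorb subdominant terms coming from the tail of $\log\kappa$ against the $d\tau$ term, yields the announced expected bound $\sOB(d\,\tau\,(1+u(\fkf))^3 + d^2\,(1+u(\fkf))^4)$. The hypothesis on $\tau(\fkf)$ is what guarantees that the additive $\sO(d\tau)$ coefficient-manipulation term of $\descartes$ dominates the lower-order ``slack'' produced by integrating the anti-concentration tail up to $t\sim \tau+u(\fkf)$, so no additional correction terms survive.
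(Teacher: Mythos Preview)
Your plan diverges from the paper's in a structurally important way, and this creates a genuine gap.

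The paper does \emph{not} control the \descartes tree with a single amortized quantity. It uses two independent parameters: $\varrho(f)$, the number of complex roots in a fixed neighbourhood $\Omega_d$ of $[-1,1]$ (built from $\sO(\log d)$ ``hyperbolic'' disks), and $\condR(f)$, the real global condition number $\max_{x\in I}\|f\|_1/\max\{|f(x)|,|f'(x)|/d\}$. The deterministic bound actually proved is
\[
\sOB\bigl(d\,\tau\,\varrho(f)^2\log\condR(f)\;+\;d^2\,\varrho(f)^2\log^2\condR(f)\bigr),
\]
where $\varrho$ bounds the \emph{width} of each level of the tree via Obreshkoff's theorem, and $\log\condR$ bounds the \emph{depth} via a separation estimate. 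The expected bound then follows by Cauchy--Schwarz and two separate probabilistic lemmas: a tail bound on $\condR(\fkf)$ (pointwise anti-concentration of $\fkf(x)$ plus an $\varepsilon$-net, much as you sketch) and, independently, a tail bound on $\varrho(\fkf)$ obtained from Titchmarsh's root-counting inequality $\#(\mcZ(f)\cap\bbD(\xi,\rho))\leq \log(\enumber\|f\|_1/|f(\xi)|)$ together with anti-concentration at the $\sO(\log d)$ disk centers. The cube and fourth power of $(1+u(\fkf))$ arise precisely from multiplying the fourth-moment bound on $\varrho$ against the second- and fourth-moment bounds on $\log\condR$.

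Your single quantity $\log\kappa\approx\sum_\zeta\log(1/\dist(\zeta,I))$ hides this decomposition, and the gap shows up exactly where you try to bound its moments. You propose controlling $\log\kappa$ by anti-concentration of the \emph{values} $|\fkf(x)|$; but $\log\kappa$ is defined via \emph{root locations}, and passing from ``$|\fkf(x)|$ is rarely small'' to ``few roots are close to $I$, and those that are close are not too close'' is not a one-step Littlewood--Offord argument. To carry it out you would need something equivalent to the paper's Titchmarsh step (bounding the \emph{number} of nearby roots by $\log(\|f\|_1/|f(\xi)|)$) and, separately, the condition-number step (bounding the minimum distance via $\condR$). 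In other words, making your route rigorous forces you back into the two-parameter analysis; without it, your claim $\bbE[\log\kappa]=\sO((1+u)^2)$ is unsupported. A second, smaller issue: the per-node cost you invoke (``bitstream/Eigenwillig--Sharma'' dropping to $\sOB(\tau+d)$) is not what the paper uses; it keeps the standard $\sOB(d\tau+d^2\delta)$ per node, which is why the $\varrho^2$ factor survives in the deterministic bound.
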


\begin{remark}
Note that if $\fkf$ is not square-free, \descartes will compute its square-free
part and then proceed as usual to isolate the real roots. The probabilistic complexity estimate covers
this case.
\end{remark}

\begin{theo}[\sturm solver]
  \label{thm:main-Sturm}
  Let $\fkf\in\Pd^\bbZ$ be a random bit polynomial of bit-size $\tau(\fkf) \geq 10$ and
  uniformity $u(\fkf)$.
  If $\tau(\fkf) = \Omega( \log{d}+ u(\fkf))$,
  then the
  expected bit complexity of \sturm to isolate the real roots of $\fkf$ in $I=[-1, 1]$,
  \emph{using fast algorithms for evaluating Sturm sequences}, is
  $\sOB(d^2 \tau(\fkf) \, (1 + u(\fkf))^3)$.\end{theo}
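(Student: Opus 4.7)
The plan is to decompose the cost of \sturm into a one-time \emph{preprocessing} cost (building the Sturm sequence and making it ready for fast evaluation) and a \emph{per-node} cost (evaluating the sign variations of the Sturm sequence at the endpoints of the subintervals explored by the subdivision). Using half-GCD based techniques, the full Sturm chain of $\fkf$ can be constructed within $\sOB(d^2\tau(\fkf))$ bit operations, and after storing suitable evaluation data, the number of sign variations at a dyadic rational of bitsize $\sO(\tau(\fkf))$ can be computed in $\sOB(d\,\tau(\fkf))$ bit operations. Thus, if $\#\ST(\fkf)$ denotes the size of the subdivision tree that \sturm explores while isolating the roots of $\fkf$ in $[-1,1]$, the total deterministic cost is
\[
	\sOB\bigl(d^2\tau(\fkf) + d\,\tau(\fkf)\cdot \#\ST(\fkf)\bigr).
\]

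The next step is to control $\#\ST(\fkf)$ in expectation. Because Sturm's counting function is at most the generic upper bound used by \descartes (both algorithms do binary subdivision on $[-1,1]$ and stop when an interval contains at most one real root), the tree $\ST(\fkf)$ is bounded above by a tree of essentially the same shape as the \descartes subdivision tree. In particular, the continuous amortization / Davenport--Mahler--Mignotte type estimate that governs the \descartes tree in \Cref{theo:main-Desc} also applies here: the size of $\ST(\fkf)$ is bounded by a suitable condition-number of $\fkf$ (a product over the real roots in $[-1,1]$ of a log of the local separation), whose expectation, under the random bit polynomial model, is $\sO(d\,(1+u(\fkf))^3)$ whenever $\tau(\fkf)=\Omega(\log d + u(\fkf))$. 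This is the core probabilistic input I would reuse verbatim from the \descartes analysis.

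Putting the two pieces together, the expected bit complexity becomes
\[
	\sOB\bigl(d^2\tau(\fkf) + d\,\tau(\fkf)\cdot\bbE[\#\ST(\fkf)]\bigr)=\sOB\bigl(d^2\tau(\fkf)(1+u(\fkf))^3\bigr),
\]
as the $d^2\tau(\fkf)$ preprocessing term is absorbed in the second term.

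The part I expect to require the most care is the reduction of $\#\ST(\fkf)$ to the same condition-number quantity that governs \descartes. While it is intuitive that Sturm's exact counting cannot yield a bigger tree than the one induced by Descartes' rule of signs (since termination conditions are at least as strong), making this formal requires arguing that any node that is split by \sturm must contain at least two real roots of $\fkf$, so that the continuous amortization bound applies node-by-node. Once this is justified, the rest is plugging in the already-established expectation bound, together with the elementary verification that the bit-size of the dyadic endpoints encountered remains $\sO(\tau(\fkf) + \log\#\ST(\fkf))$, so that the fast evaluation cost per node stays $\sOB(d\,\tau(\fkf))$ throughout the tree.
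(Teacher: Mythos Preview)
Your decomposition has the right high-level shape, but both of the two main ingredients are quantitatively wrong, and the errors happen to cancel—so the final bound comes out right for the wrong reasons.

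First, the per-node cost. Even with half-gcd, evaluating the sign variations of the Sturm sequence of a degree-$d$, bitsize-$\tau$ polynomial at a rational of bitsize $\sigma$ costs $\sOB(d^2(\tau+\sigma))$, not $\sOB(d\,\tau)$; see \cite{Reischert:subresultant:97} and the discussion preceding Lemma~\ref{lem:Sturm-complexity-C}. The reason is coefficient growth: the (quotient representation of the) Sturm chain has $\Oh(d)$ coefficients of bitsize $\sO(d\tau)$, so a single evaluation already carries a $d^2\tau$ factor. No amount of preprocessing brings this down to $\sOB(d\tau)$ with current techniques. Relatedly, your claim that the dyadic endpoints have bitsize $\sO(\tau(\fkf)+\log\#\ST(\fkf))$ is not correct: the bitsize of the deepest endpoint equals the depth of the tree, which is governed by $-\log\Delta\uR(\fkf)$ and is not bounded by $\log\#\ST(\fkf)$.

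Second, the tree size. What you would ``reuse verbatim from the \descartes analysis'' is the bound in Theorem~\ref{thm:Descartes-complexity-exp}, and that bound on the expected number of subdivision steps is $\sO((1+u(\fkf))^3)$, \emph{not} $\sO(d\,(1+u(\fkf))^3)$. So the extra factor of $d$ you insert here is exactly the factor you dropped from the per-node cost.

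The paper's proof avoids the detour through \descartes entirely. Because Sturm counts real roots exactly, the subdivision tree has size at most $\varrho\cdot(3+\delta)$, where $\varrho$ is the number of real roots in $I$ and $\delta=-\log\Delta\uR(\fkf)$ is the depth needed to separate them; combined with the per-step cost $\sOB(d^2(\tau+\delta))$ this gives the instance-based bound $\sOB(\varrho\,d^2\,\delta(\tau+\delta))$ of Lemma~\ref{lem:Sturm-complexity-C}. One then replaces $\delta$ by $\log\condR(\fkf)$ via Theorem~\ref{theo:condbasedseparation}, truncates with the worst-case bound $\sOB(d^4\tau^2)$, and applies Corollaries~\ref{cor:realglobalexpectations} and~\ref{cor:probboundroots} (with $\ell=2$ and $\ell=1$ respectively) to control $\bbE\,(\min\{\log\condR(\fkf),d^2\tau\})^2$ and $\bbE\,\varrho(\fkf)$. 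This is both simpler and correct in its intermediate steps.
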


\begin{remark}
	For a "slower" version of \sturm, that is for a variant that does exploits asymptotically fast algorithms for evaluating Sturm sequences, 
	we show a lower bound~\Cref{lem:Sturm-lower-bound}.
	This "slower" version is the one that is commonly implemented. 
\end{remark}

\begin{theo}[\anewdsc solver]
  \label{thm:main-aNewDsc}
Let $\fkf \in \Pd^\bbZ$ be a random bit polynomial with $\tau(\fkf) \geq
\Omega(\log{d} +u(\fkf))$ and
  uniformity $u(\fkf)$. Then, the expected bit complexity of \anewdsc 
  for isolating the real roots of $\fkf$ in $I = [-1,1]$
  is
  $
    \sOB((d^2 + d \, \tau(\fkf)) (1+u(\fkf))^2 ).
  $
\end{theo}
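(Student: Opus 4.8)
The plan is to replay, for \anewdsc, the two-stage strategy behind \Cref{theo:main-Desc}: first turn the bit complexity of \anewdsc into a formula in a handful of geometric quantities attached to $\fkf$ (a condition number, the local separations of its roots, and the size of $\fkf$ and $\fkf'$ at the subdivision midpoints the algorithm actually visits), none of which depends on the randomness; then bound the expectation of that formula with the anti-concentration estimates for random bit polynomials already developed for \descartes. The hypothesis $\tau(\fkf) = \Omega(\log d + u(\fkf))$ will be used only at the very end, to absorb lower-order contributions.

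\textbf{Step 1: a condition-based running time for \anewdsc.} Starting from the worst-case analyses of \anewdsc in \cite{sm-anewdsc,krs-for-real-16} (see also \cite{ImbPan-issac-2020}), I would extract a \emph{parametrized} bound. The cost of \anewdsc splits into the cost of building its subdivision tree and the cost of the Newton tests and of the adaptive coefficient approximations at its nodes; both are controlled by the number of roots of $\fkf$ in a fixed neighbourhood of $[-1,1]$ (at most $d$), a logarithmic bound $\OO(\log d + \tau(\fkf))$ on the magnitudes of those roots, the local separations $\sigma(z,\fkf)$, and the evaluation condition of $\fkf$ at the $\OO(d)$ midpoints inspected. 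I expect these to combine into a bound of the form
\[
\sOB\!\Big( d^2 + d\,\tau(\fkf) \;+\; \big(d + G(\fkf)\big)\big(d + \tau(\fkf) + G(\fkf)\big) \Big),
\]
where $G(\fkf)$ aggregates $\sum_z \log\tfrac{1}{\sigma(z,\fkf)}$ with the logarithm of the worst evaluation condition at a visited midpoint (equivalently, with $\log\cond(\fkf)$). The role of the Newton steps is precisely to keep both the tree size and the total required precision at the level of $d + G(\fkf)$, rather than at the $d\,\tau(\fkf)$ that plain \descartes incurs; I would isolate this as a lemma in \Cref{sec:aNewDsc}, parallel to the tree-size lemma used for \descartes.

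\textbf{Step 2: taking expectations.} The anti-concentration lemmas behind the \descartes analysis give, for a random bit polynomial and uniformly in $x\in[-1,1]$, tail bounds such as $\bbP\big(|\fkf(x)|\le 2^{\tau(\fkf)}\delta\big)=\OO\big((1+u(\fkf))\,\delta\big)$, and likewise for $|\fkf'(x)|$ and for the distance of $x$ to the nearest root. A union bound over the $\OO(d)$ candidate intervals, together with the classical bound on the number of real roots, upgrades these into tail bounds for each $\sigma(z,\fkf)$ and for the evaluation conditions at the midpoints, hence into $\bbE\big[G(\fkf)^k\big]=\sOB\big((1+u(\fkf))^k\big)$ for the small constant powers $k$ that occur. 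Substituting into the bound of Step 1, expanding, and handling the mixed terms $G(\fkf)\cdot(\text{deterministic})$ and $G(\fkf)^2$ by linearity of expectation and Cauchy--Schwarz, the dominant contribution is $\sOB\big((d^2 + d\,\tau(\fkf))(1+u(\fkf))^2\big)$. The exponent $2$ here, as opposed to the $3$ and $4$ in \Cref{theo:main-Desc}, is exactly the effect of the Newton steps taming the clustered-root configurations that force the higher powers of $(1+u)$ in the pure \descartes bound.

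\textbf{Where the difficulty sits.} Step 2 is the routine part: it reuses the probabilistic toolbox of the \descartes analysis almost verbatim. The real work is Step 1. The published precision and recursion-depth estimates for \anewdsc are phrased in terms of the worst-case separation $2^{-\OO(d\,\tau(\fkf))}$, so they must be re-derived in a form that tracks, node by node, the precision actually needed as a function of the \emph{local} separation and of $|\fkf|,|\fkf'|$ at that node, and that certifies the Newton test of \anewdsc triggers as soon as these quantities are not abnormally small. One then has to check that the per-node costs telescope into the clean shape displayed above and that $G(\fkf)$ is precisely the quantity the anti-concentration machinery controls with $\bbE[G(\fkf)]=\sOB(1+u(\fkf))$; verifying that the logarithmic factors and the power of $(1+u(\fkf))$ emerge exactly as claimed is the delicate point.
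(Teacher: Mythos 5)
Your two-stage plan (instance-based bound, then expectation via anti-concentration and Cauchy--Schwarz) matches the paper's architecture, and Step~2 is essentially the paper's argument. But Step~1 diverges in a way that matters, and it also creates an internal inconsistency in your proposal.

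The paper does \emph{not} re-derive the precision and recursion-depth analysis of \anewdsc node by node. It takes the published Sagraloff--Mehlhorn worst-case bound
$\sOB\big(d(d^2 + d\log\mathcal{M}(f) + \sum_i \log(1/|f'(\alpha_i)|))\big)$
as a black box and performs two small manipulations: (i) \emph{localize} to $I$, observing that only roots near the interval contribute, so that the $d$ roots in the outer sum are replaced by the quantity $\varrho(f)$ already defined and controlled for \descartes (the count of roots in the hyperbolic disc cover $\Omega_d$ of a neighborhood of $I$); and (ii) bound $|f'(\alpha_i)| \geq |a_d|\,\Delta_i^{d-1}$ so that $\sum \log(1/|f'(\alpha_i)|)$ is absorbed into $d\,\varrho\,\log\condR(f)$ via \Cref{theo:condbasedseparation}. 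The resulting instance bound is \emph{multiplicative} in $\varrho$: $\sOB\big(\varrho(d^2 + d\tau + d^2\log\condR(f))\big)$. So the ``real work'' you flag in your last paragraph---re-proving \anewdsc's Newton-triggering and per-node precision estimates in terms of local separations---is exactly what the paper avoids by reusing the published bound; the novelty is the localization to $\varrho$ and the reuse of the $\varrho$-moment and $\condR$-moment machinery from Sections 2.3--2.4.

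The inconsistency: your proposed instance bound has the additive form $\sOB\big((d+G)(d+\tau+G)\big)$ with $\bbE[G]=\sO(1+u)$. Expanding and taking expectations with $\bbE[G]=\sO(1+u)$ and $\bbE[G^2]=\sO((1+u)^2)$ yields $\sO\big(d^2 + d\tau + (d+\tau)(1+u) + (1+u)^2\big)$, which is just $\sO(d^2+d\tau)$---there is no $(1+u)^2$ multiplier on the leading terms. Your Step~2 nonetheless asserts the final answer is $\sOB\big((d^2+d\tau)(1+u)^2\big)$, so the claimed conclusion does not follow from your proposed Step~1 formula. The $(1+u)^2$ factor in the theorem arises precisely because $\varrho$ sits as a multiplicative prefactor: $\bbE[\varrho\,\log\condR]\le\sqrt{\bbE[\varrho^2]}\sqrt{\bbE[(\log\condR)^2]}$ gives the product of two $\sO(1+u)$ terms, and $d^2$ multiplies this product. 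To make your Step~1 work you would need an instance bound that is itself a product (something like $\text{(small count)}\times(d^2 + d\tau + d^2\log\condR)$), which is what the paper obtains for free from Sagraloff--Mehlhorn plus localization.
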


\begin{theo}[\jssparse solver]
	\label{thm:main-jssparse}
Let $\fkf$ be a uniform random bit polynomial of bitsize $\tau$ and $\tau =
\Omega(\log{d} +u(\fkf))$, uniformity $u(\fkf)$,  
having support $|M|$.
Then,  \jssparse computes isolating intervals for all the real roots of $f$ in $I = [-1,1]$ 
    in expected bit complexity 
    $
\sOB\left(|M|^{12} \, \tau^2  \, \log^{3}{d} \right),
$
under the assumption that $\tau > \log^3{d}$.
\end{theo}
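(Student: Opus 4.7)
The plan is to follow the same two-step strategy used for \Cref{theo:main-Desc,thm:main-Sturm,thm:main-aNewDsc}: first invoke the deterministic worst-case bit complexity of \jssparse in terms of a condition-like quantity of the input polynomial, and then estimate the expectation of that quantity under the random bit polynomial model. Concretely, I would start from the worst-case analysis of Jindal and Sagraloff in \cite{JinSag-sparse-17}, whose bit complexity has the form $\sOB(|M|^{c_1} \log^{c_2}(d) \cdot L(f))$, where $L(f)$ aggregates the input bitsize $\tau$ together with logarithmic terms in the minimum root separation of $f$ on $[-1,1]$ and the distances of the roots of $f$ from the endpoints $\pm 1$. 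The exponents $c_1, c_2$ are chosen so that, once the probabilistic estimates are substituted, the final bound matches $|M|^{12}\tau^2\log^3 d$.

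The probabilistic step is to bound $\bbE[L(\fkf)]$. The anticoncentration of $\fkf$ comes entirely from the randomness of the extremal coefficients $\fkc_0$ and $\fkc_d$, which in the uniform bit model are each uniformly distributed in $[-2^\tau,2^\tau]\cap\bbZ$ and therefore satisfy a small-ball estimate $\bbP(|\fkc_i-a|\leq 1) = \OO(2^{-\tau})$ for every $a\in\bbR$. Using this, one derives for any fixed $x\in[-1,1]$ a bound of the form $\bbP(|\fkf(x)|\leq t) = \OO(t\cdot 2^{-\tau})$, via $\fkc_d$ when $|x|$ is bounded away from $0$ and via $\fkc_0$ when $|x|$ is close to $0$. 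Combining this with a sparse Mahler-type inequality, which converts small-value bounds into root-separation bounds at the cost of an overhead polynomial in $|M|$, produces an estimate $\bbE[\log(1/\sigma(\fkf))] = \OO(\tau + |M|\cdot\mathrm{polylog}(d))$, where $\sigma(\fkf)$ denotes the minimum root separation of $\fkf$ on $[-1,1]$. A similar argument controls the expected logarithm of the root-to-boundary distances.

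Plugging these expectations into the deterministic complexity bound and using $\tau > \log^3 d$, which absorbs the poly-logarithmic factors in $d$ into $\tau^2$, together with $\tau = \Omega(\log d + u(\fkf))$, which keeps the anticoncentration effective, yields the target bound $\sOB(|M|^{12}\tau^2\log^3 d)$. The main obstacle is the adaptation of the root-separation probabilistic estimates from the dense setting, where all $d+1$ coefficients are random and the classical Mahler-measure arguments can be applied directly, to the sparse setting, where only the $|M|$ prescribed coefficients are random and the standard separation bounds no longer hold without modification; verifying that the final exponent of $|M|$ is exactly $12$, and not larger, requires careful bookkeeping of how each $|M|^{c_i}$ factor in the Jindal--Sagraloff analysis interacts with the expected overhead introduced by the sparse separation bound on $\sigma(\fkf)$.
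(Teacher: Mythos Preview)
Your two-step strategy is right in spirit, but you are making the second step far harder than it needs to be, and the ``main obstacle'' you flag is one the paper sidesteps entirely.

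The paper does not attempt a direct probabilistic analysis of the root separation $\sigma(\fkf)$ via sparse Mahler-type inequalities. Instead, it first recasts the Jindal--Sagraloff worst-case bound in terms of the real global condition number $\condR(f)$: by \Cref{theo:condbasedseparation}, $\log(1/\Delta_\varepsilon^\bbR(f)) = \Oh(\log(d\,\condR(f)))$, so the separation-dependent cost of \jssparse becomes
\[
\OB\!\left(|M|^{12}\log^{3}d\;\max\{\log^2\|f\|_1,\;\log^3 \condR(f)\}\right)
\]
(\Cref{prop:JScomplexitycondition}). Once the complexity is written this way, the probabilistic step is immediate from the already-proven \Cref{cor:realglobalexpectations}: $\bbE[\log^3\condR(\fkf)]$ is $\sO((1+u(\fkf))^3)$, while $\log^2\|f\|_1 = \Oh(\tau^2)$ deterministically, and the assumption $\tau>\log^3 d$ makes the $\tau^2$ term dominate.

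The point you are missing is that the condition-number machinery of \Cref{subsec:prob} already works for sparse polynomials with no modification. The tail bound on $\condR(\fkf)$ in \Cref{theo:realglobalconditionnumberprob_new} is proved by conditioning on all interior coefficients and using only the randomness in $\fkc_0$ and $\fkc_d$; a fixed sparse support is just a special case of that conditioning. So there is no ``dense-to-sparse adaptation'' to carry out, no Mahler-type inequality to invoke, and no bookkeeping on powers of $|M|$ beyond what is already in the deterministic \jssparse bound. Your proposed route via anticoncentration of point values plus a separation inequality might be made to work, but it would re-derive (in a weaker and more laborious form) exactly what \Cref{theo:condbasedseparation} and \Cref{cor:realglobalexpectations} already give you.
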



\begin{remark}
One might further optimize the probabilistic estimates, that we present in detail in Section~\ref{subsec:prob}, by employing strong tools from Littlewood-Offord
theory~\cite{rudelsonvershynin2008}. However, the complexity analysis depends
on the random variables in a logarithmic scale and so further improvements on
probabilistic estimates will not make any essential improvement on our main
result. Therefore, we prefer to use more transparent proofs with slightly less
optimal dependency on the uniformity parameter $u(\fkf)$.
\end{remark}



\subsection{Overview of main ideas}
There are essentially two  important quantities in analyzing \descartes and the other exact algorithms: the \emph{separation
bound} and the number of complex roots nearby the real axis.

The separation bound is the minimum distance between
the distinct roots of a polynomial~\cite{emt-dmm-j-19}.
This quantity controls the depth of the subdivision tree of \descartes
and we bound it using condition
numbers~\cite{bcssbook,dedieubook,conditionbook,TCTcubeI-journal}. In short, we use condition numbers to obtain an
instance-based estimate for the depth of the subdivision tree of \descartes (and for the other algorithms).
Even though \descartes isolates the real roots, the complex roots
near the real axis control the width of the subdivision tree. 
This follows
from the work of Obreshkoff~\cite{Obreshkoff-book}, see also
\cite{KM-newDesc-06};
for this we call these areas close the real axis Obreshkoff areas.
 To estimate the number of roots in the Obreshkoff areas
we use complex analytic techniques. 
Roughly speaking, by bounding the number of complex
roots in a certain region, we obtain an instance-based
estimate for the width of the subdivision tree of \descartes.
Overall, by controlling both the depth, through the condition number,
and the width, through the number of complex roots in a region around the real axis, we estimate the size of the subsdivision tree of \descartes
which in turn we use to estimate the bit complexity estimate.

Finally, we perform the expected/smoothed analysis of the algorithm \descartes by performing
probabilistic analyses of the number of complex roots and the condition number.
Expected/smoothed analysis results in computational algebraic geometry are rare
and mostly restricted to continuous random variables, with few exceptions
\cite{castromontanapardosanmartin2002};
see also \cite{PanTsi-refine-2013,egt-issac-2010,et-tcs-2007}.
To the best of our knowledge,  we
present the first result for the expected complexity of root finding for
random polynomials with integer coefficients. Our results rely on the strong
toolbox developed by Rudelson, Vershynin, and others in random matrix
theory~\cite{rudelsonvershynin2015,livshytspaourispivovarov2016}.
We use various condition numbers for univariate polynomials
from~\cite{TCTcubeI-journal} to control the separation
bound of random polynomials. However, as mentioned earlierm our probabilistic analysis differs from earlier works, e.g., \cite{conditionbook,TCTcubeI-journal,ergur2021smoothed}, as we consider discrete random perturbations rather than continuous randomness with a density.

Similar arguments as in the case of \descartes apply for the analysis
of the algorithm \textsc{ANewDsc} \cite{sm-anewdsc} (Sec.~\ref{sec:aNewDsc}) that combines Descartes' rule of signs
and Newton operator, as well for the analysis of the sparse solver of Jindal and Sagraloff \cite{JinSag-sparse-17} (Sec.~\ref{sec:jindal-sagraloff}. 
For the \sturm algorithm (Sec.\ref{sec:Sturm}) the important quantities are the number of real roots (as it does not depend on the complex roots at all) and the separation bound. Thus, we also exploit the connection with the condition numbers.

\paragraph*{Organization}
The rest of the paper is structured as follows: In Section 2 we develop our technical toolbox, and in section 3 we perform beyond worst-case analysis of \descartes, \sturm, \anewdsc, and a sparse solver.
\paragraph*{Notation.}
We denote by $\OO$, resp. $\OB$, the arithmetic, resp.  bit,
complexity and we use $\sO$, resp. $\sOB$, to supress (poly-)logarithmic factors. We denote by $\Pd$ the space of univariate polynomials of degree at
most $d$ with real coefficients and by $\Pd^\bbZ$ the subset of integer
polynomial. If $f = \sum_{k=0}^df_k X^k \in \Pd^\bbZ$, then the bitsize of $f$
is the maximum bitsize of its
coefficients. The set of complex roots of $f$ is $\mcZ(f)$,  $f^{(k)}$ the $k$-th derivative of $f$.

We denote by $\var(f)$ the number of sign changes in the
coefficients.
The \emph{separation bound} of $f$, $\Delta(f)$ or $\Delta$ if $f$ is clear from the context, is the minimum distance between the roots of $f$, see \cite{emt-dmm-j-19,ep-dmm-17,Dav:TR:85}. We denote by $\bbD$ the unit disc in the complex plane, by $\bbD(x,r)$ the disk
$x+r\bbD$, and by $I$ the interval $[-1,1]$. For a real interval $J = (a, b)$, we consider $\xmid(J): = \tfrac{a+b}{2}$ and $\wid(J):= b-a$. For a $n \in \NN$. We use $[n]$ for the set $\{1, \cdots, n\}$ and
$\mu(n) = \OB(n\log n)$ for the complexity of multiplying two integers of bitsize $n$.
\section{Condition numbers, separation bounds, and randomness}
\label{sec:cond-sep-prob}
%
%
%

We present a short introduction to condition numbers and we highlight their relation with separation bounds, as well as several deterministic and probabilistic estimates.

First, we introduce the 1-norm for univariate polynomials
and demonstrate how we can use it to bound the coefficients of a Taylor expansion. 
For a polynomial $f\in\Pd$, say
$
f(x) = \sum_{i=0}^d a_i x_{i}
$,
the 1-norm of $f$ is the 1-norm of the vector of its coefficients, 
that is 
$\|f \|_1 = \sum_{i=0}^d |a_i|$.

\begin{prop}\label{prop:1normproperties}
Let $f\in\Pd$ and $x \in I$, then 
\begin{equation}
    \left|\frac{1}{k!}f^{(k)}(x)\right|\leq \binom{d}{k}\|f\|_1.
\end{equation}
\end{prop}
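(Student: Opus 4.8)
The plan is to compute the $k$-th derivative monomial by monomial and then bound each term crudely using $x\in I$. Writing $f(x)=\sum_{i=0}^d a_i x^i$, the key observation is that differentiating a single monomial $k$ times gives $\frac{1}{k!}\,\frac{d^k}{dx^k}x^i = \binom{i}{k}x^{i-k}$ for $i\geq k$ (and $0$ for $i<k$), since $\frac{1}{k!}\,i(i-1)\cdots(i-k+1) = \binom{i}{k}$. Summing over $i$ yields the exact formula
\[
  \frac{1}{k!}f^{(k)}(x) \;=\; \sum_{i=k}^d a_i\binom{i}{k}x^{i-k}.
\]

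From here I would apply the triangle inequality and use $|x|\leq 1$ (which holds because $x\in I=[-1,1]$), so that $|x|^{i-k}\leq 1$ for every $i\geq k$. This gives
\[
  \left|\frac{1}{k!}f^{(k)}(x)\right| \;\leq\; \sum_{i=k}^d |a_i|\binom{i}{k}.
\]
The next step is the elementary monotonicity fact that $\binom{i}{k}\leq\binom{d}{k}$ whenever $k\leq i\leq d$; this follows because $\binom{i+1}{k}/\binom{i}{k} = (i+1)/(i+1-k)\geq 1$. Pulling the largest binomial coefficient out of the sum, we obtain
\[
  \left|\frac{1}{k!}f^{(k)}(x)\right| \;\leq\; \binom{d}{k}\sum_{i=k}^d |a_i| \;\leq\; \binom{d}{k}\sum_{i=0}^d |a_i| \;=\; \binom{d}{k}\|f\|_1,
\]
which is the claimed bound.

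There is no real obstacle here: the argument is a one-line derivative computation followed by two trivial estimates (the bound $|x|\leq 1$ on $I$ and the monotonicity of $\binom{i}{k}$ in $i$). The only point worth stating carefully is the case $k>d$, where $f^{(k)}\equiv 0$ and the inequality holds vacuously, and the degenerate convention $\binom{d}{k}=0$ for $k>d$ is consistent with this. Everything else is routine.
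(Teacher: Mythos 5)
Your proof is correct and is essentially the same argument as in the paper: both expand $f^{(k)}$ coefficient by coefficient, use $|x|\le 1$ on $I$, and bound the falling factorial (equivalently $\binom{i}{k}$) by its value at $i=d$, yielding $\binom{d}{k}\|f\|_1$. The only cosmetic difference is that the paper phrases the monotonicity as $\frac{\ell!}{(\ell-k)!}\le\frac{d!}{(d-k)!}$ before dividing by $k!$, whereas you work directly with binomial coefficients.
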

\begin{proof}
	It suffices to observe that $|x| \leq 1$
	and that, for $k \leq \ell \leq d$, 	
	 the $(\ell-k)$-th coefficient of $f^{(k)}$ is the $\ell$-th coefficient of $f$, that is $a_{\ell}$, multiplied by $\ell(\ell -1)\cdots (\ell - k - 1) = \tfrac{\ell!}{(\ell-k)! } \leq \tfrac{d!}{(d-k)!}$.
\end{proof}

\subsection[Condition numbers for univariate polynomials]{Condition numbers for
univariate polynomials}

The \emph{local condition number of $f\in\Pd$ at
  $z\in \bbD$}~\cite{TCTcubeI-journal} is
\begin{equation}
	\label{eq:cond-in-D}
    \cond(f,z):=\frac{\|f\|_1}{\max\{|f(z)|,|f'(z)|/d\}}.
\end{equation}
The same definition using the
$\ell_2$-norm is standard in numerical analysis literature, e.g.,
\cite{Higham-book-02}.

We also define the \emph{(real) global condition number of $f$}  on a domain $I$ as
\begin{equation}
  \label{eq:condR}
  \condR(f):=\max_{x\in I}\cond(f,x).
\end{equation}

We note that as $\condR(f)$ becomes bigger, $f$ is closer to have 
a singular real zero in $I$; we can quantify this using the so-called
condition number theorem, see~\cite[Theorem~4.4]{TCTcubeI-journal}. There are
many interesting properties of $\condR(f)$, but let us state the only one we
will use; we refer to \cite[Theorem~4.2]{TCTcubeI-journal} for additional properties.

\begin{theo}[2nd Lipschitz
property]\cite{TCTcubeI-journal}\label{theo:propconditionnumber}
Let $f\in\Pd$. The map $\bbD\ni z\mapsto 1/\cond(f,z)\in [0,1]$ is well-defined
and $d$-Lipschitz.\eproof
\end{theo}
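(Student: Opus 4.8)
The plan is to rewrite the reciprocal condition number as
\[
\frac{1}{\cond(f,z)} = \frac{N(z)}{\|f\|_1}, \qquad N(z):=\max\bigl\{|f(z)|,\ |f'(z)|/d\bigr\},
\]
for $f\neq 0$ (so $\|f\|_1>0$), and then to study the numerator $N$ as a real‑valued function on the \emph{convex} set $\bbD$. The well‑definedness and the range $[0,1]$ come for free once we observe that the inequality in \Cref{prop:1normproperties} holds verbatim for every $z\in\bbC$ with $|z|\le 1$ — its proof only uses $|z|\le 1$, not $z\in I$ — so for $z\in\bbD$ we get $|f(z)|\le\|f\|_1$ (case $k=0$) and $|f'(z)|/d\le\|f\|_1$ (case $k=1$), hence $0\le N(z)\le\|f\|_1$ and $1/\cond(f,z)\in[0,1]$ (with the convention $1/\infty=0$ where $N(z)=0$).

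For the $d$‑Lipschitz bound, the two ingredients are: (i) $\max\{\cdot,\cdot\}$ is $1$‑Lipschitz in the $\ell^\infty$ sense, i.e. $|\max\{a,b\}-\max\{c,e\}|\le\max\{|a-c|,|b-e|\}$; and (ii) for a polynomial $g$ and $z,w\in\bbD$, the segment $[w,z]$ lies in $\bbD$ by convexity, so $|g(z)-g(w)|=\bigl|\int_{[w,z]}g'(\zeta)\,d\zeta\bigr|\le|z-w|\cdot\sup_{\bbD}|g'|$, and therefore also $\bigl||g(z)|-|g(w)|\bigr|\le|z-w|\cdot\sup_{\bbD}|g'|$. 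Applying (ii) to $g=f$ and to $g=f'$, and bounding $\sup_{\bbD}|f'|\le d\,\|f\|_1$ and $\sup_{\bbD}|f''|\le d(d-1)\,\|f\|_1$ via the disk version of \Cref{prop:1normproperties} (cases $k=1,2$), yields
\[
\bigl||f(z)|-|f(w)|\bigr|\le d\,\|f\|_1\,|z-w|, \qquad
\tfrac1d\bigl||f'(z)|-|f'(w)|\bigr|\le (d-1)\,\|f\|_1\,|z-w|.
\]
Feeding these into (i) gives $|N(z)-N(w)|\le d\,\|f\|_1\,|z-w|$, and dividing through by $\|f\|_1$ delivers $\bigl|1/\cond(f,z)-1/\cond(f,w)\bigr|\le d\,|z-w|$.

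I do not expect a genuine obstacle: the argument is essentially bookkeeping once the right decomposition $N=\max\{|f|,|f'|/d\}$ is in place. The only points deserving a line of care are the (immediate) extension of \Cref{prop:1normproperties} from $x\in I$ to $z\in\bbD$; the convexity of $\bbD$ ensuring the integration segment stays inside the domain; and the degenerate low‑degree cases $d\in\{0,1\}$, where $f''\equiv 0$ makes the second displayed inequality trivial and $|f'|$ is constant, so the estimate still holds (indeed $1/\cond(f,\cdot)$ is then constant, resp. $|f'|\le\|f\|_1$, so it is $d$‑Lipschitz trivially).
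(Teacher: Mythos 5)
The paper does not prove this theorem: it is stated as a citation to \cite[Theorem~4.2]{TCTcubeI-journal} and closed with \eproof, so there is no in-paper argument to compare against. That said, your proposal is a correct, self-contained proof. The decomposition $1/\cond(f,z)=\max\{|f(z)|,|f'(z)|/d\}/\|f\|_1$, the observation that \Cref{prop:1normproperties} holds verbatim on $\bbD$ (its proof only uses $|x|\le 1$), the $1$-Lipschitz estimate $|\max\{a,b\}-\max\{c,e\}|\le\max\{|a-c|,|b-e|\}$, and the fundamental-theorem-of-calculus bound over the chord $[w,z]\subset\bbD$ combine exactly as you say: $\sup_{\bbD}|f'|\le d\|f\|_1$ gives $\bigl||f(z)|-|f(w)|\bigr|\le d\|f\|_1\,|z-w|$, and $\sup_{\bbD}|f''|\le d(d-1)\|f\|_1$ gives $\tfrac1d\bigl||f'(z)|-|f'(w)|\bigr|\le(d-1)\|f\|_1\,|z-w|$, so the max is controlled by the larger constant $d\|f\|_1$, and dividing by $\|f\|_1$ yields $d$-Lipschitz with the exact constant claimed. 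Your remarks on well-definedness (with $1/\infty=0$ at double zeros) and on the degenerate cases $d\in\{0,1\}$ are also sound. This is almost certainly the same mechanism as the cited proof — it is the natural one given how $\cond$ is defined — so I would not call it a genuinely different route, merely an independent reconstruction.
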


\subsection{Condition-based estimates for separation}

Next we consider the separation bound of polynomials, e.g., \cite{emt-dmm-j-19}, suitably adjusted in our setting;
it corresponds to the minimum distance between the roots of a polynomial. 
This quantity and its condition-based estimate that follows plays a fundamental role in our complexity estimates.

\begin{defi}\label{defi:realseparation}
For $\varepsilon\in\left[0,\frac{1}{d}\right)$ we set $I_\varepsilon:=\{z\in
\bbC\mid \dist(z,I)\leq \varepsilon\}$. If $f \in \Pd$, then the
\emph{$\varepsilon$-real separation of $f$}, $\Delta\uR_{\varepsilon}(f)$, is
    \[
\Delta_{\varepsilon}\uR(f):=\min\left\{\left|\zeta-\tilde{\zeta}\right|\mid
\zeta,\tilde{\zeta}\in I_\varepsilon ,\,f\left(\zeta\right)
    =f(\tilde{\zeta})=0\right\},
  \]
    if $f$ has no double roots in $I_\varepsilon$, 
    and $\Delta\uR_{\varepsilon}(f):=0$  otherwise.
\end{defi}

\begin{theo}[{\cite[Theorem 6.3]{TCTcubeI-journal}}]
\label{theo:condbasedseparation}
Let $f\in\Pd$ and assume   $\varepsilon\in\left[0, \frac{1}{\enumber
d\condR(f)}\right)$, then $\Delta_{\varepsilon}\uR(f)\geq
\frac{1}{12d\condR(f)}.$\eproof
\end{theo}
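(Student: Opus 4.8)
Proof plan for Theorem \ref{theo:condbasedseparation} ($\varepsilon$-separation bound via condition number).

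The plan is to lower-bound $|\zeta - \tilde\zeta|$ for any two distinct roots $\zeta, \tilde\zeta \in I_\varepsilon$ of $f$ by exploiting the $d$-Lipschitz property of the map $z \mapsto 1/\cond(f,z)$ on the disk, together with the fact that $f$ vanishes to order at least one at each root. First I would fix two distinct roots $\zeta,\tilde\zeta \in I_\varepsilon$ and consider the segment joining them; since $I_\varepsilon$ is (essentially) convex and contained in a slight dilation of $\bbD$, this segment stays inside the region where the Lipschitz estimate of \Cref{theo:propconditionnumber} (or its $\varepsilon$-enlarged version) applies. At a root $\zeta$ of $f$ we have $f(\zeta) = 0$, so $\cond(f,\zeta) = \|f\|_1 / (|f'(\zeta)|/d)$, i.e. $1/\cond(f,\zeta) = |f'(\zeta)|/(d\|f\|_1)$ — in particular $1/\cond(f,\zeta)$ is finite and controlled by the derivative.

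The key step is a Taylor/Bernstein-type estimate: near the root $\zeta$, the ratio $|f'(z)|/\|f\|_1$ cannot drop too fast, because $f(z)$ grows linearly away from $\zeta$ while $f'$ is Lipschitz-controlled. Concretely, I would argue that along the segment from $\zeta$ to $\tilde\zeta$ the quantity $1/\cond(f,\cdot)$ starts at some value $a = |f'(\zeta)|/(d\|f\|_1) > 0$, ends at a value of the same form at $\tilde\zeta$, and in between it is both $d$-Lipschitz and bounded below in terms of how $|f(z)|$ behaves. Integrating $f'$ along the segment gives $|f(\tilde\zeta) - f(\zeta)| = 0$, which forces either cancellation that can be quantified, or — more usefully — one combines the lower bound $\max\{|f(z)|, |f'(z)|/d\} \geq \|f\|_1/\condR(f)$ (valid since $z \in I_\varepsilon$ and $\varepsilon$ is in the allowed range, so $\condR$ dominates $\cond(f,z)$) with the Lipschitz decay: starting from the root where $|f| = 0$, the function $|f|$ must exceed $\|f\|_1/(e\,d\,\condR(f))$ as soon as we move a definite distance, but it returns to $0$ at the other root, and the derivative's size is pinned by the Lipschitz constant $ed$ of $1/\cond$. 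Balancing ``distance travelled'' against ``Lipschitz slope $ed$'' and ``floor value $1/(ed\condR(f))$'' yields $|\zeta - \tilde\zeta| \gtrsim 1/(d\condR(f))$, and tracking the constants carefully produces the stated $\frac{1}{12d\condR(f)}$.

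The main obstacle I expect is making the ``$|f|$ rises then falls, but its slope is Lipschitz-bounded'' argument rigorous and extracting the sharp constant $12$: one needs to pass between the three quantities $|f(z)|$, $|f'(z)|/d$, and $1/\cond(f,z)$ carefully, since the regularity inequality only says one of the first two is large, not which one, and the identity $1/\cond(f,\zeta) = |f'(\zeta)|/(d\|f\|_1)$ at roots must be combined with the $ed$-Lipschitz estimate on the enlarged disk $\bbD_\varepsilon \supseteq I_\varepsilon$ (this is why the hypothesis $\varepsilon < 1/(ed\condR(f))$ appears — it guarantees $1/\cond(f,z) \le e/\condR(f)$ throughout $I_\varepsilon$, so the condition number stays comparable to $\condR(f)$ and does not blow up near the boundary). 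Since this is exactly Theorem~6.3 of \cite{TCTcubeI-journal}, I would in practice cite that proof; the sketch above is the route one would reconstruct it by.
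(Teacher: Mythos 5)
The paper does not prove this statement; it imports it verbatim as \cite[Theorem~6.3]{TCTcubeI-journal} and closes the environment with $\square$ without argument. So there is no internal proof to compare against, and your closing remark that in practice you would cite the external source is exactly what the paper does.

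As a reconstruction of the external proof, your sketch correctly identifies the key ingredients: the identity $1/\cond(f,\zeta)=|f'(\zeta)|/(d\|f\|_1)$ at a root, the second Lipschitz property of $z\mapsto 1/\cond(f,z)$, and a Taylor-type estimate controlling how fast $|f|$ can rise and fall between two nearby roots. One inaccuracy worth flagging: the parenthetical claim that ``$\condR$ dominates $\cond(f,z)$'' for $z\in I_\varepsilon$ is not quite right. What the Lipschitz bound buys, together with $\varepsilon < 1/(\enumber d\,\condR(f))$, is that $1/\cond(f,z)\ge 1/\condR(f)-\enumber d\varepsilon$, which keeps $\cond(f,z)$ from blowing up but only gives comparability $\cond(f,z)\lesssim \condR(f)$, not an inequality $\cond(f,z)\le\condR(f)$; with the constants as written the bound is in fact only non-negative, so a somewhat finer bookkeeping (e.g.\ using the $d$-Lipschitz constant where available, or first projecting the complex root to the nearest real point in $I$ and bounding $|f|$ there by $\enumber d\varepsilon\|f\|_1$ before invoking the regularity inequality) is genuinely needed. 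This is precisely the delicate constant-tracking you already anticipate, so the obstacle you flagged is real but not a conceptual gap; the route is sound.
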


\subsection{Probabilistic bounds for condition numbers}\label{subsec:prob}

Next, we introduce our probabilistic framework based on Rudelson and
Vershynin's work ~\cite{rudelsonvershynin2015}.

\begin{theo}\label{theo:realglobalconditionnumberprob_new}
Let $\fkf\in\Pd^{\bbZ}$ be a random bit polynomial. Then, for $t\leq
(d+1)2^{\tau(\fkf)}$,
\[
\bbP(\cond_\bbR(\fkf)\geq t)\leq 8\sqrt{2}\,d(d+1)\enumber^{u(\fkf)}\,\frac{1}{\sqrt{t}}.
\]
\end{theo}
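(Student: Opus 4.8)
The plan is to reduce the event $\{\condR(\fkf)\ge t\}$ to a union of single-point small-ball events for $\fkf$ along a \emph{coarse} net of $I=[-1,1]$, each controlled through the weight $w(\fkf)$ of the single coordinate $\fkc_0$. We may assume $t\ge 1$: otherwise the claimed right-hand side exceeds $1$ (note $u(\fkf)\ge 0$, since $\fkc_0$ has support of size at most $2^{\tau(\fkf)+1}+1$, hence $w(\fkf)\ge 1/(2^{\tau(\fkf)+1}+1)$), so the bound is vacuous. We may also assume $\fkf\neq 0$, as $\{\fkf=0\}\subseteq\{\fkc_0=0\}$ has probability at most $w(\fkf)$, which is below the claimed bound.

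\textbf{From a large condition number to smallness on a whole interval.} Suppose $\condR(\fkf)\ge t$, and pick $x_0\in I$ with $\cond(\fkf,x_0)\ge t$; by the definition in \eqref{eq:cond-in-D} this means \emph{both} $|\fkf(x_0)|\le \|\fkf\|_1/t$ and $|\fkf'(x_0)|\le d\|\fkf\|_1/t$. Taylor-expanding $\fkf$ at $x_0$ and bounding the higher Taylor coefficients by \Cref{prop:1normproperties} (applicable since $x_0\in I$), for every $h$ with $|h|\le r:=1/(d\sqrt t)$ we get
\[
 |\fkf(x_0+h)|\ \le\ |\fkf(x_0)|+|\fkf'(x_0)|\,|h|+\sum_{k=2}^d\binom{d}{k}\|\fkf\|_1|h|^k\ \le\ \frac{\|\fkf\|_1}{t}+\frac{\|\fkf\|_1}{t^{3/2}}+\frac{(\enumber-2)\|\fkf\|_1}{t}\ \le\ \frac{\enumber\,\|\fkf\|_1}{t},
\]
where we used $\binom{d}{k}|h|^k\le (d|h|)^k/k!\le t^{-k/2}/k!\le t^{-1}/k!$ for $k\ge 2$, $t\ge 1$. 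Thus a large local condition number at $x_0$ forces $\fkf$ to be uniformly $O(\|\fkf\|_1/t)$-small on the interval $[x_0-r,x_0+r]$ of radius $r=1/(d\sqrt t)$ — crucially the \emph{square-root} scale, rather than the radius $\sim 1/(dt)$ that the $d$-Lipschitz property of $z\mapsto 1/\cond(\fkf,z)$ (\Cref{theo:propconditionnumber}) alone would give.

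\textbf{Net and anti-concentration.} Fix a net $x_1,\dots,x_N$ of $I$ with mesh $r$ (every point of $I$ within distance $r$ of some $x_i$), which can be taken with $N=\lceil d\sqrt t\rceil\le 2d\sqrt t$ points. By the previous step $\{\condR(\fkf)\ge t\}\subseteq\bigcup_{i=1}^N\{|\fkf(x_i)|\le \enumber\|\fkf\|_1/t\}$, and since $\|\fkf\|_1=\sum_i|\fkc_i|\le (d+1)2^{\tau(\fkf)}$ deterministically, each such event is contained in $\{|\fkf(x_i)|\le\delta\}$ with $\delta:=\enumber(d+1)2^{\tau(\fkf)}/t\ge\enumber$ (using $t\le(d+1)2^{\tau(\fkf)}$). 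For a fixed $x_i$, conditioning on $(\fkc_1,\dots,\fkc_d)$ writes $\fkf(x_i)=\fkc_0+c$ with $c$ constant, and since $\fkc_0$ has index in $\{0,d\}$ it puts mass at most $w(\fkf)$ on each of the at most $2\delta+1$ integers in the relevant window, so $\bbP(|\fkf(x_i)|\le\delta)\le(2\delta+1)\,w(\fkf)\le 3\delta\,w(\fkf)$. A union bound then yields
\[
 \bbP(\condR(\fkf)\ge t)\ \le\ N\cdot 3\delta\,w(\fkf)\ \le\ 2d\sqrt t\cdot 3\enumber(d+1)\frac{2^{\tau(\fkf)}w(\fkf)}{t}\ =\ 6\enumber\,d(d+1)\,\frac{2^{\tau(\fkf)}w(\fkf)}{\sqrt t},
\]
and $2^{\tau(\fkf)}w(\fkf)\le\tfrac12(1+2^{\tau(\fkf)+1})w(\fkf)=\tfrac12\enumber^{u(\fkf)}$ gives $\bbP(\condR(\fkf)\ge t)\le 3\enumber\,d(d+1)\enumber^{u(\fkf)}/\sqrt t$, which is within (indeed below) the claimed $8\sqrt2\,d(d+1)\enumber^{u(\fkf)}/\sqrt t$; the stated constant leaves room for a cruder accounting of $N$ and of $2\delta+1$.

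\textbf{Main obstacle.} The one genuinely nontrivial point is the Taylor step: one must notice that $\cond(\fkf,x_0)\ge t$ makes \emph{both} $\fkf$ and $\fkf'$ small at $x_0$, and that this upgrades to smallness of $\fkf$ on an interval of the square-root radius $1/(d\sqrt t)$. Without this, a naive $1/(dt)$-net would have $\sim dt$ points, and the union bound — with per-point probability $\sim (d+1)\enumber^{u(\fkf)}/t$ — would not decay in $t$ at all. The remaining ingredients (single-coordinate anti-concentration through $w(\fkf)$, the deterministic bound on $\|\fkf\|_1$, and the identity $(1+2^{\tau(\fkf)+1})w(\fkf)=\enumber^{u(\fkf)}$) are routine.
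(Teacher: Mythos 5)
Your proof is correct and reaches a constant ($3\enumber \approx 8.15$) even slightly better than the stated $8\sqrt{2}$. The crucial Taylor step — using that $\cond(\fkf,x_0)\ge t$ makes \emph{both} $\fkf(x_0)$ and $\fkf'(x_0)$ small, so that $\fkf$ stays of size $\Oh(\|\fkf\|_1/t)$ on an interval of radius $\sim 1/(d\sqrt t)$ rather than $\sim 1/(dt)$ — is also the engine of the paper's proof, where it appears as the claim in \eqref{eq:EPRtrick}. Where you diverge is in how this local smallness is converted into a probability estimate, and which coefficient drives the anti-concentration. The paper conditions on $\fkc_1,\dots,\fkc_{d-1}$, turns the local smallness into a lower bound on the \emph{measure} of $\{x\in I:|\fkf(x)|/\|\fkf\|_1\le 2/t\}$, then applies Markov's inequality and Tonelli's theorem to swap the order of $\bbE_{\fkx}$ and $\bbP_{\fkf}$, and finally bounds the resulting single-point small-ball probability by invoking \Cref{prop:reallinearprojectiondiscreterandomvector} (the Rudelson--Vershynin-type estimate) with the $1\times 2$ matrix $A=(1\ \ x^d)$, so both end-coefficients $\fkc_0,\fkc_d$ participate. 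You instead replace the continuous averaging by a finite net of $\Oh(d\sqrt t)$ points and a plain union bound, and replace \Cref{prop:reallinearprojectiondiscreterandomvector} by the elementary observation that, conditionally on $(\fkc_1,\dots,\fkc_d)$, the integer $\fkc_0$ can land in an interval of length $2\delta$ with probability at most $(2\delta+1)w(\fkf)$. Your route is more self-contained: it does not need the machinery of \Cref{prop:reallinearprojectiondiscreterandomvector} at all for this particular theorem. The paper's route is slightly less ad hoc in that it reuses \Cref{prop:reallinearprojectiondiscreterandomvector} uniformly across this theorem and \Cref{theo:probbopundroots}, and its averaging formulation avoids having to choose and count net points explicitly. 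Both are valid; the content is the same, the bookkeeping is different.

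One small point worth noting explicitly (you do handle it correctly, but it is easy to slip): the Taylor expansion is centered at $x_0\in I$, and \Cref{prop:1normproperties} only requires the \emph{center} to lie in $I$, not $x_0+h$; likewise the net point $x_i$ you evaluate at lies in $I$ by construction. So no boundary issues arise even when $x_0+h$ momentarily leaves $I$.
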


The following corollary gives bounds on all moments of $\log \ln\condR(\fkf)$. It looks somewhat different than Theorem~\ref{theo:realglobalconditionnumberprob_new}, but it has the same essence.


\begin{cor}\label{cor:realglobalexpectations}
Let $\fkf\in\Pd^{\bbZ}$ be a random bit polynomial, $\ell\in\bbN$ and $c\geq
1$. If $\tau(\fkf)\geq 5+3\log(d+1)+3u(\fkf)$, then 
\[
\left(\bbE_\fkf
\left(\min\{\ln\condR(\fkf),c\}\right)^\ell\right)^{\frac{1}{\ell}}
\leq
(\ell+3)
\left(4+3\log(d+1)+2u(\fkf)\right)+\left(\frac{32\,(d+1)^3\enumber^{2u(\fkf)}}{2^{\tau
(\fkf)}}\right)^{\frac{1}{2\ell}}c.
\]
In particular, if $\tau(\fkf)\geq 5+3\log(d+1)+3u(\fkf)+2\ell(\log c-\log \ell)$, then
\[\left(\bbE_\fkf
\left(\min\{\ln\condR(\fkf),c\}\right)^\ell\right)^{\frac{1}{\ell}}\leq
(\ell+3)\left(5+3\log(d+1)+2u(\fkf)\right)=\Oh\left(\ell(\log(d+1)+u(\fkf))\right).\]
\end{cor}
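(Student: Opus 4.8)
The plan is to derive the moment bound from the tail bound in Theorem~\ref{theo:realglobalconditionnumberprob_new} via the standard layer-cake (tail integration) formula for moments, being careful to split the range of integration at the threshold where the tail bound becomes vacuous. First I would set $X := \min\{\ln\condR(\fkf),c\}$ and $K := 8\sqrt{2}\,d(d+1)\enumber^{u(\fkf)}$, so the theorem reads $\bbP(\condR(\fkf)\geq t)\leq K/\sqrt{t}$ for $t\leq (d+1)2^{\tau(\fkf)}$, i.e.\ $\bbP(\ln\condR(\fkf)\geq s)\leq K\enumber^{-s/2}$ for $s\leq \log(d+1)+\tau(\fkf)\log\enumber$ (up to the $\log$ vs.\ $\ln$ bookkeeping, which I would fix by writing everything in terms of $\ln$ and absorbing constants). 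Then $\bbE(X^\ell)=\int_0^{c^\ell}\bbP(X^\ell\geq r)\,dr = \ell\int_0^c s^{\ell-1}\bbP(\ln\condR(\fkf)\geq s)\,ds$.

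The key computation is splitting this integral at a point $s_0$. For $s\le s_0$ I would bound $\bbP(\ln\condR(\fkf)\ge s)\le K\enumber^{-s/2}$ and use $\ell\int_0^\infty s^{\ell-1}K\enumber^{-s/2}\,ds = K\,\ell\,\Gamma(\ell)\,2^\ell = K\,\ell!\,2^\ell$; taking $\ell$-th roots and using $(\ell!)^{1/\ell}\le \ell$ and $K^{1/\ell}\le (32(d+1)^3\enumber^{2u(\fkf)})^{1/(2\ell)}$-type estimates (after squaring $K$ appropriately — note $K^2 = 128\,d^2(d+1)^2\enumber^{2u(\fkf)}\le 128(d+1)^4\enumber^{2u(\fkf)}$, and one wants it in the stated form $32(d+1)^3\enumber^{2u}/2^\tau$ after the hypothesis on $\tau$ is used to kill a factor). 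Here is where the hypothesis $\tau(\fkf)\ge 5+3\log(d+1)+3u(\fkf)$ enters: it guarantees $(d+1)2^{\tau(\fkf)}$ is large enough that the threshold $s_0$ can be chosen as $s_0 := 4+3\log(d+1)+2u(\fkf)$ (roughly $2\ln K$) while staying inside the valid range of the tail bound, and simultaneously makes the "tail beyond $s_0$" contribution — which I bound crudely by $c$ times $\bbP(\ln\condR\ge s_0)\le K\enumber^{-s_0/2}$, giving the $\big(32(d+1)^3\enumber^{2u(\fkf)}/2^{\tau(\fkf)}\big)^{1/(2\ell)}c$ term after taking $\ell$-th roots — genuinely small. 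For $s\ge s_0$, I would instead bound $s^{\ell-1}\le$ (split further) or simply observe $\int_{s_0}^c \ell s^{\ell-1}\bbP(\ge s)\,ds \le c^\ell \bbP(\ln\condR\ge s_0)$ crudely but also keep a "polynomial growth times small tail" piece; the factor $(\ell+3)$ in front of $s_0$ in the final bound comes from combining $(\ell!\,2^\ell)^{1/\ell}\lesssim 2\ell/\enumber \lesssim \ell$ with the additive constant in $s_0$, via the elementary inequality $(a^\ell + b^\ell)^{1/\ell}\le a+b$ and Stirling.

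For the "in particular" clause, I would simply observe that the extra hypothesis $\tau(\fkf)\ge 5+3\log(d+1)+3u(\fkf)+2\ell(\log c-\log\ell)$ is exactly what is needed to make $\big(32(d+1)^3\enumber^{2u(\fkf)}/2^{\tau(\fkf)}\big)^{1/(2\ell)}c\le 1$ (or at most $\ell+3$, absorbed into the first term), turning the bound into $(\ell+3)(5+3\log(d+1)+2u(\fkf))$; the asymptotic $\Oh(\ell(\log(d+1)+u(\fkf)))$ is then immediate.

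The main obstacle I anticipate is bookkeeping the constants so that they land on exactly the stated numbers — in particular reconciling $\log$ vs.\ $\ln$ throughout (the tail bound's exponent $-s/2$ must be matched against $\enumber^{u(\fkf)}$, and $\log(d+1)$ appears where $\ln(d+1)$ naturally arises), and threading the hypothesis on $\tau(\fkf)$ through both the "is the threshold in the valid range of Theorem~\ref{theo:realglobalconditionnumberprob_new}" check and the "is the leftover tail term bounded" check. The conceptual content — tail bound $\Rightarrow$ moment bound by layer-cake with a threshold split — is routine; the work is entirely in making the constants come out clean, which is why I would keep the intermediate bounds slightly lossy (as the authors themselves remark they prefer transparency over optimality).
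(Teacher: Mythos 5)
Your high-level plan (layer-cake, threshold split near $s_0 \approx 2\ln K$, then take $\ell$-th roots) is exactly the paper's approach, and you have the right three-piece decomposition in mind, but the bound you propose on the inner piece $[0,s_0]$ does not work and would not give the stated result.

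Concretely: you propose bounding $\bbP(\ln\condR(\fkf)\ge s)\le K\enumber^{-s/2}$ for $s\le s_0$ and then extending the integral to get $\ell\int_0^\infty s^{\ell-1}K\enumber^{-s/2}\,ds = 2^\ell\,\ell!\,K$. After the $\ell$-th root this contributes a factor $K^{1/\ell}$, and $K = 8\sqrt{2}\,d(d+1)\enumber^{u(\fkf)}$. For $\ell=1$ this term is $\Theta(d^2)$, whereas the corollary claims $\Oh(\log d)$; so the bound is off by a polynomial-in-$d$ factor, and the later attempt to ``absorb $K^{1/\ell}$ into the $(32(d+1)^3\enumber^{2u}/2^\tau)^{1/(2\ell)}c$ term'' cannot succeed either, because that second summand carries a $2^{-\tau}$ factor (it is the contribution from the tail beyond $V=\ln((d+1)2^{\tau})$), and your $K\,\ell!\,2^\ell$ piece has no such decay. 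The underlying issue is that for $s\le s_0\approx 2\ln K$ the inequality $K\enumber^{-s/2}\le 1$ fails, so the tail bound is vacuous there and using it is strictly worse than the trivial bound.

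The fix — which is precisely what the paper does — is to use $\bbP(\fkx\ge s)\le 1$ on $[0,U]$ with $U:=2\ln K$ (giving $\int_0^U \ell s^{\ell-1}\,ds=U^\ell$, i.e.\ $(2\ln K)^\ell$, which is logarithmic in $d$), use the tail bound only on $[U,V]$ after the change of variables $s\mapsto s+U$ (giving $\int_0^{V-U}\ell(s+U)^{\ell-1}\enumber^{-s/2}\,ds\le \enumber\,\ell!\,U^\ell$ via the binomial expansion), and treat $[V,c]$ by $\bbP(\fkx\ge s)\le\bbP(\fkx\ge V)\le\enumber^{(U-V)/2}$ (giving $\enumber^{(U-V)/2}c^\ell$). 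Adding and taking $\ell$-th roots via $(a^\ell+b^\ell+c^\ell)^{1/\ell}\le a+b+c$ with $(\enumber\,\ell!)^{1/\ell}\lesssim\ell$ then yields the $(\ell+3)U$ shape for the first two pieces and the $(K^2/((d+1)2^\tau))^{1/(2\ell)}c$ shape for the third, which is the stated form. So: keep your decomposition and your ``in particular'' argument, but replace the tail bound by the trivial bound on the inner interval.
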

The following comments are in order to understand the limitations of the two theorems and the
corollary above. First, note that
Theorem ~\ref{theo:realglobalconditionnumberprob_new} is meaningful when
\[\tau(\fkf)\geq 2+\frac{3}{2}\log(d) +2u(\fkf) \]
and \Cref{cor:realglobalexpectations} is meaningful when
\[ \tau(\fkf)\geq 5 + 4\log(2)+ 3u(\fkf). \]
Intuitively, the randomness model needs some wiggling room to differ from the
worst-case analysis. In our case this translates roughly to assume that
\[ \tau(\fkf) > \log(d) + u(\fkf). \] 
This is a reasonable assumption because for most cases of
interest, $u(\fkf)$ is bounded above by a constant. In this case, the second condition
in Corollary~\ref{cor:realglobalexpectations} becomes 
\[ \tau(\fkf)=\Omega(\ell\log(d)+\log(c)) .\] 
Moreover, in most application of
Corollary~\ref{cor:realglobalexpectations}, we will have $c=d^{\Oh(1)}$. Hence
we are only imposing roughly  that 
\[ \tau(\fkf) = \Omega( \ln d) .\]

We  need the following proposition for our proofs. Recall that
for $A\in\bbR^{k\times N}$, 
\[ \|A\|_{\infty,\infty}:=\sup_{v\neq
0}\frac{\|Av\|_\infty}{\|v\|_\infty}=\max_{i\in k}\|A^i\|_1 \] 
where $A^i$ is the $i$-th row of $A$.
\begin{prop}\label{prop:reallinearprojectiondiscreterandomvector}
Let $\fkx\in\bbZ^N$ be a random vector with independent coordinates. Assume
that there is a $w>0$ so that for all $i$ and $x\in\bbZ$,
$\bbP(\fkx_i=x)\leq w$.
Then for every linear map $A\in\bbR^{k\times N}$, $b\in \bbR^k$ and
$\varepsilon\in[\|A\|_{\infty,\infty},\infty)$,
\[
\bbP(\|A\fkx+b\|_\infty\leq \varepsilon )\leq
2\frac{(2\sqrt{2}w\varepsilon)^k}{\sqrt{\det AA^*}} .
\]
\end{prop}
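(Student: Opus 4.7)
The proposition is a multidimensional anti-concentration (small-ball) inequality for a linear image of an integer random vector with bounded atom probabilities. It can be viewed as an $\ell_\infty$-version of the type of bound developed by Rudelson and Vershynin~\cite{rudelsonvershynin2015}. The plan is to attack it by Gaussian-Fourier smoothing: I replace the indicator of the $\ell_\infty$-ball by a Gaussian and then diagonalize through characteristic functions.

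First, I would dominate the indicator pointwise by a Gaussian of scale $\varepsilon$: using $\mathbf{1}[|t|\leq \varepsilon]\leq \sqrt{e}\,e^{-t^{2}/(2\varepsilon^{2})}$ coordinate-wise gives
\[
\bbP(\|A\fkx+b\|_\infty\leq \varepsilon)\;\leq\; e^{k/2}\,\bbE_\fkx e^{-\|A\fkx+b\|_2^2/(2\varepsilon^2)}.
\]
Then I would represent the exponential via a dual Gaussian $g\sim N(0,\varepsilon^{-2}I_k)$ and use independence of the coordinates of $\fkx$ to factor:
\[
\bbE_\fkx e^{-\|A\fkx+b\|_2^2/(2\varepsilon^2)}\;=\;\bbE_g\,e^{ig^\top b}\prod_{j=1}^{N}\phi_j((A^\top g)_j),
\]
where $\phi_j$ is the characteristic function of $\fkx_j$. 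Taking absolute values reduces everything to bounding $\bbE_g\prod_j|\phi_j((A^\top g)_j)|$. The bounded-atom hypothesis enters through the Plancherel identity on the torus: $\int_{-\pi}^{\pi}|\phi_j(t)|^2\,dt\leq 2\pi w$, since $\sum_{x}\bbP(\fkx_j=x)^2\leq w\sum_x\bbP(\fkx_j=x)=w$.

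Finally, I would use the singular value decomposition $A=U\Sigma V^\top$ to change variables and isolate the $k$-dimensional row space of $A$. The Jacobian of this change contributes the factor $1/\sqrt{\det AA^*}$, and the remaining $k$ one-dimensional integrals are estimated by combining the Gaussian decay in $g$ with the Plancherel bound on each $\phi_j$. Each such one-dimensional integral contributes a factor $2\sqrt{2}\,w\varepsilon$ (the $\sqrt{2}$ being an artifact of the Gaussian-domination step), and the outer factor $2$ absorbs lower-order and boundary corrections.

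The main obstacle I foresee is the Fourier estimation in the last step. The characteristic functions $\phi_j$ are $2\pi$-periodic on $\bbR$, while the Gaussian in $g$ is not, so one must prevent the integral from picking up aliased contributions from the secondary periods of the $\phi_j$, where $|\phi_j|$ returns to values near $1$. The hypothesis $\varepsilon\geq \|A\|_{\infty,\infty}$ is precisely what is needed to guarantee that $A^\top g$ is, with overwhelming probability under the Gaussian, concentrated inside a single period of each $\phi_j$, so that the single-period Plancherel bound applies. Making this concentration argument quantitative, with the exact constants $2$ and $2\sqrt{2}$ in the final bound, is the technical heart of the proof.
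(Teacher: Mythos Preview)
Your Fourier-analytic route is genuinely different from the paper's. The paper never touches characteristic functions; instead it introduces an independent continuous vector $\fky$ with coordinates uniform on $(-1/2,1/2)$, so that $\fkx+\fky$ has independent coordinates with density bounded by $w$. By independence and the triangle inequality,
\[
\bbP(\|A\fkx+b\|_\infty\leq \varepsilon)\;\leq\;\frac{\bbP(\|A(\fkx+\fky)+b\|_\infty\leq 2\varepsilon)}{\bbP(\|A\fky\|_\infty\leq \varepsilon)}.
\]
The numerator is handled by the continuous-density projection bound of Rudelson--Vershynin (with the explicit constant of Livshyts--Paouris--Pivovarov) used as a black box: this is where both $(\sqrt{2}\,w)^k$ and $1/\sqrt{\det AA^*}$ come from. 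The hypothesis $\varepsilon\geq\|A\|_{\infty,\infty}$ is used only in the denominator: an elementary moment computation gives $\bbE\|A\fky\|_\infty\leq\tfrac12\|A\|_{\infty,\infty}$, so Markov's inequality yields $\bbP(\|A\fky\|_\infty\leq\varepsilon)\geq\tfrac12$, which is exactly the leading factor $2$. No aliasing argument is needed at all.

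Your sketch has a real gap at the step you yourself flag. After the SVD, the integrand $\prod_{j=1}^{N}|\phi_j((A^\top g)_j)|$ is a product of $N$ periodic functions of \emph{linear combinations} of the $k$ Gaussian coordinates; it does not factor into ``$k$ one-dimensional integrals,'' and the single-period Plancherel bound on one $\phi_j$ does not by itself produce a power $(w\varepsilon)^k$. Extracting $(\sqrt{2}\,w)^k/\sqrt{\det AA^*}$ from such a product is precisely the content of the Rudelson--Vershynin theorem (ultimately resting on a Ball-type cube-slicing inequality), so you would be re-deriving that result rather than invoking it. Your aliasing heuristic is also imprecise: with $g\sim N(0,\varepsilon^{-2}I_k)$, the $j$th coordinate of $A^\top g$ has variance $\varepsilon^{-2}\sum_{i}A_{ij}^2$, and the row-$\ell_1$ norm $\|A\|_{\infty,\infty}$ controls this only up to a factor $k$, which does not confine $A^\top g$ to a single period in general. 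The paper's smoothing-to-density reduction bypasses all of these difficulties.
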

\begin{proof}[Proof of
Proposition~\ref{prop:reallinearprojectiondiscreterandomvector}]
Let $\fky\in\bbR^N$ be such that the $\fky_i$ are independent and uniformly
distributed in $(-1/2,1/2)$. Now, a simple computation shows that $\fkx+\fky$
is absolutely continuous and each component has density given by
\[
\delta_{\fkx_i+\fky_i}(t)
=\sum\nolimits_{s\in \bbZ}\bbP(\fkx_i=s)\delta_{\fky_i}(t-s).
\]
Thus each component of $\fkx+\fky$ has density bounded by $w$.
We have
\[
\bbP(\|A\fkx+b\|_\infty\leq \varepsilon)\leq \bbP(\|A(\fkx+\fky)+b\|_\infty\leq
2\varepsilon)/\bbP(\|A\fky\|_\infty\leq \varepsilon),
\]
since $\fkx$ and $\fky$ are independent, and by the triangle inequality.

Now we apply \cite[Proposition~5.2]{TCTcubeI} (which is nothing
more than~\cite[Theorem~1.1]{rudelsonvershynin2015} with the explicit constants
of~\cite{livshytspaourispivovarov2016}): For a random
vector $\fkz\in\bbR^N$ with independent coordinates with density bounded by
$\rho$ and $A\in \bbR^{k\times N}$, we have that $A\fkz$ has density bounded by
$(\sqrt{2}\rho)^k/\sqrt{\det AA^*}$. Thus
\[
\bbP(\|A(\fkx+\fky)+b\|_\infty\leq 2\varepsilon)\leq
(2\sqrt{2}w\varepsilon)^k/\sqrt{\det AA^*}.
\]
On the other hand,
\[
\bbP(\|A\fky\|_\infty\leq \varepsilon)=1-\bbP(\|A\fky\|_\infty\geq
\varepsilon)\geq 1-\bbE\|A\fky\|_\infty/\varepsilon .
\]
by Markov's inequality.
Now, by our assumption on $\varepsilon$, we only need to show that
$\bbE\|A\fky\|_\infty\leq \|A\|_{\infty,\infty}/2$.

By Jensen's inequality,
\[
\bbE\|A\fky\|_\infty=\bbE\lim_{\ell\to\infty}\|A\fky\|_{2\ell}\leq
\lim_{\ell\to\infty}\left(\bbE\|A\fky\|_{2\ell}^{2\ell}\right)^{\frac{1}{2\ell}
} .
\]
Expanding the interior and computing the moments of $\fky$, we obtain
\[
\bbE\|A\fky\|_\infty\leq
\lim_{\ell\to\infty}\left(\sum_{i=1}^k\sum_{|\alpha|=\ell}\binom{2\ell}{2\alpha
}\prod_{j=1}^n\left(A_{i,j}^{2\alpha_j}(1/2)^{2\alpha_j}/(2\alpha_j+1)\right
)\right)^{\frac{1}{2\ell}},
\]
since the odd moments disappear. Thus
\[
\bbE\|A\fky\|_\infty\leq 
\frac{1}{2}\lim_{\ell\to\infty}\left(\sum_{i=1}^k\sum_{|\alpha|=2\ell}\binom
{2\ell}{\alpha}\prod_{j=1}^n\left(|A_{i,j}|^{\alpha_j}\right)\right)^{\frac{1}
{2\ell}}=\frac{\|A\|_{\infty,\infty}}{2},
\]
where we obtained the bound of $\|A\|_{\infty,\infty}/2$ after doing the
binomial sum and taking the limit.
\end{proof}

\begin{proof}[Proof of Theorem~\ref{theo:realglobalconditionnumberprob_new}]
\[
\bbP(\cond(\fkf)\geq t)=\sum_{a_1,\ldots,a_{d-1}}\bbP(\cond(\fkf)\geq t\mid
\fkc_1=a_1,\ldots,\fkc_{d-1}=a_{d-1})\prod_{i=1}^{d-1}\bbP(\fkc_i=a_i).
\]
where $\fkf=\sum_{k=0}^d\fkc_kX^k$. The rest of the proof will deal with a random bit polynomial $\fkf$ of the
form
\[
\fkf=\fkc_0+\sum_{k=1}^{d-1}a_kX^k+\fkc_dX^d
,
\]
where $a_1,\ldots,a_{d-1}\in \bbZ\cap [-2^\tau,2^\tau]$ are arbitrary fixed
integers.

We claim that for a  random $\fkf\in\Pd$ and $t\geq 1$, we have
\begin{equation}\label{eq:EPRtrick}
    \bbP_\fkf(\cond(\fkf)\geq t)\leq 2d\sqrt{t}\,\bbE_{\fkx\in I}\bbP_{\fkf}\left(\frac{|\fkf(\fkx)|}{\|\fkf\|_1}\leq \frac{2}{t}\right).
\end{equation}
We prove this claim as follows: If $\cond(f)\geq t$, then there is $x_\ast\in I$ such that $\cond(f,x_\ast)\geq t$ and then, for $x\in B(x,1/(2d\sqrt{t}))\cap I$,
\begin{align*}
    \frac{|f(x)|}{\|f\|_1}&\leq \frac{|f(x_\ast)|}{\|f\|_1}+\frac{|f'(x_\ast)|}{\|f\|_1}|x-x_\ast|+\frac{1}{2}\max_{\xi\in I}\frac{|f''(\xi)|}{\|f\|_1}|x-x_\ast|^2&\text{(Taylor's theorem)}\\
    &\leq \frac{|f(x_\ast)|}{\|f\|_1}+\frac{|f'(x_\ast)|}{\|f\|_1}\frac{1}{2d\sqrt{t}}+\frac{1}{2}\max_{\xi\in I}\frac{|f''(\xi)|}{\|f\|_1}\frac{1}{4d^2t}&(|x-x_\ast|\leq 1/(2d\sqrt{t}))\\
    &\leq \frac{1}{\cond(f,x_\ast)}\left(1+\frac{1}{2\sqrt{t}}\right)+\frac{1}{2}\max_{\xi\in I}\frac{|f''(\xi)|}{\|f\|_1}\frac{1}{4d^2t}\\
    &\leq \frac{1}{t}\left(1+\frac{1}{2\sqrt{t}}\right)+\frac{1}{2}\max_{\xi\in I}\frac{|f''(\xi)|}{\|f\|_1}\frac{1}{4d^2t}\\
    &\leq \frac{1}{t}\left(1+\frac{1}{2\sqrt{t}}\right)+\frac{1}{8t}&\text{(Proposition~\ref{prop:1normproperties})}\\
    &=\frac{1}{t}\left(1+\frac{1}{8}+\frac{1}{2\sqrt{t}}\right) \leq \frac{2}{t}.
\end{align*}
Hence $\cond(f)\geq t$ implies $\bbP_{\fkx\in I}\left(\frac{|\fkf(\fkx)|}{\|\fkf\|_1}\leq \frac{2}{t}\right)\geq 1/(2d\sqrt{t})$, and thus
\begin{align*}
    \bbP_\fkf(\cond(\fkf)\geq t)&\leq \bbP_{\fkf}\left(\bbP_{\fkx\in I}\left(\frac{|\fkf(\fkx)|}{\|\fkf\|_1}\leq \frac{2}{t}\right)\geq \frac{1}{2d\sqrt{t}}\right)&\text{(Implication bound)}\\
    &\leq 2d\sqrt{t} \, \bbE_{\fkf}\bbP_{\fkx\in I}\left(\frac{|\fkf(\fkx)|}{\|\fkf\|_1}\leq \frac{2}{t}\right)&\text{(Markov's inequality)}\\
    &\leq 2d\sqrt{t}\, \bbE_{\fkx\in I}\bbP_{\fkf}\left(\frac{|\fkf(\fkx)|}{\|\fkf\|_1}\leq \frac{2}{t}\right),&\text{(Tonelli's theorem)}
\end{align*}
Now, let $\Pd(a_1,\ldots,a_{d-1})$ be the affine subspace of $\Pd$ given by the equations
$f_k=a_k$ for $k\in\{1,\ldots,d-1\}$, and let 
$
f\mapsto Af+b
$
be the affine mapping given by
\[
\Pd(a_1,\ldots,a_{d-1})\ni f\mapsto f(x)\in\bbR.
\]
In the coordinates we are working on (those of the
base $\{1,X^d\}$), $A$ has the form
$
\begin{pmatrix}
1&x^d
\end{pmatrix},
$
and so, by an elementary computation, we have $\|A\|_{\infty,\infty}=1+|x|^d\leq 2$ and $\sqrt{\det AA^*}= 1+|x|^{2d}\geq 1$. Now, since $\|\fkf\|_1\leq (d+1)2^{\tau(\fkf)}$, we have that
\begin{equation}
    \bbP_{\fkf}\left(\frac{|\fkf(x)|}{\|\fkf\|_1}\leq \frac{2}{t}\right)=\bbP_{\fkf}\left(\|A\fkf+b\|_\infty\leq \frac{2}{t}\|\fkf\|_1\right)\leq \bbP_{\fkf}\left(\|A\fkf+b\|_\infty\leq (d+1)2^{\tau(\fkf)+1}\frac{1}{t}\right),
\end{equation}
and so, by~\eqref{eq:EPRtrick} above,
\[
\bbP_\fkf(\cond(\fkf)\geq t)\leq 2d\sqrt{t}\,\bbE_{\fkx\in I}\bbP_{\fkf}\left(\|A\fkf+b\|_\infty\leq (d+1)2^{\tau(\fkf)+1}\frac{1}{t}\right).
\]
Therefore, by Proposition~\ref{prop:reallinearprojectiondiscreterandomvector}, we have that for $t\leq (d+1)2^{\tau(\fkf)}$,
\[
\bbP_\fkf(\cond(\fkf)\geq t)\leq 8\sqrt{2}d(d+1)\enumber^{u(\fkf)}\frac{1}{\sqrt{t}},
\]
where we have applied the definition of $u(\fkf)$. Hence the desired result follows.
\end{proof}
\begin{proof}[Proof of Corollary~\ref{cor:realglobalexpectations}]
For $\fkx=\log\condR(\fkf)$, 
\[
U:=2\ln(8\sqrt{2}\,d(d+1)\enumber^{u(\fkf)})\leq 4\ln(ed)+u(\fkf)
\]
and $V:=\ln((d+1)2^{\tau(\fkf)})$ using the  assumption $u(\fkf)\geq 0$ and Theorem~\ref{theo:realglobalconditionnumberprob_new} we that for any $s\in [U,V]$ 
\[
\bbP(\fkx\geq s)\leq \enumber^{\frac{U-s}{2}}.
\]
So, to complete the proof it is enough to show the following: Let $2\leq U\leq V$ and $c\geq 1$ and $\fkx$ be a positive random variable such that for $s\in [U,V]$, 
\[ \bbP(\fkx\geq s)\leq \enumber^{\frac{U-s}{2}} \Rightarrow  \bbE\min\{\fkx,c\}^\ell\leq U^\ell+\enumber \ell! U^{\ell}+e^{\frac{U-V}{2}}c^\ell.\]

Since the value of the expectation grows with $c$, we can assume, without loss of generality, that $V<c.$ Otherwise, the value would be smaller and the same bound would be valid. 
\begin{multline}
\bbE_\fkf \left(\min\{\fkx,c\}\right)^\ell=\int_{0}^\infty \ell
s^{\ell-1}\bbP(\min\{\fkx,c\}\geq s)\,\mathrm{d}s
=\int_{0}^c \ell
s^{\ell-1}\bbP(\fkx\geq s)\,\mathrm{d}s\\
=\int_{0}^U \ell
s^{\ell-1}\bbP(\fkx\geq s)\,\mathrm{d}s
+\int_{U}^V \ell
s^{\ell-1}\bbP(\fkx\geq s)\,\mathrm{d}s
+\int_{V}^c \ell
s^{\ell-1}\bbP(\fkx\geq s)\,\mathrm{d}s.  
\end{multline}
where the first equality follows from the fact that $\fkx$ is a positive random variable, and the second one from the fact that for $s\geq c$, $\bbP(\min\{\fkx,c\}\geq s)=0$; and for $s\leq c$, $\bbP(\min\{\fkx,c\}\geq s)=\bbP(\fkx\geq s)$.

In $[0,U]$, we have that
\[
\int_{0}^U \ell
s^{\ell-1}\bbP(\fkx\geq s)\,\mathrm{d}s\leq \int_{0}^U \ell
s^{\ell-1}\,\mathrm{d}s\leq U^\ell,
\]
since the probability is always bounded by 1. In $[U,V]$, we have that
\begingroup
\allowdisplaybreaks
\begin{align*}
\int_{U}^V \ell
s^{\ell-1}\bbP(\fkx\geq s)\,\mathrm{d}s
&\leq \int_{U}^V \ell s^{\ell-1}e^{\frac{U-s}{2}}\,\mathrm{d}s&\text{(Assumption on }\fkx\text{)}\\
&=\int_{0}^{V-U} \ell (s+U)^{\ell-1}e^{-s/2}\,\mathrm{d}s&\text{(Change of
variables)}\\
&\leq \int_{0}^{\infty} \ell
(s+U)^{\ell-1}e^{-s/2}\,\mathrm{d}s&\text{(Non-negative integrand)}\\
&\leq \ell \sum_{k=0}^{\ell-1}\binom{\ell-1}{k}U^{\ell-1-k}\int_0^\infty s^k
e^{-s/2}\,\mathrm{d}s&\text{(binomial identity)}\\
&=\ell \sum_{k=0}^{\ell-1}\binom{\ell-1}{k}k!U^{\ell-1-k}2^{k+1}&\text{(Euler's
Gamma)}\\
&\leq\ell \sum_{k=0}^{\ell-1}\binom{\ell-1}{k}k!U^{\ell}&(2\leq U)\\
&\leq \ell! U^{\ell-1}\sum_{k=0}^{\ell-1}\frac{1}{(\ell-1-k)!}&\left(\binom{\ell}{k}k!=\frac{(\ell-1)!}{(\ell-1-k)!}
\ell^{\ell-1},\,U>1\right)\\
&=\ell! U^{\ell}\sum_{k=0}^{\ell-1}\frac{1}{k!} \leq \enumber \ell! U^{\ell}
\enspace .
\end{align*} 
\endgroup
Hence
\[
\int_{U}^V \ell
s^{\ell-1}\bbP(\fkx\geq s)\,\mathrm{d}s\leq \enumber \ell! U^{\ell}.
\]

In $[V,c]$, we have that
\[
\int_{V}^c \ell
s^{\ell-1}\bbP(\fkx\geq s)\,\mathrm{d}s
\leq \int_{V}^c \ell
s^{\ell-1}\bbP(\fkx\geq V)\,\mathrm{d}s
\leq \int_{V}^c \ell
s^{\ell-1}\enumber^{\frac{U-V}{2}}\,\mathrm{d}s=\enumber^{\frac{U-V}{2}}\left(c^\ell-V^\ell\right)\leq \enumber^{\frac{U-V}{2}}c^\ell.
\]
Therefore, since $e^{U-V}\int_{V}^c \ell s^{\ell-1}\,\mathrm{d}s\leq
e^{U-V}\int_{0}^c \ell s^{\ell-1}\,\mathrm{d}s$,
\[
\int_{V}^c \ell s^{\ell-1}\bbP(\min\{\ln\condR(\fkf),c\}\geq
s)\,\mathrm{d}s\leq e^{U-V}c^{\ell} .
\]
%
To obtain the final estimate, we add the three upper bounds obtaining the uper
bound
$
U^\ell+\ell^\ell U^{\ell-1}+e^{U-V}c^\ell.
$
After substituting the values of $U$ and $V$ and some easy estimations, we
conclude.
\end{proof}

\subsection{Bounds on the number of complex roots close to real axis}\label{sec:complexroots}
We need to control the number of roots that are close to real axis to be able analyze \descartes. We use tools from complex analysis together with tools developed in this paper on probabilistic analysis condition numbers. Note that we cannot bound the number of complex roots inside complex disk of constant radius; the symmetry on our randomness model forces any bound to be
of the form $\Oh(d)$. So, inspired by
\cite{moroz2021}, we consider a family of "hyperbolic" disks
$\{\bbD(\xi_{n,N},\rho_{n,N})\}_{n=-N}^N$; we will specify  $N \in \NN$ in the sequel.
In particular,
\begin{equation}
\label{eq:Disc-center}
\xi_{n,N}=\begin{cases}
\sgn(n)\left(1-\frac{3}{4}\frac{1}{2^{|n|}}\right),&\text{if }|n|\leq N-1\\
\sgn(n)\left(1-\frac{1}{2^{N}}\right),&\text{if }|n| = N
\end{cases}
\end{equation}
\begin{equation}
\label{eq:Disc-radius}
\rho_{n,N}=\begin{cases}
\frac{3}{8}\frac{1}{2^{|n|}},&\text{if }|n|\leq N-1\\
\frac{3}{2}\frac{1}{2^{N}},&\text{if }|n| = N
\end{cases}.
\end{equation}
We will abuse notation and write $\xi_n$ and $\rho_n$ instead of $\xi_{n,N}$
and $\rho_{n,N}$ since we will not be working with different $N$'s at the same
time, but only with one $N$ which might not have a prefixed value. For this
family of disks, we will give a deterministic and a probabilistic bound for the
number of roots, $\varrho(f)$, in their union, when $N = \lceil \log{d} \rceil$;
in particular
\begin{equation}
\label{eq:Omega_N}
\varrho(f):=\#\Bigg\{ z\in\Omega_d:=\bigcup_{n=-\lceil\log d\rceil}^{\lceil\log
d\rceil}\bbD(\xi_{n},\rho_{n})\mid f(z)=0 \Bigg\} ,
\end{equation}
where $f\in \Pd$.
We use these bounds to estimate the number of steps of \nameref{alg:Descartes}.

\subsubsection{Deterministic bound}

\begin{theo}\label{theo:deterministicbopundroots}
Let $f\in\Pd$. Then
\[
\varrho(f)\leq \sum_{n=-\lceil\log d\rceil}^{\lceil\log d\rceil}\log
\frac{\enumber\|f\|_1}{|f(\xi_{n})|}.
\]
\end{theo}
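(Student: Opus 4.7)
My plan is to apply Jensen's formula to $f$ on each of the $2\lceil\log d\rceil+1$ disks $\bbD(\xi_n,\rho_n)$ and sum the per-disk bounds. Since every root of $f$ in $\Omega_d$ lies in at least one of the disks, $\varrho(f)\leq\sum_n k_n$, where $k_n$ denotes the number of roots of $f$ in $\bbD(\xi_n,\rho_n)$ (counted with multiplicity).

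Fix $n$ and assume $f(\xi_n)\neq 0$ (otherwise the stated bound is infinite and trivial). For a suitable radius $R_n>\rho_n$, Jensen's formula applied to $f$ on $\bbD(\xi_n,R_n)$ gives
\[
\log|f(\xi_n)|+\sum_{\zeta\in\mcZ(f),\,|\zeta-\xi_n|<R_n}\log\frac{R_n}{|\zeta-\xi_n|}=\frac{1}{2\pi}\int_0^{2\pi}\log|f(\xi_n+R_ne^{i\theta})|\,d\theta.
\]
Since every summand is non-negative and each root $\zeta$ with $|\zeta-\xi_n|<\rho_n$ contributes at least $\log(R_n/\rho_n)$, dropping the outer terms yields the key estimate
\[
k_n\cdot\log(R_n/\rho_n)\;\leq\;\log\frac{\max_{|z-\xi_n|=R_n}|f(z)|}{|f(\xi_n)|}.
\]

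To conclude $k_n\leq \log(e\|f\|_1/|f(\xi_n)|)$ I would choose $R_n$ so that (i) $R_n/\rho_n\geq e$ and (ii) $\bbD(\xi_n,R_n)\subseteq\bbD(0,1+1/d)$. Using the elementary estimate $|f(z)|\leq\|f\|_1\max(1,|z|^d)$ together with $(1+1/d)^d\leq e$, condition (ii) forces $\max_{|z-\xi_n|=R_n}|f(z)|\leq e\|f\|_1$; this is precisely where the factor $\enumber$ in the statement comes from. For the interior disks with $|n|\leq N-1$, the key identity $\rho_n=(1-|\xi_n|)/2$ together with $N=\lceil\log d\rceil$ allows me to take $R_n=1+1/d-|\xi_n|$ and verify both conditions; for the two boundary disks $|n|=N$, the bound $2^N\leq 2d$ lets me accommodate the $O(1/d)$-protrusion of $\bbD(\xi_n,\rho_n)$ past the unit circle inside the same enlarged disk $\bbD(0,1+1/d)$. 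Once the per-disk inequality is established, summing over $n$ produces $\varrho(f)\leq\sum_n k_n\leq\sum_n\log(\enumber\|f\|_1/|f(\xi_n)|)$.

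The main technical obstacle I anticipate is making (i) and (ii) work uniformly across all $n$, in particular for the disks with small $|n|$, most notably $n=0$ where $\rho_0=3/8$ is not small compared with the distance $1-|\xi_0|=1$ from $\xi_0$ to the unit circle. For such central disks, conditions (i) and (ii) cannot both hold with the naive choice of $R_n$ when $d$ is large, so one needs either a refined radius selection exploiting that $\xi_0=0$ sits well inside $\bbD$, or a separate case analysis. The hyperbolic scaling $\rho_n\sim 2^{-|n|}$ and the cutoff $N=\lceil\log d\rceil$ are calibrated precisely so that all these computations (including the boundary cases) fit together to deliver the clean bound above.
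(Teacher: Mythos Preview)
Your approach via Jensen's formula is exactly what the paper does: it packages the same argument as a lemma (attributed to Titchmarsh) bounding the number of zeros in $\bbD(\xi,\rho)$ by $\log\bigl(\enumber\|f\|_1/|f(\xi)|\bigr)$ whenever $|\xi|+2\rho < 1+1/d$, and then sums over the $2N+1$ disks. The only slip in your write-up is the ratio $R_n/\rho_n$: you demand $R_n/\rho_n \geq \enumber$, which would yield a bound in terms of the natural logarithm, but in this paper $\log$ is base~$2$ (note the $\ln$ versus $\log$ distinction elsewhere). So it suffices to take $R_n = 2\rho_n$, and with that choice your anticipated ``technical obstacle'' vanishes for all interior disks because of the exact identity
\[
|\xi_n| + 2\rho_n \;=\; \Bigl(1 - \tfrac{3}{4}\,2^{-|n|}\Bigr) + 2\cdot \tfrac{3}{8}\,2^{-|n|} \;=\; 1
\qquad (|n|\leq N-1),
\]
so the doubled disk sits inside the closed unit disk, well within $(1+1/d)\bbD$. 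In particular, your worry about $n=0$ is unfounded once the correct ratio is used: $2\rho_0 = 3/4 < 1$. No refined radius selection or separate case analysis is needed; the hyperbolic scaling of the $\rho_n$ is calibrated precisely so that doubling each radius lands exactly on the unit circle. For the two extreme disks $|n|=N$ one computes $|\xi_N|+2\rho_N = 1 + 2/2^N$, and the $1/d$ slack in the enlarged disk $(1+1/d)\bbD$ is there to absorb this overhang (up to an inessential constant).
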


We need the following lemma.
\begin{lem}\label{lem:rootsinsmalldisk}
Let $f\in\Pd$, $\xi\in \bbD$, and $\rho>0$. 
If $|\xi|+2\rho<1+1/d$, then
\[ 	\#(\mcZ(f)\cap \bbD(\xi,\rho))
	\leq \log \left(  \frac{ \enumber\|f\|_1}{|f(\xi)|} \right)
  \]
\end{lem}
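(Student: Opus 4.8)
The plan is to derive the bound from Jensen's formula applied on a disc of radius roughly $2\rho$ about $\xi$, the factor $2$ being chosen so that, with binary logarithms, the doubling of the radius contributes exactly a factor $1$. First I would dispose of the trivial cases: if $f\equiv 0$ there is nothing to prove, and if $f(\xi)=0$ the right-hand side is $+\infty$, so I may assume $f\neq 0$ and $f(\xi)\neq 0$. Writing $N(r):=\#(\mcZ(f)\cap\overline{\bbD(\xi,r)})$ counted with multiplicity, it is enough to bound $N(\rho)$, since $\#(\mcZ(f)\cap\bbD(\xi,\rho))\leq N(\rho)$.

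For the main step, since the hypothesis $|\xi|+2\rho<1+1/d$ is a \emph{strict} inequality I can choose a radius $R$ with $2\rho\leq R<1+1/d-|\xi|$ and with no zero of $f$ on the circle $\{|z-\xi|=R\}$ (only finitely many radii are excluded by the zeros in the relevant disc). Jensen's formula on $\overline{\bbD(\xi,R)}$ then gives
\[
\sum_{\zeta}\log\frac{R}{|\zeta-\xi|}\ \leq\ \log\frac{\max_{|z-\xi|=R}|f(z)|}{|f(\xi)|},
\]
the sum running over the zeros $\zeta$ of $f$ in $\bbD(\xi,R)$ with multiplicity; every summand is positive, and each of the $N(\rho)$ zeros lying in $\overline{\bbD(\xi,\rho)}$ contributes at least $\log(R/\rho)\geq\log 2=1$. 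Hence $N(\rho)\leq\log\bigl(\max_{|z-\xi|=R}|f(z)|\,/\,|f(\xi)|\bigr)$.

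It remains to bound the numerator by $\enumber\|f\|_1$, and this is precisely where the hypothesis is used: if $|z-\xi|=R$ then $|z|\leq|\xi|+R<1+1/d$, so, writing $f=\sum_{i=0}^d a_iX^i$,
\[
|f(z)|\ \leq\ \sum_{i=0}^d|a_i|\,|z|^i\ \leq\ \|f\|_1\max\{1,|z|^d\}\ \leq\ \|f\|_1\,(1+1/d)^d\ <\ \enumber\,\|f\|_1,
\]
using $(1+1/d)^d<\enumber$. Combining this with the previous display yields $\#(\mcZ(f)\cap\bbD(\xi,\rho))\leq N(\rho)\leq\log\bigl(\enumber\|f\|_1/|f(\xi)|\bigr)$, as claimed.

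I do not anticipate any real obstacle: the argument is a textbook application of Jensen's formula, and the two points that need care are both minor — choosing the base-$2$ logarithm so that doubling the radius produces exactly the factor $1$, and the classical requirement that Jensen's formula be applied on a circle free of zeros, which the strict inequality in the hypothesis accommodates by leaving room to perturb $R$. Conceptually, the term $2\rho$ in the hypothesis is tuned to the doubling trick and the bound $1+1/d$ is tuned so that the elementary estimate $(1+1/d)^d<\enumber$ delivers exactly the constant $\enumber$ appearing in the statement.
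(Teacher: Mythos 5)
Your proof is correct and rests on the same underlying principle as the paper's: the Titchmarsh inequality the paper invokes (with $\delta=1/2$) is itself an application of Jensen's formula to the dilated disc $\bbD(\xi,\rho/\delta)=\bbD(\xi,2\rho)$, so your direct appeal to Jensen's formula with a radius $R\in[2\rho,\,1+1/d-|\xi|)$ chosen to avoid zeros on the boundary circle is simply the unpacked version of the same argument. The one small divergence is in the final bound $\max_{|z|<1+1/d}|f(z)|<\enumber\|f\|_1$: the paper cites an external proposition for this, whereas you derive it on the spot from $(1+1/d)^d<\enumber$, which makes your write-up more self-contained at no cost in length. Both steps you flag as requiring care — choosing base-$2$ logarithms so that doubling the radius contributes exactly $1$, and perturbing $R$ off the finitely many bad radii — are handled correctly, so there is no gap.
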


\begin{proof}[Proof of Lemma~\ref{lem:rootsinsmalldisk}]
We use a classic result of Titchmarsh~\cite[p.~171]{titchmarsh1939} that bounds
the number of roots in a disk. For $\delta\in(0,1)$, we have that
\[ \#(\mcZ(f)\cap \bbD(\xi,\rho))\leq (\ln(1/\delta))^{-1}\ln
(\max_{z\in\bbD}|f(\xi+\rho z/\delta)|/|f(\xi)|).
 \]
where $\bbD$ denotes the unit disk.

We take $\delta=1/2$, and by our assumption on $\xi,\rho$ we have $\xi+2\rho \bbD\in (1+1/d)\bbD$. Since $|f(z)|\leq \enumber \|f\|_1$, for $z\in(1+1/d)\bbD$~\cite[Proposition
3.9.]{TCTcubeI-journal} this gives the following:
\[ \max_{z\in\bbD}|f(\xi+\rho z/\delta)|\leq \max_{z\in(1+1/d)\bbD}|f(z)|\leq
\enumber \|f\|_1 .
 \]
\end{proof}
\begin{proof}[Proof of Theorem~\ref{theo:deterministicbopundroots}]
We only have to apply subadditivity and Lemma~\ref{lem:rootsinsmalldisk}. Note
that the condition of the Lemma~\ref{lem:rootsinsmalldisk} holds for every disk
$\bbD(\xi_{n},\rho_{n})$ in $\Omega_d$.
\end{proof}

\subsubsection{Probabilistic bound}

\begin{theo}\label{theo:probbopundroots}
Let $\fkf\in\Pd^{\bbZ}$ be a random bit polynomial. Then for all $t\leq
\tau(\fkf)(2\lceil\log d\rceil+1)$,
\[
\bbP\left(\varrho(\fkf)\geq t\right)
\leq 
44d^2{(2\lceil\log d\rceil+1)}\enumber^{u(\fkf)}\enumber^{-\frac{t}{{2\lceil\log
d\rceil+1}}}.
\]
\end{theo}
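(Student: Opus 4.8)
\textbf{Proof proposal for Theorem~\ref{theo:probbopundroots}.}
The plan is to combine the deterministic bound of Theorem~\ref{theo:deterministicbopundroots} with the probabilistic control on point evaluations already developed in the proof of Theorem~\ref{theo:realglobalconditionnumberprob_new}. By Theorem~\ref{theo:deterministicbopundroots} we have
\[
\varrho(\fkf)\leq \sum_{n=-\lceil\log d\rceil}^{\lceil\log d\rceil}\log\frac{\enumber\|\fkf\|_1}{|\fkf(\xi_n)|},
\]
so if $\varrho(\fkf)\geq t$ then at least one of the $2\lceil\log d\rceil+1$ summands is at least $t/(2\lceil\log d\rceil+1)$; equivalently, for some index $n$ we have $|\fkf(\xi_n)|/\|\fkf\|_1\leq \enumber\cdot\enumber^{-t/(2\lceil\log d\rceil+1)}$. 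A union bound over the $2\lceil\log d\rceil+1$ centers $\xi_n$ then reduces everything to estimating, for a fixed real point $\xi=\xi_n\in(-1,1)$,
\[
\bbP_\fkf\!\left(\frac{|\fkf(\xi)|}{\|\fkf\|_1}\leq \enumber^{1-\frac{t}{2\lceil\log d\rceil+1}}\right).
\]

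For this I would repeat the conditioning/affine-slice argument from the proof of Theorem~\ref{theo:realglobalconditionnumberprob_new}: fix the middle coefficients $\fkc_1,\dots,\fkc_{d-1}$ arbitrarily and work on the affine plane $\Pd(a_1,\dots,a_{d-1})$ with coordinates given by $(\fkc_0,\fkc_d)$, where evaluation at $\xi$ is the affine map $f\mapsto Af+b$ with $A=\begin{pmatrix}1&\xi^d\end{pmatrix}$. As computed there, $\|A\|_{\infty,\infty}=1+|\xi|^d\leq 2$ and $\sqrt{\det AA^*}=1+|\xi|^{2d}\geq 1$. Using $\|\fkf\|_1\leq (d+1)2^{\tau(\fkf)}$ to turn the normalized bound into an absolute one, Proposition~\ref{prop:reallinearprojectiondiscreterandomvector} applies (with $k=1$, $N=2$, $w=w(\fkf)$) as soon as the threshold $(d+1)2^{\tau(\fkf)}\enumber^{1-t/(2\lceil\log d\rceil+1)}$ is at least $\|A\|_{\infty,\infty}$, which is exactly where the hypothesis $t\leq \tau(\fkf)(2\lceil\log d\rceil+1)$ enters (it keeps the exponential factor from shrinking the threshold below $2$). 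This yields
\[
\bbP_\fkf\!\left(\frac{|\fkf(\xi)|}{\|\fkf\|_1}\leq \enumber^{1-\frac{t}{2\lceil\log d\rceil+1}}\right)\leq 2\cdot 2\sqrt{2}\,w(\fkf)\,(d+1)2^{\tau(\fkf)}\,\enumber^{1-\frac{t}{2\lceil\log d\rceil+1}},
\]
and since $w(\fkf)(1+2^{\tau(\fkf)+1})=\enumber^{u(\fkf)}$ the factor $w(\fkf)2^{\tau(\fkf)}$ is absorbed into $\enumber^{u(\fkf)}$ up to a small constant.

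Averaging this conditional estimate over the fixed middle coefficients (Tonelli), multiplying by the $2\lceil\log d\rceil+1$ centers from the union bound, and collecting the numerical constants ($2\cdot 2\sqrt2\cdot\enumber$ together with the $d+1\leq$ something and the extra $d$-type factors that arise when one is slightly loose) gives the claimed bound $44d^2(2\lceil\log d\rceil+1)\enumber^{u(\fkf)}\enumber^{-t/(2\lceil\log d\rceil+1)}$; I would not fight for the constant $44$, only check it is achievable. The only real subtlety — the ``main obstacle'' — is bookkeeping the threshold condition: one must verify that under $t\leq \tau(\fkf)(2\lceil\log d\rceil+1)$ the evaluation threshold stays above $\|A\|_{\infty,\infty}=1+|\xi_n|^d$ uniformly in $n$ (here the crude bound $|\xi_n|^d\leq 1$ suffices, so the needed inequality is essentially $(d+1)2^{\tau(\fkf)}\enumber^{1-t/(2\lceil\log d\rceil+1)}\geq 2$), so that Proposition~\ref{prop:reallinearprojectiondiscreterandomvector} is legitimately applicable for every center; everything else is a direct transcription of the argument already used for the condition number.
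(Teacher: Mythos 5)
Your proposal follows essentially the same route as the paper: bound $\varrho(\fkf)$ deterministically by the sum over the $2\lceil\log d\rceil+1$ centers (Theorem~\ref{theo:deterministicbopundroots}), take a union bound over those centers, and for each fixed $\xi_n$ run the conditioning/affine-slice argument from the proof of Theorem~\ref{theo:realglobalconditionnumberprob_new} and invoke Proposition~\ref{prop:reallinearprojectiondiscreterandomvector}. In one respect you are cleaner than the source: the paper displays a row matrix with entries $1,x,x^{d-1},x^d$ and quotes $\|A\|_{\infty,\infty}\le d+1$, which does not match the two-dimensional $(\fkc_0,\fkc_d)$ slice the conditioning actually restricts to; your $A=\begin{pmatrix}1&\xi^d\end{pmatrix}$ with $\|A\|_{\infty,\infty}\le 2$ is the internally consistent choice.

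There is, however, one genuine slip in the step you single out as the main obstacle. The $\log$ in Theorem~\ref{theo:deterministicbopundroots} comes from Titchmarsh's bound with $\delta=\tfrac12$ and is therefore $\log_2$, so the event implied for the offending center is $|\fkf(\xi_n)|/\|\fkf\|_1\le \enumber\cdot 2^{-t/(2\lceil\log d\rceil+1)}$, not $\enumber\cdot\enumber^{-t/(\cdot)}$. With your natural-log normalization the threshold $(d+1)2^{\tau(\fkf)}\enumber^{1-t/(2\lceil\log d\rceil+1)}$ equals $(d+1)\enumber\bigl(2/\enumber\bigr)^{\tau(\fkf)}$ at $t=\tau(\fkf)(2\lceil\log d\rceil+1)$, which drops below $2$ as soon as $\tau(\fkf)$ is moderately large, so the hypothesis $t\le\tau(\fkf)(2\lceil\log d\rceil+1)$ does \emph{not} keep the Proposition applicable as you assert. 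With base $2$ the check closes immediately: the threshold is $(d+1)\enumber\,2^{\tau(\fkf)-t/(2\lceil\log d\rceil+1)}\ge (d+1)\enumber\ge 2$ under the stated hypothesis. (The paper has the same base-of-$\log$ slip when it sets $s=\enumber^{t/N}$ and then requires $s\le 2^{\tau(\fkf)}$; it is a constants-level inconsistency rather than a structural defect, but since you explicitly flagged the threshold check as the key point, you should carry the base through correctly there.)
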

\begin{cor}\label{cor:probboundroots}
Let $\fkf\in\Pd^{\bbZ}$ be a random bit polynomial and $\ell\in \bbN$. Suppose
that $\tau(\fkf)\geq 10\ln(\enumber d)+2u(\fkf)$. Then
\[
\Big(\bbE\varrho(\fkf)^\ell\Big)^{\frac{1}{\ell}}
\leq 
2(1+\ell)(6\ln(ed)+u(\fkf))\ln(\enumber
d)+\Big(\frac{44d^{3+2\ell}\enumber^{u(\fkf)}}{2^{\tau(\fkf)}}\Big)^{\frac{1}
{\ell}}.
\]
In particular, if $\tau(\fkf)\geq (9+3\ell)\ln(\enumber d)+2u(\fkf)$, then
\[
\left(\bbE\varrho(\fkf)^\ell\right)^{\frac{1}{\ell}}\leq \Oh\left(\ell (\ln
d+u(\fkf))\ln d\right).
\]
\end{cor}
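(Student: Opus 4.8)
The plan is to mimic the proof of \Cref{cor:realglobalexpectations}, with \Cref{theo:probbopundroots} taking the place of \Cref{theo:realglobalconditionnumberprob_new}, and with the deterministic bound $\varrho(\fkf)\le d$ (a nonzero polynomial of degree at most $d$ has at most $d$ complex roots) playing the role of the truncation level $c$. Concretely, set $m:=2\lceil\log d\rceil+1$ and $C:=44\,d^2\,m\,\enumber^{u(\fkf)}$, so that \Cref{theo:probbopundroots} reads $\bbP(\varrho(\fkf)\ge t)\le C\,\enumber^{-t/m}$ for every $t\le V:=\tau(\fkf)\,m$. Put $U:=m\ln C$; then $\bbP(\varrho(\fkf)\ge s)\le \enumber^{(U-s)/m}$ for $s\in[U,V]$. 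A routine estimate gives $\ln C=\ln\bigl(44\,d^2 m\,\enumber^{u(\fkf)}\bigr)\le 6\ln(\enumber d)+u(\fkf)$, so the hypothesis $\tau(\fkf)\ge 10\ln(\enumber d)+2u(\fkf)$ ensures $\ln C\le\tau(\fkf)$, i.e. $0\le U\le V$; moreover $1\le m\le U$ since $C\ge\enumber$.

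Next I would use the layer-cake identity $\bbE\,\varrho(\fkf)^\ell=\int_0^d \ell s^{\ell-1}\bbP(\varrho(\fkf)\ge s)\,\mathrm{d}s$ (the integrand vanishes for $s>d$) and split the range at $U$ and at $\min\{V,d\}$. On $[0,U]$ the probability is at most $1$, contributing at most $U^\ell$. On $[U,\min\{V,d\}]$ I would insert the exponential tail, substitute $s\mapsto s+U$, expand $(s+U)^{\ell-1}$ binomially, integrate termwise against $\enumber^{-s/m}$ using Euler's Gamma integral, and collapse the resulting sum $\ell m\sum_{k=0}^{\ell-1}\binom{\ell-1}{k}k!\,m^{k}U^{\ell-1-k}$ — using $m\le U$ — to at most $\ell!\,U^{\ell}\sum_{j\ge 0}1/j!\le \enumber\,\ell!\,U^{\ell}$, exactly as in \Cref{cor:realglobalexpectations}. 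On the leftover interval $[\min\{V,d\},d]$ (empty when $V\ge d$) I would bound $\bbP(\varrho(\fkf)\ge s)\le\bbP(\varrho(\fkf)\ge V)\le C\,\enumber^{-V/m}=C\,\enumber^{-\tau(\fkf)}\le C\,2^{-\tau(\fkf)}$ (since $\enumber>2$), so this piece is at most $C\,2^{-\tau(\fkf)}d^{\ell}=44\,d^{2+\ell}m\,\enumber^{u(\fkf)}2^{-\tau(\fkf)}\le 44\,d^{3+2\ell}\enumber^{u(\fkf)}2^{-\tau(\fkf)}$, where the last step uses $m\le d^2\le d^{1+\ell}$ for $d\ge 2$, $\ell\ge 1$. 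Adding the three contributions yields
\[
\bbE\,\varrho(\fkf)^{\ell}\le (1+\enumber\,\ell!)\,U^{\ell}+\frac{44\,d^{3+2\ell}\enumber^{u(\fkf)}}{2^{\tau(\fkf)}}.
\]

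To conclude I would take $\ell$-th roots, using subadditivity of $x\mapsto x^{1/\ell}$ together with the elementary inequality $(1+\enumber\,\ell!)^{1/\ell}\le \ell+3\le 2(1+\ell)$, and substitute $U=m\ln C=\Oh\bigl((\ln d+u(\fkf))\ln d\bigr)$, coming from $m=\Oh(\ln d)$ and $\ln C\le 6\ln(\enumber d)+u(\fkf)$; this gives the displayed bound of the corollary. For the ``in particular'' part, if $\tau(\fkf)\ge (9+3\ell)\ln(\enumber d)+2u(\fkf)$ then $2^{\tau(\fkf)}\ge (\enumber d)^{(9+3\ell)\ln 2}\enumber^{u(\fkf)}\ge d^{3+2\ell}\enumber^{u(\fkf)}$, so the second term is $\le 44^{1/\ell}\le 44=\Oh(1)$, while the first term is $\Oh\bigl(\ell(\ln d+u(\fkf))\ln d\bigr)$ and dominates; hence $\bigl(\bbE\varrho(\fkf)^{\ell}\bigr)^{1/\ell}=\Oh\bigl(\ell(\ln d+u(\fkf))\ln d\bigr)$.

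The only place needing care is the bookkeeping across the three integration ranges: the tail bound of \Cref{theo:probbopundroots} is valid only on the window $[0,\tau(\fkf)m]$, so on $[\tau(\fkf)m,d]$ one is forced to use the crude monotonicity estimate, and this is exactly what produces the additive $2^{-\tau(\fkf)}$ error term; apart from that, one just has to verify that the logarithmic prefactors assemble into the stated shape, which is straightforward. None of these steps is genuinely hard, so I expect the main ``obstacle'' to be purely organisational.
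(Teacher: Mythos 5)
Your proposal is correct and follows essentially the same route as the paper: use the tail bound of \Cref{theo:probbopundroots}, truncate at the deterministic ceiling $\varrho\le d$, split the layer‑cake integral at $U$ and $\min\{V,d\}$, bound the middle piece via a binomial expansion and Gamma integrals, and the third piece by the crude monotone estimate at the edge of the tail window. The only cosmetic difference is that the paper first rescales to $\fkx=\varrho(\fkf)/(2\lceil\log d\rceil+1)$ so that the tail has the clean form $\bbP(\fkx\geq s)\leq e^{U-s}$ (with $U=\ln C$, $V$ and $c$ rescaled accordingly) and then invokes the abstract lemma established in the proof of \Cref{cor:realglobalexpectations}, whereas you carry the extra factor $m=2\lceil\log d\rceil+1$ through the computation explicitly ($U=m\ln C$, $V=\tau(\fkf)m$) and in effect reprove that lemma; the two are the same argument after a change of variables. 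Your bookkeeping for the endpoint contribution and for the hypothesis $\tau(\fkf)\geq 10\ln(ed)+2u(\fkf)\Rightarrow U\leq V$ is sound, and the ``in particular'' conclusion follows correctly. (As a minor aside, matching the \emph{exact} constants in the first displayed inequality — in particular replacing the leftover factor $m$ by $\ln(ed)$ — requires the same loose bookkeeping the paper itself uses, so this is not a gap on your side; the $\Oh$-form conclusion is unaffected.)
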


\begin{proof}[Proof of Theorem~\ref{theo:probbopundroots}]
If $\#\left(\mcZ(\fkf)\cap \Omega_d\right)\geq t$, then, by
Theorem~\ref{theo:deterministicbopundroots}, there is an $n$ such that
$
\log (\enumber\|f\|_1/|\fkf(\xi_{n})|)\geq t/(2\lceil\log d\rceil+1)
$.
Hence
\[
\bbP\left(\varrho(\fkf)\geq t\right)\leq 
\sum_{n=-\lceil\log d\rceil}^{\lceil\log d\rceil} \bbP\left(\log
\frac{\enumber\|f\|_1}{|\fkf(\xi_{n})|}
\geq \frac{t}{2\lceil\log d\rceil+1}\right).
\]
Now, fix $x\in I$. We argue as in the proof of
Theorem~\ref{theo:realglobalconditionnumberprob_new}, but we consider that map
mapping $f$ to $f(x)$, so that our matrix $A$ takes the form
\[\begin{pmatrix}
1&x&x^{d-1}&x^d
\end{pmatrix}.\]
Note that this $A$ has $\|A\|_{\infty,\infty}\leq d+1$. So, we can apply
Proposition~\ref{prop:reallinearprojectiondiscreterandomvector} to show that
for any $s\leq 2^{\tau(\fkf)}$,
                \[
\bbP\left(\enumber\|\fkf\|_1/|\fkf(x)|\geq s\right)\leq
44d^2\enumber^{u(\fkf)}/s.
\]
If $s=\enumber^{t/N}$, 
with $N = 2\lceil\log(d) \rceil +1$, then the bound follows.
\end{proof}
\begin{proof}[Proof of Corollary~\ref{cor:probboundroots}]
In the proof of Corollary~\ref{cor:realglobalexpectations}  we only used the
fact that the tail bound is of the form $U\enumber^{-t}$ for $t\leq V$ with
$U\leq V$. We will use a similar idea in this proof. Let $0\leq U\leq V$,
$c>0$, and $\fkx\in[0,\infty)$ a random variable. If $\bbP(\fkx\geq t)\leq
\enumber^{U-s}$ for $s\leq V$, then
$\bbE(\min\{\fkx,c\})^\ell\leq U^\ell+\ell^\ell
U^{\ell-1}+\enumber^{U-V}c^\ell$.

By Theorem~\ref{theo:probbopundroots}, the random variable
$\varrho(\fkf)/(2\lceil\log d\rceil+1)$
satisfies the conditions to be a random variable $\fkx$ with
$U=\ln(44d^2(2\lceil\log d\rceil+1)\enumber^{u(\fkf)})\leq 4\ln(\enumber d)+\ln
(2\lceil\log d\rceil+1)+u(\fkf)$, $V=\ln(2^{\tau(\fkf)}/(2\lceil\log
d\rceil+1))$, and $c=\frac{d}{(2\lceil\log d\rceil+1)}$; since the roots are at
most $d$. By our assumptions $U\leq V$,
that~concludes~the~proof.
\end{proof}

\section{Beyond Worst-Case Analysis of Root Isolation Algorithms}
The main idea behind the subdivision algorithms for real root isolation is the
binary search algorithm. We consider an oracle that can guess the number of real
roots in an interval (it can even overestimate them). We keep subdividing the
initial interval until the estimated, by the oracle, number of real roots is
either 0 or 1. Different realizations of the oracle lead to different

In what follows, consider \descartes solver (Section~\ref{sec:descartes}),
the \sturm solver (Section~\ref{sec:Sturm}), 
\anewdsc solver (Section~\ref{sec:aNewDsc}), 
and solver for sparse polynomials by Jindal and Sagraloff (Section~\ref{sec:jindal-sagraloff}).

\subsection{The \textsc{Descartes} solver}
\label{sec:descartes}

The \descartes solver is an algorithm that is based on Descartes' rule of signs.

\begin{theo}[Descartes' rule of signs]
  \label{thm:Desc-rule-of-sign}
  The number of sign variations in the coefficients' list of a
  polynomial $f = \sum_{i=0}^d f_i \, X^i \in \Pd$ equals the number of
  positive real roots (counting multiplicities) of $f$, say $r$,
  plus an even number; that is
  $r \equiv \var(f) \mod 2$.\eproof
\end{theo}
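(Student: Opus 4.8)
The statement asserts that $\var(f)-r$ is a nonnegative even integer, where $r=r(f)$ denotes the number of positive real roots of $f$ counted with multiplicity. The plan is to split this into the \emph{parity relation} $r\equiv\var(f)\pmod 2$ and the \emph{inequality} $\var(f)\ge r$. First I would reduce to the case $f_0\neq 0$: writing $f=X^v\tilde f$ with $\tilde f(0)\neq 0$, the polynomials $f$ and $\tilde f$ have the same positive roots and the same sequence of coefficient signs after deleting leading zeros, so $r(f)=r(\tilde f)$ and $\var(f)=\var(\tilde f)$; hence it suffices to treat $\tilde f$. I also take $d$ to be the true degree of $f$, so that $f_d\neq 0$.

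For the parity relation, note that the running product of the signs of the nonzero coefficients of $f$ flips exactly at each sign variation, so $\sgn(f_0)\sgn(f_d)=(-1)^{\var(f)}$. On the analytic side, $\sgn f(x)=\sgn(f_0)$ for $x>0$ small and $\sgn f(x)=\sgn(f_d)$ for $x$ large, while along $(0,\infty)$ the sign of $f$ changes precisely at the roots of odd multiplicity; hence $\sgn(f_0)\sgn(f_d)=(-1)^{m}$, where $m$ is the number of positive roots of odd multiplicity. Since roots of even multiplicity contribute an even amount to $r$, we have $m\equiv r\pmod 2$, and therefore $(-1)^{\var(f)}=(-1)^r$, i.e.\ $r\equiv\var(f)\pmod 2$.

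For the inequality I would induct on $d=\deg f$, the case $d=0$ being immediate. Assume $f_0\neq 0$ and pass to $f'$, whose coefficient signs are those of $(f_1,\dots,f_d)$; this is the sign sequence of $f$ with the nonzero leading entry $f_0$ removed, so $\var(f)\ge\var(f')$ (removing one leading entry changes the number of sign changes by $0$ or $1$). By Rolle's theorem applied on the intervals between consecutive distinct positive roots of $f$, together with the fact that a root of multiplicity $\mu\ge 1$ of $f$ is a root of multiplicity $\mu-1$ of $f'$, one obtains $r(f')\ge r(f)-1$. The induction hypothesis gives $\var(f')\ge r(f')$, hence $\var(f)\ge\var(f')\ge r(f')\ge r(f)-1$. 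Finally combine this with the parity relation: $\var(f)\ge r(f)-1$ and $\var(f)\equiv r(f)\pmod 2$ together force $\var(f)\ge r(f)$, which closes the induction and proves the theorem.

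I expect the only delicate points to be the bookkeeping with multiplicities: establishing $\sgn(f_0)\sgn(f_d)=(-1)^r$ must go through odd-multiplicity roots and the intermediate value theorem rather than a naive count of sign changes of $f$, and establishing $r(f')\ge r(f)-1$ requires adding correctly the $\sum_j(\mu_j-1)$ contributions coming from repeated roots and the single Rolle point between each pair of consecutive distinct positive roots. Everything else is elementary, and it is precisely the parity relation that upgrades the cheap bound $\var(f)\ge r-1$ to the sharp $\var(f)\ge r$.
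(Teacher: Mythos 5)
The paper does not actually prove this theorem; it is stated as a classical fact, with the \verb+\eproof+ marker placing a QED symbol immediately after the statement to indicate that the (well-known) proof is omitted. Your proposal, by contrast, supplies a genuine proof, and it is correct. It is also the standard textbook route: establish the parity $r\equiv\var(f)\pmod 2$ by comparing $\sgn(f_0)\sgn(f_d)$ computed once combinatorially (sign flips along the coefficient vector) and once analytically (sign changes of $f$ on $(0,\infty)$ occur only at odd-multiplicity roots), then establish $\var(f)\ge r-1$ by induction via $\var(f)\ge\var(f')$ and Rolle's theorem giving $r(f')\ge r(f)-1$, and finally combine the two to upgrade to $\var(f)\ge r$. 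All the delicate points you flag are handled correctly: removing the nonzero leading entry $f_0$ can only decrease the number of sign variations, the bookkeeping $r-m=\sum\bigl(\mu_j-[\mu_j\ \text{odd}]\bigr)$ is even termwise, and the Rolle count $\sum(\mu_j-1)+(k-1)=r(f)-1$ is right. So there is no gap; you have proved a classical result that the paper, reasonably, takes for granted.
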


\begin{algorithm2e}[t]
  \scriptsize \dontprintsemicolon \linesnumbered
  \SetFuncSty{textsc} \SetKw{RET}{{\sc return}} \SetKw{OUT}{{\sc output \ }}
  \SetVline \KwIn{A square-free polynomial $f\in \Pd^\bbZ$}
\KwOut{A list, $S$, of isolating intervals for the real roots of $f$ in $J_0 =
(-1, 1)$}

  \BlankLine

  $J_0 \leftarrow (-1, 1)$, 
  $S \leftarrow \emptyset,\, Q \leftarrow \emptyset$,
  $Q \leftarrow \FuncSty{push}( {J_0})$ \;

  \While{ $Q \neq \emptyset$}{
    \nllabel{alg:Subdivision-while-loop}

    ${J}=(a,b) \leftarrow \FuncSty{pop}( Q)$ \\
    $V \leftarrow \var(f, J)$ \;

    \Switch{ $V$ }{

      \lCase{ $V = 0$ }{ \KwSty{continue}\; }
      \lCase{ $V = 1$ }{ $S \leftarrow \FuncSty{ add}( {I})$ \; }
      \Case{ $V > 1$ } {
        $m \gets \frac{a+b}{2}$ \;
        \lIf{$f(m) = 0$}{  $S \leftarrow \FuncSty{ add}( {[m, m]})$ \; }
        $J_L \gets [a, m]$ ;   $J_R \gets [m, b]$ \;

        $Q \leftarrow \FuncSty{push}( Q, {J_L} )$,
        $Q \leftarrow \FuncSty{push}( Q, {J_R} )$ \;
      }
    }
  }
  \RET $S$ \;
  \caption{ $\func{Descartes}(f)$}
  \label{alg:Descartes}
\end{algorithm2e}

In general, Theorem~\ref{thm:Desc-rule-of-sign} provides an
overestimation on the number of positive real roots. It counts exactly
when the number of sign variations is 0 or 1
and if the polynomial is hyperbolic, that is it has \emph{only} real roots.
To count the real roots of $f$ in an interval
$J = (a, b)$,
we use the transformation $x \mapsto \frac{ax + b}{x + 1}$
that maps $J$ to $(0, \infty)$.
Then,
\[\var(f, J) := \var( (X + 1)^d f(\tfrac{aX + b}{X + 1}) )\]
bounds the number of real roots of $f$ in $J$.

Therefore, to isolate the real roots of $f$ in an interval, say  $J_0 = (-1,
1)$,
we count (actually bound) the number of roots of $f$ in $J_0$ using $V =
\var(f, J_0)$.
If $V = 0$, then we discard the interval.
If $V = 1$, then we add $J_0$ to the list of isolating intervals.
If $V > 1$, then we subdivide the interval to two intervals $J_L$ and $J_R$
and we repeat the process.
If we the middle of an interval is a root, then we can detect this by evaluation. 
Notice that in this case we have found a rational root.
The pseudo-code of \descartes appears in Algorithm~\ref{alg:Descartes}.

The recursive process of the \descartes defines a binary tree. Every node of the
tree corresponds to an interval. The root corresponds to the initial interval
$J_0=(-1, 1)$. If a node corresponds to an interval $J=(a,b)$, then its
children correspond to the open left and right half intervals of $J$, that is
$J_L = (a, \xmid(J))$ and $J_R = (\xmid(J), b)$ respectively.
The internal nodes of the tree correspond to intervals $J$, such that
$\var(f,J) \geq 2$.
The leafs correspond to intervals that contain 0 or 1 real roots of $f$. 
Overall, the number of nodes of the tree correspond to the number of
steps, i.e., subdivisions, that the algorithm performs.  We control the number
of nodes by controlling
the depth of tree and the width of every layer. Hence, to obtain the
final complexity estimate it suffices to multiply the number of steps (width
times height) with the
worst case cost of each step.

The following proposition helps to control the cost of each step. Note that at
each step we perform a Mobius transformation
and we do the sign counting at the resulting polynomial.

\begin{prop}\label{prop:polymaps}
  \label{prop:poly-maps}
  Let $f=\sum_{i=0}^df_iX^i\in\Pd^\bbZ$ of bit-size $\tau$.
  \begin{itemize}[leftmargin=*]
    \item The reciprocal transformation is
      $R(f): = X^{d} f(\tfrac{1}{X}) = \sum_{k=0}^{d}f_{d-k}X^k$. Its cost is
      $\OB(1)$ and it does not alter neither the degree nor the bit-size of the
      polynomial.
    \item The homothetic transformation of $f$ by $2^k$, for a positive integer
      $k$, is
$H_k(f) = 2^{d k} f(\tfrac{X}{2^k}) = \sum_{i=0}^d{2^{k(d-i)} f_i \, X^{i}}$.
      It costs $\OB(d \, \mu(\tau + d k)) = \sOB(d\tau + d^2 k)$ and the
      resulting polynomial has bit-size $\OO(\tau + d k)$. Notice that
      $H_{-k} = R H_{k} R$.

\item The Taylor shift of $f$ by in integer $c$ is $T_c(f) = f(x+c) =
\sum_{k=0}^d{a_k x^k}$,
where $a_i = \sum_{j=i}^{d}\binom{j}{i}f_j c^{j-i}$ for $0 \leq i \leq d$. It
costs $\OB(\mu(d^2\sigma + d\tau)\log{d}) = \sOB(d^2\sigma + d\tau)$
     \cite[Corollary~2.5]{vzGGer}, where $\sigma$ is the bit-size of $c$. The
     resulting polynomial has bit-size $\OO(\tau + d\sigma)$.\eproof
\item Given a polynomial $f(x)$ of degree $d$ and bit-size $\tau$, the bit complexity of evaluating a $f$ at a rational point of bit-size $\sigma$ is $\mathrm{O}(d(\tau+\sigma))$ \cite{bz-upol-eval-11,hn-upol-eval-11}.
  \end{itemize}
\end{prop}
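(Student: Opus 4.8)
The plan is to prove the four items independently, since each concerns a different elementary transformation, and the last two are already available in the literature in exactly the stated form.

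For the reciprocal transformation $R(f)=\sum_{k=0}^d f_{d-k}X^k$ there is essentially nothing to compute: the new coefficient vector is the reversal of the old one, so no arithmetic is performed on the coefficients and the cost is that of a relabeling, i.e.\ $\OB(1)$ in the accounting used here. Since the multiset of coefficients is unchanged, neither the degree nor the bit-size changes (if $f_d=0$ the degree can only drop, which does not affect the bound).

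For the homothetic transformation I would start from the closed form $H_k(f)=2^{dk}f(X/2^k)=\sum_{i=0}^d 2^{k(d-i)}f_iX^{i}$, which is immediate by substituting $X\mapsto X/2^k$ and clearing the common denominator $2^{dk}$. The $i$-th new coefficient is $f_i$ left-shifted by $k(d-i)$ bits; the worst shift is $kd$, so every coefficient has bit-size $\OO(\tau+dk)$, giving the claimed output bit-size. Producing all $d+1$ of them amounts to $d+1$ multiplications of integers of bit-size $\OO(\tau+dk)$ (a power of two times $f_i$), hence $\OB(d\,\mu(\tau+dk))=\sOB(d\tau+d^2k)$ after expanding $\mu(n)=\OB(n\log n)$ and absorbing the logarithm into $\sOB$. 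The identity $H_{-k}=R\,H_k\,R$ is a one-line substitution: applying $R$, then $H_k$, then $R$ again to $f=\sum f_iX^i$ sends $X\mapsto 1/X$, rescales, and inverts again, and collecting the powers of $2$ recovers $f(2^kX)$ up to the global scalar $2^{-dk}$, which is immaterial for sign-variation counting; this is the route one uses when the exponent is negative so that only non-negative shifts occur.

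For the Taylor shift $T_c(f)=f(X+c)$ the coefficient formula $a_i=\sum_{j=i}^d\binom{j}{i}f_jc^{j-i}$ is the binomial theorem, and the bit-size bound $\OO(\tau+d\sigma)$ follows since $|a_i|\le 2^d\max_j|f_j|\,|c|^{d}$. For the complexity I would simply invoke the asymptotically fast Taylor-shift algorithm of von zur Gathen and Gerhard \cite[Corollary~2.5]{vzGGer}, which yields $\OB(\mu(d^2\sigma+d\tau)\log d)=\sOB(d^2\sigma+d\tau)$; likewise the evaluation bound $\mathrm{O}(d(\tau+\sigma))$ at a rational point of bit-size $\sigma$ is quoted verbatim from \cite{bz-upol-eval-11,hn-upol-eval-11}. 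There is no genuine obstacle here: this is a bookkeeping lemma, and the only point that needs care is consistency of the bit-size accounting across compositions of these maps — so that when, inside the \descartes recursion, a homothety is followed by a Taylor shift, the bit-size fed into the next cost estimate is exactly the one proved here — together with carrying the $\log d$ overhead of the fast Taylor-shift routine correctly inside the $\sOB$ notation.
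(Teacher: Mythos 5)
The paper gives no proof of this proposition: it ends with \eproof and is treated as a routine bookkeeping lemma, with the Taylor-shift and evaluation costs simply cited from the literature and the reciprocal and homothetic transformations left as obvious. Your write-up supplies exactly the expected elementary verifications and is correct; you also rightly note that $H_{-k}=R H_k R$ holds only up to the global scalar $2^{dk}$, a detail the paper's terse statement elides but which is immaterial for sign-variation counting.
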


\begin{remark}
  There is no restriction on working with open intervals since we
consider an integer polynomial and we can always evaluate it at the endpoints.
  Moreover, to isolate all the real roots of $f$ it suffices to have a routine to
  isolate the real roots in $(-1, 1)$; using the map $x \mapsto 1/x$ we can
  isolate the roots in $(-\infty, -1)$ and  $(1, \infty)$.
\end{remark}

\subsubsection{Bounds on the number of sign variations}
\label{sec:Desc-termination}

For this subsection we consider $f=\sum_{i=0}^df_iX^i\in\Pd$
to be a polynomial with real coefficients, not necessarily integers.
To establish the termination and estimate the bit complexity of \descartes
we need to introduce the Obreshkoff area and lens. Our presentation follows closely \cite{sm-anewdsc,KM-newDesc-06,emt-lncs-2006}.

\begin{figure}[h]
  \centering
  \includegraphics[scale=0.4]{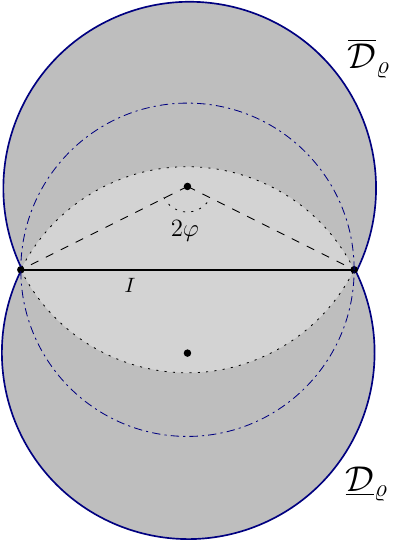}
\caption{\rm{Obreshkoff discs, lens (light grey), and area  (light grey and grey)
for an interval $I$.}}
  \label{fig:Obreshkoff}
\end{figure}

Consider $0 \leq \alpha \leq d$ and a real open interval $J = (a, b)$. The
\emph{Obreshkoff discs}, $\uObD_{\alpha} $ and $\dObD_{\alpha}$, are discs with
boundaries going through the endpoints of $J$. Their centers are above,
respectively below, $J$ and they form an angle $\varphi = \frac{\pi}{\alpha+2}$
with the endpoints of $I$. Its diameter is
$\wid(J)/ \sin(\frac{\pi}{\alpha+2})$.

The \emph{Obreshkoff area} is $\ObA_{\varrho}(J) =
\mathsf{interior}(\uObD_{\alpha} \,\cup\, \dObD_{\alpha})$;
it appears with grey color in Fig.~\ref{fig:Obreshkoff}.
The \emph{Obreshkoff lens} is $\ObL_{\alpha}(J) =
\mathsf{interior}(\uObD_{\alpha} \,\cap\, \dObD_{\alpha})$;
it appears in light-grey color in Fig.~\ref{fig:Obreshkoff}.
If it is clear from the context, then we omit $I$ and we write $\ObA_{\alpha}$
and  $\ObL_{\alpha}$,
instead of $\ObA_{\alpha}(J)$ and $\ObL_{\alpha}(J)$.
It holds that
$\ObL_d \subset \ObL_{d-1} \subset \cdots \subset \ObL_1 \subset \ObL_0$
and
$\ObA_0 \subset \ObA_1 \subset \cdots \subset \ObA_{d-1} \subset \ObA_d$.

The following theorem shows the role of complex roots in the
control of the number of variation signs.

\begin{theo}[\cite{Obreshkoff-book}]
  \label{thm:Obr-signs}
  Consider $f\in\Pd$ 
  and real open interval $J=(a,b)$.
If the Obreshkoff lens $\ObL_{d - k}$ contains at least $k$ roots (counted with
multiplicity) of $f$,
  then $k \leq \var(f,J)$.
If the Obreshkoff area $\ObA_{k}$ contains at most $k$ roots (counted with
multiplicity) of $f$,
  then $\var(f,J) \leq k$.
  Especially
  \begin{equation*}\tag*{\qed}
    \# \{ \text{roots of }f\text{ in } \ObL_d \}
    \le  \var(f,J) \le
    \# \{ \text{roots of }f\text{ in } \ObA_d \}.
  \end{equation*}
\end{theo}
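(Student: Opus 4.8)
The plan is to transport the problem to the positive real ray via a Möbius substitution, where it becomes a statement comparing sign variations with the number of roots in thin sectors, and then to derive both inequalities from the classical Obreshkoff inequality in that form.

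\emph{Reduction via a Möbius map.} By definition $\var(f,J)$ is the number of sign variations of $g(X):=(X+1)^d f(M(X))$, where $M(X)=\tfrac{aX+b}{X+1}$ has inverse $M^{-1}(Y)=\tfrac{b-Y}{Y-a}$; it is monotone and bijective from $(0,\infty)$ onto $J$, has real coefficients, and is conformal. A point $z$ is a root of $g$ of multiplicity $\mu$ iff $M(z)$ is a root of $f$ of multiplicity $\mu$, and positive roots of $g$ correspond to roots of $f$ in $J$. Since $M^{-1}$ sends the two circles through $a$ and $b$ bounding the Obreshkoff discs (which meet $\bbR$ at angle $\varphi=\tfrac{\pi}{\alpha+2}$) to the two lines through the origin at angles $\pm\varphi$ to $\bbR$, it carries $\ObL_{\alpha}(J)$ onto the open sector $S_\alpha:=\{\,w\in\bbC : |\arg w|<\tfrac{\pi}{\alpha+2}\,\}$ and $\ObA_{\alpha}(J)$ onto $\{\,w:|\arg w|<\pi-\tfrac{\pi}{\alpha+2}\,\}$, i.e.\ onto the complement of the closed sector of half-angle $\tfrac{\pi}{\alpha+2}$ about $-\bbR_{\ge 0}$. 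Writing $n:=\deg g\le d$, the theorem becomes: (i) if $g$ has at least $k$ roots in $S_{d-k}$ then $\var(g)\ge k$; (ii) if $g$ has at most $k$ roots outside the closed $\tfrac{\pi}{k+2}$-sector about $-\bbR_{\ge 0}$ then $\var(g)\le k$.

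\emph{Core inequality, and how the two bounds follow.} The statement I would isolate is: for any real $h$ of degree $n$ with $h(0)\ne 0$ and $0\le m\le n$, if $h$ has at least $m$ roots (with multiplicity) in $S_{n-m}$, then $\var(h)\ge m$. Part (i) follows with $h=g$, $m=k$: since $n\le d$ we have $S_{d-k}\subseteq S_{n-k}$, so the hypothesis supplies $k$ roots of $g$ in $S_{n-k}=S_{n-m}$. For part (ii), apply the core inequality to $h(w):=g(-w)$: the hypothesis says $g$, hence $h$, has at least $n-k$ roots in the closed $\tfrac{\pi}{k+2}$-sector about $+\bbR_{\ge 0}$; a limiting argument (an arbitrarily small real perturbation preserving the count) places them in the open sector $S_{k}=S_{n-(n-k)}$, so $\var(g(-w))\ge n-k$, and the elementary identity $\var(g)+\var(g(-w))\le n$ gives $\var(g)\le k$. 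The displayed ``Especially'' bounds are the instances $k=\#\{\text{roots of }f\text{ in }\ObL_d\}$ of the lower bound (using $\ObL_d\subseteq\ObL_{d-k}$) and $k=\min\{d,\#\{\text{roots of }f\text{ in }\ObA_d\}\}$ of the upper bound (using $\ObA_k\subseteq\ObA_d$), together with Theorem~\ref{thm:Desc-rule-of-sign}.

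\emph{Proving the core inequality — the main obstacle.} This is where the real work lies. I would argue by a continuity/deformation argument: the $m$ roots of $h$ in $S_{n-m}$ are real positive singletons together with complex conjugate pairs $r\enumber^{\pm i\theta}$ with $|\theta|<\tfrac{\pi}{n-m+2}$; deform $h$ continuously, sliding each such pair to the positive real axis ($\theta\to 0$) while freezing the remaining $n-m$ roots, so that at the end the polynomial has at least $m$ positive roots and thus sign variation count $\ge m$ by Descartes' rule. The crux — and the step I expect to be hardest — is to show that $\var$ does not increase along this deformation, so that $\var(h)$ dominates the final count; one must analyse the instants at which a coefficient vanishes and verify that, as long as the moving roots stay in the open sector $S_{n-m}$, no sign variation is created. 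This is exactly why the half-angle $\tfrac{\pi}{n-m+2}$ — calibrated against the $n-m$ frozen roots — is the right threshold. An alternative, purely analytic route is to express both $\var(h)$ and the number of roots of $h$ in $S_{n-m}$ as winding numbers of $h$ along suitable contours (a large arc of $\partial S_{n-m}$, respectively the ray $-\bbR_{\ge 0}$) and to bound the angular variation of $h$ contributed by the $n-m$ roots outside $S_{n-m}$, the sector angle being precisely what keeps each such contribution from cancelling a sign variation. This quantitative estimate is Obreshkoff's; I would either cite it or reconstruct it through the deformation argument, the rest being bookkeeping.
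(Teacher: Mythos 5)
The paper does not actually prove \Cref{thm:Obr-signs}: it is stated with \verb|\eproof| and attributed directly to \cite{Obreshkoff-book} (and the surrounding exposition follows \cite{sm-anewdsc,KM-newDesc-06}), so there is no argument in the paper to compare against. Your proposal goes further than the paper in one useful respect and no further than it in another. The useful part: the Möbius normalization is correct --- $M^{-1}$ is conformal and sends the two Obreshkoff circles through $a,b$ to the pair of lines through the origin at angles $\pm\pi/(\alpha+2)$, so $\ObL_\alpha\mapsto S_\alpha$ and $\ObA_\alpha$ maps to the complementary wide sector --- and the observation that the upper bound is the lower bound applied to $g(-X)$, via $\var(g)+\var(g(-X))\le \deg g$ and a small real perturbation to push the boundary roots into the open sector (legitimate, since $\var$ is non-decreasing under generic perturbation), is a genuinely nice reduction of the two-sided statement to a single sector inequality. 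Where you stop at the same point as the paper: the ``core inequality'' you isolate \emph{is} Obreshkoff's theorem in its classical sector form, and the deformation sketch does not establish it. In particular, the step you yourself flag --- ``$\var$ does not increase as a conjugate pair slides to the positive real axis while the other $n-m$ roots are frozen'' --- is exactly the nontrivial quantitative content; $\var$ is not continuous in the coefficients and can jump upward whenever a coefficient crosses zero, and ruling out such jumps under the half-angle constraint $\pi/(n-m+2)$ is precisely Obreshkoff's estimate, not bookkeeping. So, net: your setup is correct and adds a clean sector normalization the paper doesn't spell out, but the proof still terminates in a citation of \cite{Obreshkoff-book}, which is where the paper starts. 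That is an acceptable outcome for a classical result the paper itself only cites, but you should be explicit that the core inequality is being quoted, not derived.
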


This theorem together with the subadditive property
of Descartes' rule of signs (Thm.~\ref{thm:desc-subadd})
shows that the number of complex roots in the Obreshkoff areas
controls the width of the subdivision tree of \descartes.

\begin{theo}
  \label{thm:desc-subadd}
  Consider a real polynomial $f\in\Pd$.
Let $J$ be a real interval and $J_1, \dots, J_n$ be disjoint open subintervals
of $J$.
  Then, it holds $\sum_{i=1}^n \var(f, J_i) \leq \var(f,J)$.\eproof
\end{theo}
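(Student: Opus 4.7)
The plan is to prove this classical subadditivity by combining a Bernstein-basis interpretation of $\var(f,J)$ with the well-known variation-diminishing property of de Casteljau subdivision, and then inducting on $n$.

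First, I would recall the algebraic interpretation: for a real open interval $J=(a,b)$, the M\"obius substitution $X\mapsto \tfrac{aX+b}{X+1}$ used to define $\var(f,J)$ in \Cref{sec:descartes} turns the monomial expansion into a Bernstein expansion. Explicitly, writing $f(x)=\sum_{i=0}^d\beta_i^J B_{i,d}^J(x)$ with $B_{i,d}^J(x)=\binom{d}{i}(x-a)^i(b-x)^{d-i}/(b-a)^d$, one checks that $\var(f,J)$ equals $V(\beta_0^J,\ldots,\beta_d^J)$, the number of sign changes in the Bernstein coefficient sequence. This gives a clean algebraic object to manipulate across different intervals.

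Next I would establish two auxiliary facts. \textbf{(i) Monotonicity:} if $J'\subseteq J$, then $\var(f,J')\leq \var(f,J)$; this follows because the change-of-basis matrix from the Bernstein basis on $J$ to that on $J'$ is totally positive, and totally positive matrices are variation-diminishing by Schoenberg's theorem. \textbf{(ii) Bisection:} for $a<m<b$, $\var(f,(a,m))+\var(f,(m,b))\leq \var(f,J)$; this is precisely the variation-diminishing property of the de Casteljau subdivision algorithm, which expresses the Bernstein coefficients of $f$ on $(a,m)$ and $(m,b)$ as iterated convex combinations (corner cuts) of the original Bernstein coefficients on $J$, and the corner-cutting operation is known to split sign variations subadditively.

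Given (i) and (ii), the theorem follows by induction on $n$. For $n=1$ it is (i). For $n\geq 2$, order the disjoint open subintervals left-to-right as $J_1,\ldots,J_n$ and pick $m$ with $\sup J_1\leq m\leq \inf J_2$. Applying (ii) at $m$ to $J=(a,b)$ gives $\var(f,(a,m))+\var(f,(m,b))\leq \var(f,J)$. Since $J_1\subseteq (a,m)$, fact (i) bounds $\var(f,J_1)\leq \var(f,(a,m))$, and the inductive hypothesis applied to the disjoint open subintervals $J_2,\ldots,J_n$ of $(m,b)$ gives $\sum_{i\geq 2}\var(f,J_i)\leq \var(f,(m,b))$. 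Summing yields $\sum_{i=1}^n \var(f,J_i)\leq \var(f,J)$.

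The main obstacle is fact (ii): while classical, its rigorous justification requires either careful bookkeeping with the de Casteljau recursion or invocation of total positivity of the associated Pascal-like transition matrix. In the write-up I would cite a standard CAGD/subdivision reference (e.g., the Lane--Riesenfeld/Farouki--Rajan treatment, or the exposition in Krandick--Mehlhorn) rather than re-proving this subdivision property from scratch; the deterministic content beyond that citation is purely the bookkeeping in the inductive step above.
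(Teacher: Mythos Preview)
Your proposal is correct and follows the standard Bernstein-basis/de~Casteljau route used in the subdivision literature (and in particular in the Krandick--Mehlhorn paper \cite{KM-newDesc-06} that this paper already cites). Note, however, that the paper does not actually prove this theorem: the trailing \texttt{\textbackslash eproof} is just a \textsc{qed} symbol, and the statement is quoted as a known fact without argument. So there is no ``paper's own proof'' to compare against; you have supplied a self-contained justification where the paper is content to cite.

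One small remark on your write-up: in the inductive step you pick $m$ with $\sup J_1\le m\le \inf J_2$ and then apply the bisection inequality to $(a,m)$ and $(m,b)$. This is fine because the $J_i$ are \emph{open}, so $J_1\subseteq(a,m)$ and $J_2,\ldots,J_n\subseteq(m,b)$ even when $\sup J_1=\inf J_2=m$; you may want to say this explicitly. Also, your fact~(i) (monotonicity for $J'\subseteq J$) is not strictly needed as a separate ingredient once you have~(ii): one can always split $(a,m)$ further at the endpoints of $J_1$ and discard the nonnegative variation counts of the flanking pieces. Either way the argument goes through.
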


Finally, to control the depth of the subdivision tree of \descartes we use the
one and two circle theorem~\cite{AleGalu-vincent-98,KM-newDesc-06}.
We present a variant
based on the $\varepsilon$-real separation of $f$, $\Delta_{\varepsilon}\uR(f)$
(Definition~\ref{defi:realseparation}).

\begin{theo}\label{thm:sepdepthbound}
Consider $f\in\Pd$, an interval $J\subseteq (-1,1)$ and $\varepsilon>0$. If
\[
2 \, \wid(J)\leq \min\{\Delta_{\varepsilon}\uR(f),\varepsilon\} ,
\]
then either $\var(f, J)=0$ (and $J$ does not contain any real root), or $\var(f, J)=1$
(and $J$ contains exactly one real root).
\end{theo}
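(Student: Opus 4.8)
The plan is to combine the two implications of the Obreshkoff theorem (Theorem~\ref{thm:Obr-signs}) with a counting argument: if $J$ is short enough relative to the $\varepsilon$-real separation, then the largest Obreshkoff area $\ObA_d(J)$ can contain at most one root of $f$, and the largest Obreshkoff lens $\ObL_d(J)$ contains every root of $f$ that lies in $J$ itself. From the first fact, Theorem~\ref{thm:Obr-signs} gives $\var(f,J)\le 1$; from the second, together with Descartes' rule of signs (Theorem~\ref{thm:Desc-rule-of-sign}, which forces $\var(f,J)$ and the number of real roots in $J$ to have the same parity), we get that $\var(f,J)=1$ exactly when $J$ contains a (necessarily unique) real root and $\var(f,J)=0$ otherwise. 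So the whole argument reduces to two geometric containment estimates on the Obreshkoff discs.

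First I would make the geometry explicit. For the angle $\varphi=\pi/(d+2)$, the Obreshkoff discs $\uObD_d,\dObD_d$ have diameter $\wid(J)/\sin\varphi$, so the area $\ObA_d(J)=\mathsf{interior}(\uObD_d\cup\dObD_d)$ is contained in a disc (or a thin neighbourhood of $J$) whose diameter is $O(\wid(J)/\sin\varphi)=O(d\,\wid(J))$; more precisely, every point of $\ObA_d(J)$ is within distance $c\cdot d\cdot\wid(J)$ of the real segment $\overline{J}$ for an absolute constant $c$. The key point is that under the hypothesis $2\,\wid(J)\le\varepsilon$ this neighbourhood sits inside $I_\varepsilon$ (recall $I_\varepsilon=\{z:\dist(z,I)\le\varepsilon\}$, from Definition~\ref{defi:realseparation}), and under $2\,\wid(J)\le\Delta_\varepsilon\uR(f)$ its diameter is strictly smaller than $\Delta_\varepsilon\uR(f)$. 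Hence $\ObA_d(J)\subseteq I_\varepsilon$ and $\ObA_d(J)$ has diameter $<\Delta_\varepsilon\uR(f)$. Since $\Delta_\varepsilon\uR(f)$ is by definition the minimum distance between two distinct roots of $f$ lying in $I_\varepsilon$ (and is positive here, as $2\,\wid(J)\le\Delta_\varepsilon\uR(f)$ forces $\Delta_\varepsilon\uR(f)>0$, so $f$ has no double root in $I_\varepsilon$), a set inside $I_\varepsilon$ of diameter $<\Delta_\varepsilon\uR(f)$ can contain at most one root of $f$, counted with multiplicity. Therefore $\ObA_d(J)$ contains at most one root of $f$, and Theorem~\ref{thm:Obr-signs} yields $\var(f,J)\le 1$.

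It remains to pin down which of $0$ or $1$ occurs. If $J$ contains no real root of $f$, then by Descartes' rule $\var(f,J)\equiv 0\pmod 2$, and combined with $\var(f,J)\le 1$ this gives $\var(f,J)=0$. If $J$ contains a real root, that root is unique (two distinct real roots in $J$ would be at distance $<\wid(J)\le\tfrac12\Delta_\varepsilon\uR(f)$, contradicting separation, since $J\subseteq I\subseteq I_\varepsilon$), and it lies in $\overline{J}\subseteq\ObL_d(J)$ because the Obreshkoff lens always contains the base interval; so $\ObL_d(J)$ contains at least one root, and Theorem~\ref{thm:Obr-signs} gives $\var(f,J)\ge 1$, hence $\var(f,J)=1$. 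The main obstacle is purely the bookkeeping in the first step: getting a clean, correct bound on the diameter of $\ObA_d(J)$ in terms of $\wid(J)$ and $\varphi=\pi/(d+2)$ (using $\sin(\pi/(d+2))\ge 2/(d+2)$ for $d\ge 1$, say) and confirming that the factor-$2$ slack in the hypothesis $2\,\wid(J)\le\min\{\Delta_\varepsilon\uR(f),\varepsilon\}$ is exactly what is needed so that the enlarged region still fits inside $I_\varepsilon$ and stays below the separation threshold. Everything after that is immediate from Theorems~\ref{thm:Obr-signs}, \ref{thm:Desc-rule-of-sign} and Definition~\ref{defi:realseparation}.
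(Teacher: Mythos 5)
The overall architecture is sound (bound $\var(f,J)$ above via an Obreshkoff containment, then use Descartes parity to pin it down to $0$ or $1$), and the second half of your argument — the parity step and the use of the lens — is correct. However, there is a genuine gap in the first step, and it is exactly the place you flag as ``just bookkeeping.''

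You apply Theorem~\ref{thm:Obr-signs} with $k=d$, so you need to show $\ObA_d(J)\subseteq I_\varepsilon$ and that $\ObA_d(J)$ has diameter below $\Delta_\varepsilon\uR(f)$. But the Obreshkoff discs $\uObD_d,\dObD_d$ have diameter $\wid(J)/\sin\bigl(\tfrac{\pi}{d+2}\bigr)=\Theta\bigl(d\,\wid(J)\bigr)$, and (as you correctly note) every point of $\ObA_d(J)$ can be as far as $c\,d\,\wid(J)$ from $\overline{J}$. Under the hypothesis $2\,\wid(J)\le\varepsilon$ this distance is on the order of $d\varepsilon/2$, which is \emph{much larger} than $\varepsilon$ for moderate $d$. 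So $\ObA_d(J)\not\subseteq I_\varepsilon$ in general, the separation bound $\Delta_\varepsilon\uR(f)$ says nothing about roots in $\ObA_d(J)\setminus I_\varepsilon$, and the ``at most one root in $\ObA_d(J)$'' conclusion does not follow. The factor $2$ in the hypothesis is not calibrated for $\ObA_d$; it is calibrated for the small, fixed-index Obreshkoff regions.

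The fix (and the route the paper actually takes, via the one- and two-circle theorems of \cite{KM-newDesc-06,AleGalu-vincent-98}) is to apply Theorem~\ref{thm:Obr-signs} with the small indices $k=0$ and $k=1$ only. The disc $\ObA_0(J)$ has $J$ as a diameter, and $\ObA_1(J)$ is the union of the two circles through the endpoints of $J$ centered at $\xmid(J)\pm\img\,\wid(J)/(2\sqrt{3})$; both have diameter at most $2\,\wid(J)/\sqrt{3}<2\,\wid(J)$ and every point of $\ObA_1(J)$ lies within distance $\sqrt{3}\,\wid(J)/2<\wid(J)\le\varepsilon/2$ of $\overline{J}\subset I$. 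Hence $\ObA_1(J)\subseteq I_\varepsilon$ and $\mathrm{diam}\,\ObA_1(J)<\Delta_\varepsilon\uR(f)$, so $\ObA_1(J)$ contains at most one root; applying Theorem~\ref{thm:Obr-signs} with $k=1$ gives $\var(f,J)\le 1$. From there your parity argument goes through verbatim: if $J$ has no real root, $\var(f,J)\equiv 0\pmod 2$ forces $\var(f,J)=0$; if $J$ has a real root, it is unique by separation, $J\subset\ObL_{d-1}(J)$ (or simply the parity) gives $\var(f,J)\ge 1$, so $\var(f,J)=1$. In short: replace $\ObA_d$ by $\ObA_1$ throughout the first step, and the proof is correct.
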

\begin{proof}
The proof follows the same application of the one and two circle theorems as in
the proof of \cite[Proposition~6.4]{TCTcubeI-journal}.
\end{proof}

\subsubsection{Complexity estimates for \descartes}
\label{sec:Descartes-complexity-exp}
We give a high-level overview of the proof ideas of this section before going
into technical details. The process of \descartes corresponds to a binary tree
and we control its depth using the real condition number through ~\Cref{theo:condbasedseparation} and~\Cref{thm:sepdepthbound}.
To bound the width of the \descartes' tree we use the Obreskoff areas and the
number of complex roots in them (Theorem~\ref{thm:Obr-signs}).
By combining these two bounds, we control the size of the tree and so we obtain
an instance-based complexity estimate. To turn this instance-based complexity
estimate into an expected (or smoothed) analysis estimation, we use
\Cref{theo:realglobalconditionnumberprob_new}, \Cref{theo:probbopundroots}, \Cref{cor:realglobalexpectations}, and \Cref{cor:probboundroots}.
\paragraph{Instance-based estimates}
\begin{theo}
 \label{thm:Descartes-steps}
If $f\in\Pd^\bbZ$, then,  using  
\descartes,  the number of subdivision steps to isolate the real roots in $I =
(-1, 1)$
	is 	 
	\[
	\sO(\varrho(f)^2\log(\condR(f)).
	\] 
The bit complexity of the algorithm is
  \[
    \sOB(d \tau \varrho(f)^2 \log \condR(f)+ d^2 \varrho(f)^2 \log^2\condR(f)).
  \]
\end{theo}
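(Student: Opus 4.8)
The plan is to bound separately the depth and the width of the subdivision tree of \descartes, multiply them to get the number of nodes, and then multiply by the worst-case bit cost of a single node. For the \textbf{depth}, I would argue as follows. By \Cref{thm:sepdepthbound}, a node corresponding to an interval $J\subseteq(-1,1)$ becomes a leaf as soon as $2\wid(J)\leq\min\{\Delta_\varepsilon\uR(f),\varepsilon\}$ for a suitable $\varepsilon$. Taking $\varepsilon$ just below $\tfrac{1}{\enumber d\condR(f)}$, \Cref{theo:condbasedseparation} gives $\Delta_\varepsilon\uR(f)\geq\tfrac{1}{12d\condR(f)}$, so $\min\{\Delta_\varepsilon\uR(f),\varepsilon\}\geq\tfrac{1}{12\enumber d\condR(f)}$. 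Since the width halves at each level and starts at $\wid(J_0)=2$, the tree has depth $\Oh(\log(d\condR(f)))=\sO(\log\condR(f))$. (For the non-square-free case one first replaces $f$ by its square-free part, which only changes things by logarithmic factors; this is where the remark after \Cref{theo:main-Desc} is used.)

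For the \textbf{width}, I would fix a layer of the tree and bound the number of nodes $J$ at that layer with $\var(f,J)\geq 2$. By \Cref{thm:Obr-signs}, such a $J$ forces at least two roots of $f$ (with multiplicity) in the Obreshkoff lens $\ObL_d(J)$; and by the subadditivity theorem \Cref{thm:desc-subadd} applied to the disjoint intervals at a fixed layer, the number of such nodes is at most the number of complex roots of $f$ lying in the union of the corresponding Obreshkoff areas. The key geometric fact — this is the step I expect to be the main technical obstacle — is that, once the depth argument has kicked in (i.e.\ at layers below $\sO(\log\condR(f))$, or really from the start), the Obreshkoff areas of all intervals appearing at any single layer are covered by the fixed family of ``hyperbolic'' discs $\{\bbD(\xi_n,\rho_n)\}$ defining $\Omega_d$ in \eqref{eq:Omega_N}; hence each layer contributes at most $\varrho(f)$ branching nodes. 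One must check that the aspect ratio and position of the Obreshkoff discs for a dyadic subinterval of $(-1,1)$ of width $2^{-j}$ at distance $\Theta(1-2^{-k})$ from the origin indeed fit inside $\bbD(\xi_k,\rho_k)$ (or its neighbours) — this is the content of the Obreshkoff/Moroz-style covering and is the crux of why the specific centers and radii in \eqref{eq:Disc-center}--\eqref{eq:Disc-radius} were chosen. Combining, the number of nodes is $\Oh(\text{depth}\times\text{width})=\sO(\varrho(f)^2\log\condR(f))$, but one should be slightly more careful: the width bound $\varrho(f)$ holds per layer, so the total is $\sum_{\text{layers}}(\text{nodes in layer})\leq \text{depth}\cdot\varrho(f) = \sO(\varrho(f)\log\condR(f))$ leaves, while internal nodes on each root-to-leaf path number $\Oh(\varrho(f))$ as well, giving the stated $\sO(\varrho(f)^2\log\condR(f))$ after accounting for the fact that a single leaf can be reached through a path whose branching pattern is controlled by $\varrho(f)$.

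For the \textbf{bit complexity}, I would invoke \Cref{prop:poly-maps}: at a node corresponding to a dyadic interval of width $2^{-j}$ with $j=\Oh(\log(d\condR(f)))=\sO(\log\condR(f))$, computing $\var(f,J)$ amounts to a homothety, a reciprocal, and a Taylor shift applied to the original degree-$d$, bitsize-$\tau$ polynomial, producing a polynomial of bitsize $\Oh(\tau+dj)=\sO(\tau+d\log\condR(f))$ and costing $\sOB(d\tau+d^2 j)=\sOB(d\tau+d^2\log\condR(f))$ per step (the sign-variation count itself is cheaper). Multiplying the per-step cost $\sOB(d\tau+d^2\log\condR(f))$ by the number of steps $\sO(\varrho(f)^2\log\condR(f))$ yields
\[
  \sOB\bigl(d\tau\,\varrho(f)^2\log\condR(f)+d^2\,\varrho(f)^2\log^2\condR(f)\bigr),
\]
as claimed. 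The only subtlety to watch is that the bitsize growth along a path is bounded by the depth (so the ``$dj$'' term never exceeds $d\cdot\sO(\log\condR(f))$), which is exactly why the second term carries $\log^2\condR(f)$ rather than a single power.
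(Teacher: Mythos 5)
Your depth bound and bit-cost-per-step bound are correct and follow the paper's route exactly (via \Cref{thm:sepdepthbound} and \Cref{theo:condbasedseparation} for the depth, and \Cref{prop:poly-maps} for the per-step cost). The gap is in the width argument.

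You write that by subadditivity ``the number of such nodes is at most the number of complex roots of $f$ lying in the union of the corresponding Obreshkoff areas.'' This does not follow. \Cref{thm:desc-subadd} gives $\sum_J \var(f,J)\leq \var(f,I_0)$, which has no a priori relation to $\varrho(f)$; and the statement you need --- that $\sum_J \var(f,J)$ is bounded by the number of roots in $\Omega_d$ --- would require the relevant Obreshkoff regions of the disjoint intervals at a fixed layer to have bounded overlap. But the Obreshkoff lens $\ObL_d(J)$ that you invoke is the wrong object here (\Cref{thm:Obr-signs} gives a \emph{lower} bound on $\var$ from roots in a lens; the \emph{upper} bound uses the Obreshkoff \emph{area} $\ObA_k$), and even for the area $\ObA_\alpha(J)$ the diameter scales like $\alpha\cdot\wid(J)$, so for $\alpha$ of order $\varrho$ or $d$ the areas of neighbouring same-layer intervals overlap heavily, not with multiplicity $\Oh(1)$. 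This is exactly why the $\varrho^2$, not $\varrho$, appears.

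The paper's width argument is structurally different: it cuts each $J_n$ into $\Oh(\varrho)$ subintervals $J_{n,\ell}$ small enough that $\ObA_\varrho(J_{n,\ell})\subset\Omega_d$, whence $\var(f,J_{n,\ell})\leq\varrho$ by \Cref{thm:Obr-signs}; then, \emph{within} each subtree rooted at a $J_{n,\ell}$, subadditivity caps the layer-width at $\Oh(\varrho)$, and summing over the $\Oh(\varrho)$ roots $J_{n,\ell}$ gives $\Oh(\varrho^2)$. Your closing attempt to recover the $\varrho^2$ factor (``a single leaf can be reached through a path whose branching pattern is controlled by $\varrho(f)$'') is not a valid argument: the node count of a binary subdivision tree is always at most $(\text{max layer width})\times(\text{depth})$, so if your claimed per-layer width of $\Oh(\varrho)$ were actually established you would get $\sO(\varrho\log\condR(f))$, and the extra factor of $\varrho$ would have no source. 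You need to either replace the subadditivity step with the paper's $J_{n,\ell}$-decomposition, or give a genuine bounded-overlap argument for a suitable Obreshkoff region and accept that this gives a different (and would need to be proved) constant-multiplicity covering claim.
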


The definition of the real global condition number, $\condR(f)$,
appears in~\eqref{eq:condR} and 
the definition of the number of roots of $f$ in a family of hyperbolic discs, 
$\varrho(f)$, appears in~\eqref{eq:Omega_N}.

\begin{figure}[h]
  \centering
  \begin{subfigure}[b]{0.5\textwidth}
    \centering
    \includegraphics[scale=0.65]{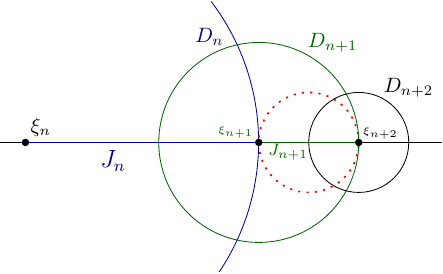}
    
        \label{fig:covering-discs}
  \end{subfigure}%
  \hfill
  \begin{subfigure}[b]{0.5\textwidth}
    \centering
    \includegraphics[scale=0.65]{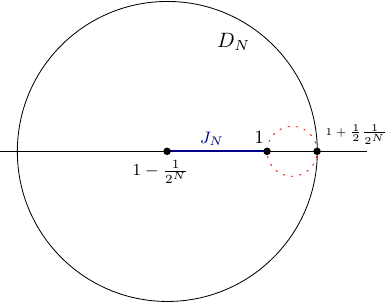}
    \label{fig:covering-last-disc}
  \end{subfigure}
  \label{fig:covering-discs-all}
\caption{\rm{Covering discs of the interval $I = (0, 1)$. (left) Three covering
discs, $D_n, D_{n+1}, D_{n+2}$. 
(right) The (red) dotted circle is the
auxiliary disc that we  ensure  is contained in $D_{n+1} \setminus D_{n}$.}}
\end{figure}

\begin{proof}
We consider the number of steps to isolate the real roots in $I = (-1, 1)$. Let
$N=\ceil{\log d}$ and $\varrho=\varrho(f)$ the number of complex roots in
$\Omega_d$. Recall that $\Omega_d$ is the union of the discs
$D_n:=\bbD(\xi_n,\rho_n):=\xi_n+\rho_n\bbD$, where $\abs{n} \leq N$; see
\eqref{eq:Disc-center} and \eqref{eq:Disc-radius} for the concrete formulas,
and that it contains the interval $I$.

The discs partition $I$ into the $2N+1$ subintervals $J_n := [\xi_n,
\xi_{n+1}]$ (or $J_n := [\xi_n, \xi_{n-1}]$ if $n \leq 0$). Note that $J_n$ is
the union of $3$ intervals of size $1/2^{n+3}$. Because of this, there is a
binary subdivision tree of $I$ of size $\Oh(\log^2 d)$ such that every of its
intervals is contained in some $J_n$. Thus, if we bound the width of the
subdivision tree of \descartes starting at each $J_n$ by $w$, then the width of
the subdivision tree of \descartes starting at $I$ is bounded by $\Oh(w\log^2
d+\log^2 d)$.

We focus on intervals $J_n$ for $n \geq 0$; similar arguments apply for $n \geq
0$. We consider two cases: $n<N$ and $n=N$.

\noindent
\emph{Case $n<N$.} It holds $\wid(J_n) = \rho_n = 3/2^{n+3}$. For each $J_n$,
assume that we perform a number of subdivision steps
to obtain intervals, say $J_{n, \ell}$, with
$\wid(J_{n,\ell}) = 2^{-\ell}$. We
choose $\ell$ so that the corresponding Obreshkoff areas,
$\ObA_{\varrho}(J_{n,\ell})$, are inside $\Omega_d$. In particular, we ensure
that the Obreshkoff areas related to $J_{n,\ell}$ lie in $D_{n+1}$.
  
The diameter of the Obreshkoff discs,
$\uObD_{\varrho}(J_{n,\ell})$ and $\dObD_{\varrho}(J_{n,\ell})$, 
is $\wid(J_{n,\ell})/\sin\tfrac{\pi}{\varrho + 2}$.
For every $\ObA_{\varrho}(J_{n,\ell})$ to be in $D_{n+1}$ and hence inside
$\Omega_d$,
it suffices that a disc with diameter 
$2 \,\wid(J_{n,\ell})/\sin\tfrac{\pi}{\varrho + 2}$,
that has its center in the interval $[\xi_n, \xi_{n+1}]$
and touches the right endpoint of $J_n$, to be
inside $D_{n+1} \setminus D_n$.
This is the worst case scenario: a disc big enough that contains
$\ObA_{\varrho}(J_{n,\ell})$ and lies $D_{n+1}$. This auxiliary disc is the
dotted (red) disc in Fig.~2~(left).
It should be that 
  \[
    2\, \wid(J_{n,\ell})/ {\sin\tfrac{\pi}{\varrho + 2}} \leq 
    2\,\rho_{n+1} = 3/2^{n+3}.
  \]
Taking into account that  $\wid(J_{n,\ell}) = 2^{-\ell}$ and
  \[
    \sin\tfrac{\pi}{\varrho + 2}
    > \sin\tfrac{1}{\varrho}
    \geq {\tfrac{1}{\varrho}} / {\sqrt{1 + \tfrac{1}{\varrho^2}}}
    \geq \tfrac{1}{2 \varrho} ,
    \]
we deduce
$
	2^{-\ell +1} 2 \varrho \leq 3/2^{n+3}
$
and so
$
	\ell \geq \log \frac{2^{n+5} \varrho}{3}.
$

Hence, $\wid(J_{n, \ell}) = 3/(2^{n+5} \varrho)$
and so $J_n$ is partitioned to at most
$\frac{\wid(J_n)}{\wid(J_{n,\ell})} = 4 \varrho$
(sub)intervals.
So, during the subdivision process, starting from (each) $J_n$, we obtain the
intervals
$J_{n,\ell}$ after performing
 at most $8 \varrho$ subdivision steps (this is the size of the
 complete binary tree starting from $J_n$). 
 To say it differently, the subdivision tree
 that has $J_n$ as its root and the intervals $J_{n, \ell}$ 
as leaves has depth $\ell = \lceil \log(4 \varrho) \rceil$. The same hold for
$J_{N-1}$ because
$\rho_n \leq \rho_{N}$, for all $0 \leq n \leq N-1$. 

Thus, the width of the tree starting at $J_n$ is at most $\Oh(\varrho^2)$,
because we have $\OO(\varrho)$ subintervals $J_{n,\ell}$
and for each $\var(f, J_{n,\ell}) \leq \varrho$.

\noindent
\emph{Case $n=N$.} 
Now $\wid(J_N) = 3/2^{N+1}$.
We need a slightly different argument to account for the number of subdivision
steps
for the last disc $D_N$. 
To this disc we assign the interval $J_N = [1 - 1/2^N, 1]$ with $\wid(J_N) =
1/2^N$;
see Figure~2.

We need to obtain small enough intervals $J_{N, \ell}$ of width $1/2^{\ell}$
so that corresponding Obreskoff areas,  $\ObA_{\varrho}(J_{N,\ell})$, to be
inside $D_N$.
So, we require that an auxiliary disc of diameter 
$2 \,\wid(J_{N,\ell})/\sin\tfrac{\pi}{\varrho + 2}$,
that has ts center in the interval $[1, 1/2^{N+1}]$
and touches 1 to be
inside $D_{N}$; actually inside $D_N \cap \{x \geq 1\}$; see Figure~2.
And so 
 \[
    2\, \wid(J_{N,\ell}) / {\sin\tfrac{\pi}{\varrho + 2}} \leq 
    \rho_{n+1} = 1/2^{N+1}.
  \]
This leads to $\ell \geq \log(\varrho \, 2^{N+3})$.
Working as previously, we estimate that the number of subdivisions we perform
to obtain the interval $J_{N, \ell}$ is $8\varrho$.
Also repeating the previous arguments, the width of the tree of \Descartes
starting at $J_N$ is at most $\Oh(\varrho^2)$.

By combining all the previous estimates, we conclude that the subdivision tree
of \descartes has width $\Oh(\varrho^2\log^2 d+\log^2 d)$.

To bound the depth of the subdivision tree of \descartes, consider an interval
$J_\ell$ of width $1/2^{\ell}$ obtained after $\ell+1$ subdivisions. By
theorem~\ref{thm:sepdepthbound}, we can guarantee termination if for some
$\varepsilon>0$,
\[
1/2^{\ell - 1}\leq \min\{\Delta_{\varepsilon}\uR(f),\varepsilon\}.
\]
Fix $\varepsilon=1/(\enumber d\condR(f))$. Then, by
Theorem~\ref{theo:condbasedseparation}, it suffices to hold
\[
\ell\geq 1+\log(12 d\condR(f)).
\]
Hence, the depth of the subdivision tree is at most $\Oh(\log(d\condR(f)))$.

Therefore, since the subdivision tree of \descartes has width
$\Oh(\varrho^2\log d+\log^2 d)$ and depth $\Oh(\log(d\condR(f)))$, the size
bound follows. For the bit complexity, by \cite{ESY:descartes}, see also
\cite{KM-newDesc-06,sm-anewdsc,Sagraloff-approxDesc-14,emt-lncs-2006} and
Proposition~\ref{prop:poly-maps},
the worst case cost of each step of \descartes is $\sOB(d \tau + d^2 \delta)$,
where $\delta$ is the logarithm of the highest bitsize that we compute with, or
equivalently the depth of the subdivision tree. In our case, $\delta =
\OO(\log(d \condR(f))$.
\end{proof}

\paragraph{Expected complexity estimates}

\begin{theo}
  \label{thm:Descartes-complexity-exp}
Let $\fkf \in \Pd^\bbZ$ be a random bit polynomial with $\tau(\fkf) \geq
\Omega(\log{d} +u(\fkf))$. Then, using \descartes, the expected number of
subdivision steps to isolate the real roots in $I = (-1, 1)$	is
	\[
	\sO((1+u(\fkf))^3).
	\] 
	The expected bit complexity of \descartes is
  \[
    \sOB(d \, \tau(\fkf) (1+u(\fkf))^3 + d^2(1+u(\fkf))^4 ).
  \]
If $\fkf$ is a uniform random bit polynomial of bitsize $\tau$ and $\tau =
\Omega(\log{d} +u(\fkf))$, then the expected number of subdivision steps to
isolate the real roots in $I = (-1, 1)$	is
	$
	\sO(1)
	$ 
and the expected bit  complexity becomes
  \[
    \sOB(d \tau  + d^2 ).
  \]
\end{theo}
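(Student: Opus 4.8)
The plan is to feed the instance-based bound of \Cref{thm:Descartes-steps} into the moment estimates of \Cref{cor:probboundroots} (for $\varrho(\fkf)$) and \Cref{cor:realglobalexpectations} (for $\log\condR(\fkf)$). By \Cref{thm:Descartes-steps}, whenever $\fkf$ is square-free the number of subdivision steps is $\sO(\varrho(\fkf)^2\log\condR(\fkf))$ and the bit cost is $\sOB(d\,\tau(\fkf)\,\varrho(\fkf)^2\log\condR(\fkf)+d^2\varrho(\fkf)^2\log^2\condR(\fkf))$; since the factors hidden in $\sO$ are deterministic powers of $\log d$, it suffices to estimate $\bbE[\varrho(\fkf)^2\log^{j}\condR(\fkf)]$ for $j=1,2$. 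As $\varrho(\fkf)$ and $\condR(\fkf)$ are correlated, I would decouple them by Cauchy--Schwarz,
\[
\bbE\left[\varrho(\fkf)^2\log^{j}\condR(\fkf)\right]\le \left(\bbE\,\varrho(\fkf)^4\right)^{1/2}\left(\bbE\log^{2j}\condR(\fkf)\right)^{1/2},
\]
which reduces the task to the $4$th moment of $\varrho(\fkf)$ and the $2$nd and $4$th moments of $\log\condR(\fkf)$.

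For $\varrho(\fkf)$: under $\tau(\fkf)=\Omega(\log d+u(\fkf))$ with a large enough implied constant, the requirement $\tau(\fkf)\ge 21\ln(\enumber d)+2u(\fkf)$ of \Cref{cor:probboundroots} with $\ell=4$ holds, so $(\bbE\,\varrho(\fkf)^4)^{1/4}=\Oh((\log d+u(\fkf))\log d)$, i.e.\ $\bbE\,\varrho(\fkf)^4=\sO((1+u(\fkf))^4)$. For $\log\condR(\fkf)$: I would use that $\fkf$ has integer coefficients, so that on the event ``$\fkf$ square-free'' it is standard (via the condition-number theorem together with the Mahler--Mignotte root-separation bound, cf.\ \Cref{theo:condbasedseparation}) that $\condR(\fkf)\le 2^{\sO(d\,\tau(\fkf))}$; hence there is a deterministic $c=\sO(d\,\tau(\fkf))$ with $\log\condR(\fkf)\le c$ on that event. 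Feeding this $c$ into \Cref{cor:realglobalexpectations} — whose sharper hypothesis $\tau(\fkf)\ge 5+3\log(d+1)+3u(\fkf)+2\ell(\log c-\log\ell)$ still reduces to $\tau(\fkf)=\Omega(\log d+u(\fkf))$ once one notes $\log c=\Oh(\log d+\log\tau(\fkf))$ and $\tau(\fkf)\ge\log\tau(\fkf)$ — gives, for $\ell\in\{2,4\}$,
\[
\bbE\left[\log^\ell\condR(\fkf)\,\mathbf{1}_{\{\fkf\ \mathrm{sq.free}\}}\right]\le\bbE\left[(\min\{\log\condR(\fkf),c\})^\ell\right]=\Oh\left((\log d+u(\fkf))^\ell\right).
\]

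Plugging these bounds into the Cauchy--Schwarz inequality yields, on the square-free event, $\bbE[\#\mathrm{steps}]=\sO((1+u(\fkf))^{2}\cdot(1+u(\fkf)))=\sO((1+u(\fkf))^3)$ and $\bbE[\mathrm{bit\ cost}]=\sOB(d\,\tau(\fkf)(1+u(\fkf))^3+d^2(1+u(\fkf))^4)$. To conclude, one absorbs the non-square-free instances: conditioning on $\fkc_1,\dots,\fkc_d$ makes the discriminant of $\fkf$ a nonzero univariate polynomial of degree $\le d$ in $\fkc_0$ (its $\fkc_0^{d-1}$-coefficient equals $\pm(d\fkc_d)^d$), so $\bbP(\fkf\ \text{not sq.free})\le \bbP(\fkc_d=0)+d\,w(\fkf)=\operatorname{poly}(d)\,\enumber^{u(\fkf)}\,2^{-\tau(\fkf)}$; on that event \descartes first extracts a square-free part and then runs in worst-case cost $\operatorname{poly}(d,\tau(\fkf))$, and the product of these two quantities is $o(1)$ once the constant in $\tau(\fkf)=\Omega(\log d+u(\fkf))$ is taken large enough. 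Finally, for a uniform random bit polynomial $w(\fkf)=1/(1+2^{\tau+1})$, hence $u(\fkf)=0$ (\Cref{ex:uniform}), and substituting $u(\fkf)=0$ into the bounds just obtained gives $\sO(1)$ expected subdivision steps and $\sOB(d\tau+d^2)$ expected bit complexity.

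The main obstacle is not the ``headline'' decoupling step but the careful handling of $\log\condR(\fkf)$: one must choose the truncation level $c$ so that it is almost surely an upper bound for $\log\condR(\fkf)$ on the square-free event \emph{and} so that the (second, sharper) hypothesis of \Cref{cor:realglobalexpectations} stays within the regime $\tau(\fkf)=\Omega(\log d+u(\fkf))$, and separately one must verify that the genuinely singular (non-square-free) inputs contribute negligibly. Everything else is bookkeeping, since the analytic and probabilistic work already sits inside \Cref{thm:Descartes-steps}, \Cref{cor:probboundroots} and \Cref{cor:realglobalexpectations}.
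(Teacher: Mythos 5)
Your proof is essentially the same as the paper's: both feed \Cref{thm:Descartes-steps} into Cauchy--Schwarz and then invoke \Cref{cor:probboundroots} for $\varrho(\fkf)$ and \Cref{cor:realglobalexpectations} for $\log\condR(\fkf)$, with the truncation level $c$ ultimately controlled by the worst-case separation bound so that $\log c=\Oh(\log d+\log\tau)$ and the hypothesis $\tau(\fkf)=\Omega(\log d+u(\fkf))$ remains sufficient.

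The one point where you take a different route is the non-square-free case. The paper never splits on the square-free event; instead it takes the minimum of the instance-based bound of \Cref{thm:Descartes-steps} with the deterministic worst-case bound $\sOB(d^4\tau^2)$ \emph{before} taking expectations, which yields quantities like $\varrho(\fkf)^2\min\{\log\condR(\fkf),d^3\tau\}$. Because $\log\condR(\fkf)=\infty$ exactly when $\fkf$ has a double root in $I$, this $\min$ automatically caps the contribution of non-square-free instances at the worst case, and \Cref{cor:realglobalexpectations} then applies directly with that cap as the truncation level $c$. You instead restrict to the square-free event (where $\condR(\fkf)\le 2^{\sO(d\tau)}$ by a resultant/separation argument), and add a separate probabilistic estimate $\bbP(\fkf\ \text{not sq.free})=\operatorname{poly}(d)\,\enumber^{u(\fkf)}2^{-\tau(\fkf)}$ to absorb the bad instances. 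Both are correct; the paper's $\min$ trick is slicker because it makes the singular instances vanish without an extra probability bound (and without needing the claim that the leading coefficient of the discriminant in $\fkc_0$ is $\pm(d\fkc_d)^d$, which you assert but do not verify — though only the fact that the discriminant is a nonzero polynomial of degree $\le d-1$ in $\fkc_0$ is actually used, and that is fine). Your approach is a little more self-contained in that it does not lean on the blanket worst-case bound, at the cost of an extra conditioning argument.

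One small bookkeeping point: the paper's second expectation is $\bbE\,\varrho(\fkf)^2\left(\min\{\log\condR(\fkf),d^2\tau^2\}\right)^2$, which under Cauchy--Schwarz requires the fourth moment of $\min\{\log\condR(\fkf),\cdot\}$; you correctly note that you need $\ell\in\{2,4\}$ in \Cref{cor:realglobalexpectations}, so that is consistent.
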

\begin{proof}
We only bound the number of bit operations; the bound for the number of steps
is analogous. By Theorem~\ref{thm:Descartes-steps} and the worst-case bound
$\sOB(d^4 \tau^2)$ for \descartes \cite{ESY:descartes}, the bit complexity of
\descartes at $\fkf$ is at most
\[
\sOB\left(\min\{d \tau(\fkf) \varrho(\fkf)^2 \log \condR(\fkf)+ d^2
\varrho(\fkf)^2 \log^2\condR(\fkf)),d^4\tau(\fkf)^2\}\right),
\]
that in turn we can bound by 
\begin{equation*}
\sOB\left(d \tau(\fkf)\varrho(\fkf)^2
\min\{\log \condR(\fkf),d^3\tau(\fkf)\}  +  
d^2 \varrho(\fkf)^2\min \{\log\condR(\fkf),d^2\tau(\fkf)^2 )\}\right).
\end{equation*}
Now, we take expectations, and, by linearity, we only need to bound
\[
\bbE\,\varrho(\fkf)^2 \min\{\log \condR(\fkf),d^3\tau(\fkf)\}
\quad \text{ and } \quad
\bbE\,\varrho(\fkf)^2 \left(\min \{\log\condR(\fkf),d^2\tau(\fkf)^2\}\right)^2 .
\]
Let us show how to bound the first, because the second one is the same. By the
Cauchy-Bunyakovsky-Schwarz inequality, $$\bbE\,\varrho(\fkf)^2 \min\{\log
\condR(\fkf),d^3\tau(\fkf)\}$$ is bounded by
\[
\sqrt{\bbE\,\varrho(\fkf)^4}\sqrt{\bbE\,\left(\min\{\log
\condR(\fkf),d^3\tau(\fkf)\}\right)^2}.
\]
Finally, Corollaries \ref{cor:realglobalexpectations} and
\ref{cor:probboundroots} give the estimate.
Note that $\tau(\fkf) \geq \Omega(\log{d} +u(\fkf))$ implies $\tau(\fkf) \geq
\Omega(\log{d} +u(\fkf)+\ln c)$ (for the worst-case separation bound $c$
\cite{Dav:TR:85}) so we can apply Corollary~\ref{cor:realglobalexpectations}.
\end{proof}

\subsection{\textsc{Sturm} solver}
\label{sec:Sturm}

\sturm solvers is based on (evaluations of) the Sturm
sequence of $f$ to count the number of real roots, say $\varrho$, of a polynomial in an interval, in our case $I = [-1, 1]$.

Given a real univariate polynomial $f$ of degree $d$, and its derivative $f'$, the Sturm sequence of $f$ 
is a sequence of polynomials $F_0, F_1, \dots$, such that 
$F_0 = f$, $F_1 =f'$, and $F_i = - \rem(F_{i-2}, F_{i-1})$, for $i \geq 2$. 
We denote this sequence as $\ST(f)$.
Notice that the sequence contains at most $d+1$ polynomials and the degree of $F_i$ is at most $d -i$; hence there are in total $\OO(d^2)$ coefficients
in the sequence.

If $a \in \RR$, then $\ST(f; a) := \{F_0(a), F_1(a), F_2(a), \dots \}$ is the evaluation of the polynomials in the Sturm sequence at $a$.
Also, we denote the number of sign variations (zeros excluded) in this sequence as $\var(\ST(f; a))$.
Sturm's theorem states that the number of distinct real roots of $f$ in an interval $[a, b]$ is $\var(\ST(f; a)) - \var(\ST(f; b))$.
	We exclude the cases where $f(a) = 0$ or $f(b) = 0$, as we can treat them, easily, independently.
Sturm's theorem does not assume that $f$ is square-free and it counts exactly the number of real roots of a polynomial in an interval. Thus, it is straightforward to come up with a subdivision algorithm, based on Sturm's theorem, to isolate the real roots of $f$; this is the so-called \sturm solver that mimics, in a precise way, the binary search algorithm.

The pseudo-code of \sturm (Alg.~\ref{alg:Sturm}) is almost the same with the pseuso-code of 
\descartes algorithm. They only differ at Line 4, which represents the way that we count the real roots of a polynomial
in an interval. \sturm counts exactly using Sturm's sequences, 
while \descartes provides an upper bound on the number of real roots using the Descartes' rule of signs.

\begin{algorithm2e}[t]
  \scriptsize \dontprintsemicolon \linesnumbered
  \SetFuncSty{textsc} \SetKw{RET}{{\sc return}} \SetKw{OUT}{{\sc output \ }}
  \SetVline \KwIn{A square-free polynomial $f\in \Pd^\bbZ$}
\KwOut{A list, $S$, of isolating intervals for the real roots of $f$ in $J_0 =
(-1, 1)$}

  \BlankLine

  $J_0 \leftarrow (-1, 1)$, 
  $S \leftarrow \emptyset,\, Q \leftarrow \emptyset$,
  $Q \leftarrow \FuncSty{push}( {J_0})$ \;

  \While{ $Q \neq \emptyset$}{
    \nllabel{alg:Subdivision-while-loop}

    ${J}=(a,b) \leftarrow \FuncSty{pop}( Q)$ \\
    $V \leftarrow \var(\ST(f; a)) - \var(\ST(f; b))$ \;

    \Switch{ $V$ }{

      \lCase{ $V = 0$ }{ \KwSty{continue}\; }
      \lCase{ $V = 1$ }{ $S \leftarrow \FuncSty{ add}( {I})$ \; }
      \Case{ $V > 1$ } {
        $m \gets \frac{a+b}{2}$ \;
        \lIf{$f(m) = 0$}{  $S \leftarrow \FuncSty{ add}( {[m, m]})$ \; }
        $J_L \gets [a, m]$ ;   $J_R \gets [m, b]$ \;

        $Q \leftarrow \FuncSty{push}( Q, {J_L} )$,
        $Q \leftarrow \FuncSty{push}( Q, {J_R} )$ \;
      }
    }
  }
  \RET $S$ \;
  \caption{ $\func{Sturm}(f)$}
  \label{alg:Sturm}
\end{algorithm2e}

\sturm isolates  the real roots of a polynomial $f$ with integer coefficients in $I$. Suppose there are $\varrho$ many roots, and note that we only evaluate $\ST(f)$ on rational numbers in \sturm implementation. Now we consider the complexity the evaluation step: Most, if not all, the implementations of \sturm represent and evaluate a Sturm sequence straightforwardly.  That is, they compute all the polynomials in $\ST(f)$ and then evaluate them at various rational numbers.
 There are at most $d+1$ polynomials in the sequence, having degree at most $d-i$. Hence, there are $\OO(d^2)$ coefficients having worst case bitsize $\sO(d \tau)$
 \cite{vzGGer}. Thus, their total bitsize is $\sO(d^3 \tau)$.

 A faster approach to evaluating Sturm sequence is provided by ``half-gcd'' algorithm \cite{Reischert:subresultant:97}.
In ``half-gcd'' approach we essentially exploit the polynomial division relation
$F_{i-2} = Q_i F_{i-1} - (-F_i)$: We notice that, using this relation, the evaluations $F_{i-2}$ and $F_{i-1}$ at $a$, and the evaluation of  the quotient $Q_i$
suffices to compute $F_i(a)$. Thus, initially, we evaluate the polynomials $F_1 := f$ and $F_2 := f'$, in $\sOB(d (\sigma + \tau))$, and then, using the sequence of quotients we compute the evaluation of the sequence.  
There are at most $\sO(d)$ quotients in the sequence, having in total $\sO(d)$ coefficients, 
of (worst case) bitsize $\sO(d \tau)$ \cite{Reischert:subresultant:97}.
In this way we can evaluate the whole Sturm sequence at a number of bitsize $\sigma$ with complexity $\sOB(d^2 (\sigma  + \tau))$ \cite{Reischert:subresultant:97},


The following proposition demonstrates the worst case bit complexity assuming the "half-gcd" approach to pointwise evaluation of Sturm sequence. 
The proof is not new, but we modify it to express the complexity as function of the real condition number.
We refer the
reader to \cite{Yap:SturmBound:05,Dav:TR:85,emt-lncs-2006} and references
therein for further details.

\begin{lem}
  \label{lem:Sturm-complexity-C}
  Let $f \in \Pd^\bbZ$ of bitsize $\tau$.
  The bit complexity of \sturm to isolate the real roots of $f$ in $I$, say there are $\varrho$, is
  \[
  \sOB( \varrho d^2 \delta (\tau + \delta)),
  \]
  where $\delta$ is the bitsize of the separation bound
  of the root of $f$, or
  \[
    \sOB(\varrho \, d^2 \, \log{\condR(f)} ( \tau + \log{\condR(f)})), 
  \]
  where $\condR(f)$ is the global condition number of $f$, see \eqref{eq:condR}.
\end{lem}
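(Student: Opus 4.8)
The plan is to run the textbook subdivision-tree analysis, using the \emph{exactness} of Sturm's count to control the \emph{width} of the tree and the separation bound (equivalently, $\condR(f)$ through \Cref{theo:condbasedseparation}) to control its \emph{depth}, and then to multiply the number of nodes by the worst-case cost of processing one node, which is a single evaluation of $\ST(f)$ at a dyadic endpoint. For the depth: a popped interval $J=(a,b)$ becomes a leaf once $V=\var(\ST(f;a))-\var(\ST(f;b))\le 1$, and by Sturm's theorem $V$ equals \emph{exactly} the number of distinct real roots of $f$ in $J$; hence any $J$ whose width is strictly below the minimal distance between two distinct real roots of $f$ is a leaf (the endpoint cases $f(a)=0$ being handled separately, as remarked above). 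Since intervals at depth $\ell$ have width $2^{1-\ell}$, the depth is $\OO(\delta)$, where $\delta$ is the bitsize of the separation bound of $f$. For the second form I would instead invoke \Cref{theo:condbasedseparation}, which gives that any two roots of $f$ in $I$ (in particular two real ones) are at distance at least $\tfrac{1}{12d\condR(f)}$, so the depth is $\OO(\log d+\log\condR(f))$.

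Next, the number of nodes. At a fixed level the intervals appearing in the tree are pairwise disjoint, and an internal node has $V\ge 2$, i.e.\ its interval contains at least two distinct real roots; since there are $\varrho$ distinct real roots in $I$ in total, there are at most $\varrho/2$ internal nodes per level, hence $\OO(\varrho\,\delta)$ internal nodes overall, and thus $\OO(\varrho\,\delta)$ nodes in total (a binary tree has one more leaf than it has internal nodes). If $\varrho=0$ the tree is trivial, so in all cases the number of subdivision steps is $\OO((\varrho+1)\delta)$.

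It remains to cost a single node. I would precompute once the quotient sequence of $\ST(f)$ (the $Q_i$ with $F_{i-2}=Q_iF_{i-1}-(-F_i)$); by \cite{Reischert:subresultant:97} it has $\sO(d)$ coefficients of bitsize $\sO(d\tau)$ and is obtained in $\sOB(d^2\tau)$, which is dominated. Processing a node at depth $\ell$ then reduces to evaluating $\ST(f)$ at its two endpoints, each a dyadic of bitsize $\OO(\ell)=\OO(\delta)$: one evaluates $F_1=f$ and $F_2=f'$ at the point via Proposition~\ref{prop:poly-maps} and unwinds the recurrence $F_i(\cdot)=Q_i(\cdot)F_{i-1}(\cdot)-F_{i-2}(\cdot)$ with the stored quotients, for a total of $\sOB(d^2(\tau+\delta))$ per endpoint, exactly as discussed before the lemma; the midpoint test $f(m)=0$ costs $\OB(d(\tau+\delta))$ and is dominated. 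Multiplying, the total bit complexity is $\OO((\varrho+1)\delta)\cdot\sOB(d^2(\tau+\delta))+\sOB(d^2\tau)=\sOB(\varrho\,d^2\,\delta(\tau+\delta))$; substituting $\delta=\OO(\log d+\log\condR(f))$ and absorbing the $\log d$ contributions into $\sOB$ yields the second form.

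The main obstacle is keeping the per-node cost at $\sOB(d^2(\tau+\delta))$ rather than the naive $\sOB(d^3\tau)$ obtained by forming and evaluating the polynomials of $\ST(f)$ explicitly: this is precisely where the half-gcd/quotient-sequence representation and the subresultant size bounds of \cite{Reischert:subresultant:97} are essential. The width estimate, by contrast, is comparatively painless here — unlike for \descartes it requires no Obreshkoff machinery, since $V$ is the exact real-root count and a plain disjointness argument at each level suffices.
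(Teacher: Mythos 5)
Your proposal is correct and follows essentially the same route as the paper's proof: bound the subdivision-tree size by $\OO(\varrho\,\delta)$, charge $\sOB\bigl(d^2(\tau+\delta)\bigr)$ per evaluation of the Sturm sequence via the half-gcd/quotient-sequence representation of \cite{Reischert:subresultant:97}, multiply, and then convert $\delta$ into $\log\condR(f)$ using \Cref{theo:condbasedseparation}. The only cosmetic difference is in the tree-size count: the paper charges each root $\alpha_j$ the depth $\lceil\log(4/\Delta_j)\rceil$ of its isolating leaf and sums, obtaining $\#(T)\le\varrho(3+\delta)$, whereas you count internal nodes level by level (at most $\varrho/2$ per level, since each holds at least two roots and same-level intervals are disjoint) times depth $\OO(\delta)$; both give $\OO(\varrho\,\delta)$, and your version is arguably a little cleaner since it does not implicitly equate tree size with a sum of leaf depths.
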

\begin{proof}
Let $\varepsilon=0$ and $\varrho$ the number of roots of $f$ in $I_0=I$.

  Let $\Delta_j$ be the (real) \textit{local separation} bound of the real roots, say $\alpha_j$,
  of $f$ in $I$; that is 
  \[
    \Delta_j = \Delta(f, \alpha_j) = \min\nolimits_{i \not= j}\abs{\alpha_i - \alpha_j};
  \]
 also let $\Delta = \min_{j \in [\varrho]}\Delta_j$ and $\delta = -\log\Delta$.
  
 To isolate the real roots in $I$ we need to compute $\varrho -1$ 
rational numbers between them. As \sturm mimics binary search, the resulting intervals
have width at least $\tfrac{\Delta_j}{2}$
and the number of subdivision steps we need to perform is at  $\lceil \log{ \frac{4}{\Delta_j}} \rceil$, for $1 \leq j \leq \varrho$.
  Let $T$ be the binary tree corresponding to the realization of \sturm
  and let  $\#( T)$ be the number of its nodes;
  or in other words the \emph{total} number of subdivisions that \sturm performs.
  Then
  \begin{equation}
    \#(T) =  \sum_{j=1}^{\varrho}{ \lceil{ \log{ \frac{4}{\Delta_j}}} \rceil}
    \le 3 \varrho - \sum_{j=1}^{\varrho}{ \log{\Delta_j}}
    = 3 \varrho - \log \prod_{i= 1}^{\varrho}\Delta_i
    \leq \varrho (3 - \log\Delta) = \varrho (3 + \delta).
    \label{eq:sturm-nb-of-steps}
  \end{equation}
  The complexity of \sturm algorithm is the number of step it performs,
  $\#(T)$, times the worst case (bit) complexity of each step.  Each
  step corresponds to an evaluation of the Sturm sequence at a number.
  If the bitsize of this number is $\sigma$, then the cost is
  $\sOB(d^2 \, (\tau + \sigma))$ \cite{Reischert:subresultant:97}.
  In our case, $\sigma = 3 -\log\Delta= 3 + \delta$.
  Therefore, the overall cost is
  \begin{equation*}
    \label{eq:sturm-all-roots-complexity}
    \sO(\varrho \delta) \cdot \sOB(d^2 \, (\tau + \delta)) = \sOB( \varrho d^2 \delta (\tau + \delta)).
  \end{equation*}
  To obtain the complexity bound involving the condition number,
  we notice that Theorem~\ref{theo:condbasedseparation} implies 
  $\delta = \OO( \log( d \condR(f)) )$.
\end{proof}

\begin{remark}
	\label{rem:sturm-worst-case}
	The standard approach to analysis of  \sturm  relies on aggregate separation bounds, e.g., \cite{emt-dmm-j-19}; this approach yields a bound of the order $\sOB(d^4 \tau^2)$.
\end{remark}

\begin{theo}
  \label{thm:Sturm-complexity-exp}
  Let $\fkf\in\Pd^\bbZ$ be a random bit polynomial of bit-size $\tau(\fkf) \geq 10$, and
  uniformity $u(\fkf)$ (Def.~\ref{def:uniformity}).
  If $\tau(\fkf) = \Omega( \log{d}+ u(\fkf))$,
  then the
  expected bit complexity of \sturm to isolate the real roots of $\fkf$ in $I=[-1, 1]$,
  \emph{using fast algorithms for evaluating Sturm sequences}, is
  $\sOB(d^2 \tau(\fkf) \, (1 + u(\fkf))^3)$.

   If $\fkf$ has uniformly distributed
  coefficients on $[-2^{\tau}, 2^{\tau}] \, \cap \, \mathbb{Z}$, then
  the complexity is  $\sOB(d^2 \tau)$.
\end{theo}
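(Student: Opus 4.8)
The plan is to do for \sturm exactly what the proof of Theorem~\ref{thm:Descartes-complexity-exp} does for \descartes: start from the instance-based bound of Lemma~\ref{lem:Sturm-complexity-C}, truncate it against the worst-case estimate, and integrate using the probabilistic bounds of Sections~\ref{subsec:prob} and~\ref{sec:complexroots}. First I would fix a realization $\fkf$ (Sturm's theorem counts distinct real roots even without square-freeness, so no preprocessing is needed). Lemma~\ref{lem:Sturm-complexity-C} bounds the bit cost of \sturm on $\fkf$ by $\sOB(\varrho\, d^2 \log\condR(\fkf)(\tau(\fkf)+\log\condR(\fkf)))$, where $\tau(\fkf)$ is the (deterministic) bitsize and $\varrho$ is the number of distinct real roots of $\fkf$ in $I$. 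Since $I\subseteq\Omega_d$ (the inclusion already used in the proof of Theorem~\ref{thm:Descartes-steps}), each such root lies in $\Omega_d$, whence $\varrho\le\varrho(\fkf)$; making this replacement instead of the trivial $\varrho\le d$ is the crucial point, because $\varrho\le d$ would cost an extra factor of $d$ in the end.

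Combining with the worst-case bound $\sOB(d^4\tau(\fkf)^2)$ (Remark~\ref{rem:sturm-worst-case}) and using $\min\{x+y,e\}\le\min\{x,e\}+\min\{y,e\}$ together with $\varrho(\fkf)\ge 1$, the cost at $\fkf$ is at most
\[
\sOB\left(d^2\,\tau(\fkf)\,\varrho(\fkf)\min\{\log\condR(\fkf),\,d^2\tau(\fkf)\}+d^2\,\varrho(\fkf)\min\{(\log\condR(\fkf))^2,\,d^2\tau(\fkf)^2\}\right).
\]
Now I would take expectations term by term and, in each term, use the Cauchy-Bunyakovsky-Schwarz inequality to separate $\varrho(\fkf)$ from the condition-number factor, leaving $\sqrt{\bbE\,\varrho(\fkf)^2}$, $\sqrt{\bbE\min\{\log\condR(\fkf),c\}^2}$, and $\sqrt{\bbE\min\{\log\condR(\fkf),c\}^4}$ with $c$ polynomial in $d$ and $\tau(\fkf)$. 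Corollary~\ref{cor:probboundroots} (with $\ell=2$) gives $\sqrt{\bbE\,\varrho(\fkf)^2}=\sO(1+u(\fkf))$, and Corollary~\ref{cor:realglobalexpectations} (with $\ell=2$ and $\ell=4$) gives $\sqrt{\bbE\min\{\log\condR(\fkf),c\}^2}=\sO(1+u(\fkf))$ and $\sqrt{\bbE\min\{\log\condR(\fkf),c\}^4}=\sO((1+u(\fkf))^2)$.

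Collecting the pieces, the first summand contributes $\sO(d^2\tau(\fkf)(1+u(\fkf))^2)$ and the second $\sO(d^2(1+u(\fkf))^3)$; bounding each of these crudely by $d^2\tau(\fkf)(1+u(\fkf))^3$ (using $\tau(\fkf)\ge 1$) yields the claimed $\sOB(d^2\tau(\fkf)(1+u(\fkf))^3)$. For the uniform model we have $u(\fkf)=0$ by Example~\ref{ex:uniform}, so the bound collapses to $\sOB(d^2\tau)$. I expect the main obstacle to be purely technical: checking that the two hypotheses $\tau(\fkf)\ge 10$ and $\tau(\fkf)=\Omega(\log d+u(\fkf))$ are exactly enough to license the ``in particular'' clauses of Corollaries~\ref{cor:realglobalexpectations} and~\ref{cor:probboundroots} at the truncation level $c=d^{\OO(1)}\tau(\fkf)^{\OO(1)}$ that the worst-case bound forces on us — concretely, that the term $2\ell(\log c-\log\ell)=\OO(\log d+\log\tau(\fkf))$ appearing there is dominated by $\tau(\fkf)$, which is where the implicit constant in $\Omega(\cdot)$ and the bound $\tau(\fkf)\ge 10$ come into play.
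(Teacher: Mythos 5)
Your proposal follows essentially the same route as the paper: start from the instance-based bound of Lemma~\ref{lem:Sturm-complexity-C}, truncate against the worst-case bound, and integrate with Corollaries~\ref{cor:realglobalexpectations} and~\ref{cor:probboundroots}. The paper's proof takes the same ingredients but handles the product $\varrho \cdot \min\{\log^2\condR(\fkf),\cdot\}$ by \emph{assuming} the random variables $\varrho$ and $\condR(\fkf)$ are independent (it explicitly flags this as an implicit assumption), then multiplies $\bbE\varrho$ (from Cor.~\ref{cor:probboundroots} with $\ell=1$) by $\bbE\min^2$ (from Cor.~\ref{cor:realglobalexpectations} with $\ell=2$). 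You instead split the two-term instance bound and apply Cauchy--Bunyakovsky--Schwarz to decouple $\varrho(\fkf)$ from the condition-number factor, which requires $\ell=2$ and $\ell=4$ moments but avoids any independence hypothesis — this is exactly the technique the paper already uses in the proof of Theorem~\ref{thm:Descartes-complexity-exp}, and porting it here removes a genuine (acknowledged) gap in the paper's own argument. Your explicit remark that the number of real roots in $I$ is bounded by $\varrho(\fkf)$ because $I\subseteq\Omega_d$ is also the right way to make precise a step the paper leaves silent. The bookkeeping ($\sO(d^2\tau(1+u)^2)+\sO(d^2(1+u)^3)\le\sO(d^2\tau(1+u)^3)$, and the hypothesis $\tau=\Omega(\log d + u(\fkf))$ being enough to feed the corollaries at truncation level $c=\mathrm{poly}(d,\tau)$) all checks out. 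In short: correct, same strategy, but a slightly tighter execution of the probabilistic split than what the paper wrote.
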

\begin{proof}
  Assume that $\fkf \in\Pd^\bbZ$ is a random bit polynomial of bit-size $\tau = \tau(\fkf)$, not necessarily square-free.
  Using \sturm, the worst case complexity for isolating its real roots in $I$ is $\sOB(d^4 \tau^2)$ \cite{Yap:SturmBound:05},
  while 
  Lemma~\ref{lem:Sturm-complexity-C} implies the bound $\sOB(\varrho d^2 \, \log{\condR(\fkf)} ( \tau + \log{\condR(\fkf)}))$.
  Thus the complexity is
  \begin{equation}
    \label{eq:Sturm-bound-min}
    \min\{ \sOB(\varrho d^2 \, \log{\condR(\fkf)} ( \tau + \log{\condR(\fkf)})), \sOB(d^4 \tau^2) \}
    = \sOB(d^2 \tau) \, \varrho \, \min\{ \log^2{\condR(\fkf)}, d^2 \tau \}.
  \end{equation}

  For the random bit polynomial $\fkf$, with $\tau(\fkf) \geq 4\log(ed)+2u(\fkf)+ 12 \log(d \,\tau(\fkf))$,
  which for $\tau(\fkf) \geq 10$ becomes 
  $\tau(\fkf) = \Omega( \log{d}+ u(\fkf))$,
  using Cor.~\ref{cor:realglobalexpectations} with $\ell = 2$
  we get
  \[
    \bbE_\fkf \left( \min\{\ln\condR(\fkf), d^2\tau(\fkf) \}\right)^2  = \OO((\log{d} + u(\fkf))^2).
  \]
  
  Corollary~\ref{cor:probboundroots}, using the same constraints on $\tau(\fkf)$   and $\ell = 1$, implies that $\bbE_\fkf (\varrho) = \OO(\log{d}(\log{d} + u(\fkf)))$.
  Notice that we implicitly assume that the (random variables) $\varrho$ and $\condR(f)$
	are independent.
Combining all the previous estimates, we deduce that the 
expected runtime of \sturm for $\fkf$ is
  $\sOB(d^2 \tau(\fkf) (1 + u(\fkf)^3))$.
\end{proof}

With the standard representation of Sturm sequence, we evaluate $\ST(f)$ at a rational number of bitsize $\sigma$ in $\sOB(d (d^2 \sigma + d^2 \tau)$. As we have to perform this evaluation $\varrho$ times, the total complexity is
$\sOB(\varrho (d^3 \sigma + d^3 \tau))$. 
This is worse than the bound for evaluation used in the proof of Theorem~\ref{thm:Sturm-complexity-exp},
which was  $\sOB(\varrho \,d^2 \, (\tau + \sigma))$,
by a factor of $d$. To obtain the worst case bound for \sturm with this representation it suffices to replace $\sigma$ with $\delta$, respectively $\log{\condR(f)}$, 
to obtain 
$\sOB(\varrho (d^3 \delta + d^3 \tau))$,
respectively 
$\sOB(\varrho (d^3 \log{\condR(f)} + d^3 \tau))$. 


In practice, \sturm is rarely used. It is slower that \descartes by several orders of magnitude, almost always, e.g.~\cite{htzekm-solve-09}.  We give a theoretical justification of these practical observations. The following "assumption" corresponds to the current status of all implementations of the \sturm algorithm to the authors' knowledge.

\begin{assumption}
	\label{ass:Sturm-representation}
	We assume that we represent Sturm sequence of a polynomial $f$ of degree $d$ and bitsize $\tau$, as $\ST(F) = \{F_0, F_1, \dots, \}$, where 
	$F_1 =f'$, and $F_i = - \rem(F_{i-2}, F_{i-1})$, for $i \geq 2$. 
\end{assumption}

\begin{prop}   \label{lem:Sturm-lower-bound}
Let $f \in \Pd^\bbZ$ of bitsize $\tau$.
Under the Assumption~\ref{ass:Sturm-representation}, the expected complexity of \sturm for a random bit polynomial of bit-size $\tau$ is $\Omega(d^3 + d^2 \tau )$.
\end{prop}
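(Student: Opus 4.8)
The plan is to argue that, under Assumption~\ref{ass:Sturm-representation}, \sturm is forced to build the \emph{entire} Sturm sequence $\ST(\fkf)=(F_0,F_1,\dots,F_k)$ before it can even make its first decision, and that for a random bit polynomial this sequence alone occupies $\Omega(d^3\tau)$ bits, so that merely producing it costs $\Omega(d^3\tau)\ge\Omega(d^3+d^2\tau)$ bit operations. For the reduction: in Algorithm~\ref{alg:Sturm} the sequence is used only at Line~$4$, to evaluate $\var(\ST(\fkf;a))-\var(\ST(\fkf;b))$; this count depends on the signs of $F_0(a),\dots,F_k(a)$ and of $F_0(b),\dots,F_k(b)$, hence on every $F_i$, and under Assumption~\ref{ass:Sturm-representation}, where $\ST(\fkf)$ is stored precisely as the list of the $F_i$'s, this means all of $F_0,\dots,F_k$ must actually be computed. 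Since $Q$ is initialised with $J_0=(-1,1)$ already pushed, the while-loop runs at least once, so every execution of \sturm computes $\ST(\fkf)$ in full; consequently its running time $T$ on $\fkf$ is, \emph{deterministically}, at least the bit-size of $\ST(\fkf)$, i.e.\ at least $\sum_{i=0}^{k}\sum_{c}\bigl(1+\lceil\log_2(1+|c|)\rceil\bigr)$, the sum ranging over all coefficients $c$ of all the $F_i$.

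It remains to lower bound the expected bit-size of $\ST(\fkf)$. Under the mild hypotheses of \Cref{thm:Sturm-complexity-exp} ($\tau\ge 10$, $\tau=\Omega(\log d)$), a random bit polynomial $\fkf$ has $\gcd(\fkf,\fkf')=1$ and a \emph{normal} remainder sequence, $\deg F_i=d-i$ for $0\le i\le d$, with probability $1-o(1)$: $\deg F_i=d-i$ fails exactly when the $i$-th principal subresultant coefficient of $\fkf,\fkf'$ vanishes, and this coefficient is a non-zero polynomial of degree $O(d)$ in $(\fkc_0,\dots,\fkc_d)$, so a Schwartz--Zippel estimate over $\bbZ\cap[-2^{\tau},2^{\tau}]$ (or \Cref{prop:reallinearprojectiondiscreterandomvector} applied to the corresponding Sylvester minor) bounds its vanishing probability by $O(d\,2^{-\tau})$, and a union bound over $i\le d$ gives the claim. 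On this event, $\ST(\fkf)$ has $\binom{d+2}{2}=\Theta(d^2)$ coefficients, and for an index $i$ in the middle range $\lceil d/4\rceil\le i\le\lfloor 3d/4\rfloor$ the polynomial $F_i$ has degree $\Theta(d)$ and its coefficients are, up to a normalising subresultant factor (which for a random input does not change the bit-size by more than lower-order terms), the coefficients of $\mathrm{sRes}_{d-i}(\fkf,\fkf')$, each of which is the determinant of a $\Theta(d)\times\Theta(d)$ integer matrix built from the coefficients of $\fkf$ and $\fkf'$.

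The core of the proof — and the step I expect to be the main obstacle — is the lower bound $\bbE\log_2(1+|\det M|)=\Omega(d\tau)$ for each such determinant. The matching \emph{upper} bound $\log_2(1+|\det M|)=O(d(\tau+\log d))$ is Hadamard's inequality and is easy; the lower bound on the \emph{expectation} requires an anti-concentration argument, for which I would use the base-times-height factorisation $|\det M|=\prod_{l}\dist\bigl(M_l,\mathrm{span}(M_1,\dots,M_{l-1})\bigr)$: each factor has second moment of order $(\text{dimension})\cdot 2^{2\tau}$, is never smaller than an explicit $2^{-O(d(\tau+\log d))}$ since it is a ratio of non-zero integer Gram determinants, and satisfies an exponential small-ball bound coming from \Cref{prop:reallinearprojectiondiscreterandomvector}; summing the resulting $\Omega(\tau)$ lower bounds on the expected logarithms of the factors yields $\Omega(d\tau)$. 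The delicate point is that the matrices involved inherit the band structure of the Sylvester matrix, so the factorisation and the small-ball step must be set up with the rows one actually has rather than with an i.i.d.\ model; everything else is bookkeeping. Granting this, on the normal event each middle $F_i$ contributes $\Omega(d)\cdot\Omega(d\tau)=\Omega(d^2\tau)$ bits, and summing over the $\Theta(d)$ middle indices, $\ST(\fkf)$ has bit-size $\Omega(d^3\tau)$ on that event. Since the normal event has probability $1-o(1)$ while the bit-size of $\ST(\fkf)$ is always $O(d^3(\tau+\log d))$ by Hadamard, the complementary event contributes only $o(d^3\tau)$ to the expectation, so $\bbE[T]=\Omega(d^3\tau)\ge\Omega(d^3+d^2\tau)$ — in fact a bound roughly a factor $d$ larger than that of the fast variant in \Cref{thm:Sturm-complexity-exp}, of which the statement records only the part $\Omega(d^3+d^2\tau)$.
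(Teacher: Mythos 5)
Your approach differs from the paper's and aims for a strictly stronger conclusion. The paper's proof is a short accounting argument: under Assumption~\ref{ass:Sturm-representation} the algorithm materialises the full list $F_0,\dots,F_k$, which has $\Theta(d^2)$ coefficients, and the paper asserts these all have bit-size $\Omega(\tau)$; from this one evaluation pass already costs $\Omega(d^3\sigma+d^2\tau)$, and with $\varrho\ge 1$, $\sigma\ge 1$ the bound $\Omega(d^3+d^2\tau)$ follows. You instead try to show that the sequence occupies $\Omega(d^3\tau)$ bits, i.e.\ that the middle $F_i$'s have coefficients of bit-size $\Omega(d\tau)$ rather than merely $\Omega(\tau)$. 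If true, that is a sharper lower bound, but it is more than the proposition asserts, and the paper explicitly declines to pursue a finer bound (see the remark that follows the proposition).

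The sharper claim is also where the proof breaks down. The lemma you flag as the main obstacle — $\bbE\log|\det M|=\Omega(d\tau)$ for the Sylvester minors — is not a "delicate point" with the rest "bookkeeping": it is the entire difficulty, and the sketch you give does not meet it. The base-times-height factorisation $|\det M|=\prod_l\dist\bigl(M_l,\mathrm{span}(M_1,\dots,M_{l-1})\bigr)$ requires each row $M_l$ to carry usable randomness once the earlier rows are conditioned on; but the rows of a Sylvester submatrix are shifts of the \emph{same} coefficient vector of $\fkf$ (or of $\fkf'$, whose coefficients are determined by those of $\fkf$), so the whole $\Theta(d)\times\Theta(d)$ matrix is a function of only $d+1$ independent integers, and beyond the first $O(1)$ rows there is no fresh randomness left for a small-ball step. \Cref{prop:reallinearprojectiondiscreterandomvector} controls a \emph{linear} image of an independent random vector; $\det M$ is a degree-$\Theta(d)$ polynomial in $(\fkc_0,\dots,\fkc_d)$, and while a Schwartz--Zippel bound gives $\bbP(\det M=0)=O(d/2^\tau)$ (which you correctly use for normality), upgrading non-vanishing to $|\det M|=2^{\Omega(d\tau)}$ with constant probability — near-Hadamard behaviour — is a qualitatively harder anti-concentration problem for a polynomial of dependent entries, and nothing in the paper's toolbox or in your sketch supplies it. Without that lemma the $\Omega(d^3\tau)$ claim is unsupported; the paper's weaker per-coefficient $\Omega(\tau)$ claim, although also stated tersely, already yields the proposition and avoids the determinant estimate altogether.
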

\begin{proof}
The bitsize of coefficients in the sequence is $\Omega(\tau)$. Thus, under Assumption~\ref{ass:Sturm-representation}, the overall complexity of the algorithm becomes  $ \Omega (\varrho (d^3 \sigma + d^2\tau)$.
This implies that, independently
of the bounds on $\varrho$ and $\sigma$, 
a lower bound on the complexity of \sturm is
$\Omega(d^3 + d^2\tau)$.
\end{proof}

We believe this simple proposition compared to \Cref{thm:Descartes-complexity-exp} explains the practical superiority of \descartes over current implementations of \sturm.

\begin{remark}
A natural question is to ask for a lower bound in the case \sturm is implemented using ``half-gcd'' approach. In this case, one can set-up the ``half-gcd'' computation as a martingales and analyze its bit-complexity. Since only evaluating the beginning of the sequence costs $\OO(d \tau)$ bits, this approach is likely to yield a lower bound that still separates \sturm from the upper bound obtained for \descartes in  \Cref{thm:Descartes-complexity-exp}. We refrain from performing this analysis for the sake of not adding more technicality to our paper.
\end{remark}




\subsection{\textsc{ANewDsc}}
\label{sec:aNewDsc}

Sagraloff and Merhlhorn \cite{sm-anewdsc} presented an algorithm, \anewdsc, 
to isolate the real roots of a square-free univariate polynomial $f$
that combines \descartes with Newton iterations. 
If $f$ is of degree $d$, its roots are $\alpha_i$, for $i \in [d]$, and its leading coefficient is in the interval $(\tfrac{1}{4}, 1]$, then the bit complexity of the algorithm
is 
\[ \sOB\left(d (d^2 + d\log\mathcal{M}(f) + \sum\nolimits_{i=1}^{d} \log 1/f'(\alpha_i)) \right) ,
\]
where $f'$ is the derivative of $f$
and $\mathcal{M}(f)$ is the Mahler measure of $f$; it holds $\mathcal{M}(f) \leq \norm{f}_2$ \cite[Lem~4.14]{yap-fca-00}.
If the bitsize of $f$ is bounded by $\tau$, then the bound of the algorithm becomes $\sOB(d^3 + d^2 \tau)$.

However, if we are interested in isolating the real roots of $f$ in an interval, say $I$, then 
only the roots that are in the complex disc that has $I$ as a diameter affect the complexity bound. 
Therefore, if these roots are at most $\rho$, 
the first $d^3$ summand in the complexity bound becomes $d^2 \rho$;
moreover, we should account for the evaluation of the derivative of $f$ only at these roots. 
Regarding the evaluation of $f'$ over the roots of $f$, it holds
\[	
	\abs{f'(\alpha_i)} = a_d \prod_{j \not=i}\abs{\alpha_i - \alpha_j} \geq 
	a_d \Delta_i^{d-1}
	\Rightarrow
	- \log \abs{f'(\alpha_i)} \leq - (d-1) \log{\Delta_i}.
\]

Using these observations, and by also considering
 $\Delta = \min_{j \in [\varrho]}\Delta_j$ and $\delta = -\log\Delta$ the complexity bound becomes
\[ \sOB\left(d^2 \varrho + d \varrho \log\mathcal{M}(f) +  d^2 \varrho \delta) \right) =
\sOB\left(\varrho( d^2 + d \tau +  d^2 \log(\condR(f))) \right).
\]

\begin{theo}
  \label{thm:aNewDsc-complexity-exp}
Let $\fkf \in \Pd^\bbZ$ be a random bit polynomial with $\tau(\fkf) \geq
\Omega(\log{d} +u(\fkf))$. Then, the expected bit complexity of \anewdsc is
  \[
    \sOB((d^2 + d \, \tau(\fkf)) (1+u(\fkf))^2 ).
  \]
If $\fkf$ is a uniform random bit polynomial of bitsize $\tau$ and $\tau =
\Omega(\log{d} +u(\fkf))$, then the expected bit  complexity becomes
  \[
    \sOB(d^2 + d \tau ).
  \]
\end{theo}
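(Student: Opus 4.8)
The plan is to mirror the expected-complexity analysis of \descartes (\Cref{thm:Descartes-complexity-exp}); the only structural difference is that the instance-based estimate recorded just above, that \anewdsc isolates the real roots of $f\in\Pd^\bbZ$ in $I$ within $\sOB(\varrho(f)(d^2+d\,\tau+d^2\log\condR(f)))$ bit operations, is merely \emph{linear} in the number $\varrho(f)$ of complex roots in the near-real-axis region $\Omega_d$ of \eqref{eq:Omega_N}, so the computation is a notch lighter than for \descartes. First I would combine this instance bound with the worst-case bound $\sOB(d^3+d^2\tau)$ of \anewdsc. Leaving the terms $\varrho(f)\,d^2$ and $\varrho(f)\,d\,\tau$ untouched and truncating only the condition-number term against the worst case, one checks the elementary inequality
\[
\min\{\varrho(f)\,d^2\log\condR(f),\ d^3+d^2\tau\}\ \le\ \varrho(f)\,d^2\,\min\{\log\condR(f),\ d+\tau\},
\]
so that the running time at an instance $f$ is bounded by
\[
\sOB((d^2+d\,\tau)\,\varrho(f))\;+\;\sOB(d^2\,\varrho(f)\,\min\{\log\condR(f),\ d+\tau\}).
\]

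Next I would take expectations over a random bit polynomial $\fkf$ with $\tau(\fkf)=\Omega(\log d+u(\fkf))$, term by term. For the first summand, \Cref{cor:probboundroots} with $\ell=1$ gives $\bbE\,\varrho(\fkf)=\OO(\log d\,(\log d+u(\fkf)))$ under this hypothesis, contributing $\sOB((d^2+d\,\tau(\fkf))(1+u(\fkf)))$. For the second summand I would apply Cauchy--Bunyakovsky--Schwarz,
\[
\bbE\big[\varrho(\fkf)\min\{\log\condR(\fkf),\ d+\tau(\fkf)\}\big]\ \le\ \sqrt{\bbE\,\varrho(\fkf)^2}\;\sqrt{\bbE\big(\min\{\log\condR(\fkf),\ d+\tau(\fkf)\}\big)^2},
\]
then bound $\sqrt{\bbE\,\varrho(\fkf)^2}=\sO(1+u(\fkf))$ via \Cref{cor:probboundroots} with $\ell=2$, and bound the second factor by $\OO(\log d+u(\fkf))=\sO(1+u(\fkf))$ via \Cref{cor:realglobalexpectations} with $\ell=2$ and $c=d+\tau(\fkf)$ --- using, exactly as in the \descartes proof, that $\tau(\fkf)=\Omega(\log d+u(\fkf))$ already entails $\tau(\fkf)=\Omega(\log d+u(\fkf)+\log c)$, since $\log c=\OO(\log d+\log\tau(\fkf))$. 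The second summand thus contributes $\sOB(d^2(1+u(\fkf))^2)$, and adding the two contributions yields the claimed $\sOB((d^2+d\,\tau(\fkf))(1+u(\fkf))^2)$; specializing to a uniform random bit polynomial, where $u(\fkf)=0$, gives $\sOB(d^2+d\,\tau)$.

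Since \anewdsc requires a square-free input, one first passes to the square-free part of $\fkf$, computable within the target bound; on the probability-$\OO(d/2^{\tau(\fkf)})$ event that $\fkf$ is not square-free --- a discriminant plus Schwartz--Zippel argument, the hypothesis on $\tau(\fkf)$ forcing the two corner coefficients to have atoms of size $2^{-\Omega(\tau(\fkf))}$ --- one falls back to the worst-case bound $\sOB(d^3+d^2\tau(\fkf))$, which contributes only $\sOB((d^4+d^3\tau(\fkf))/2^{\tau(\fkf)})$ to the expectation, negligible under the hypothesis. This is handled exactly as for \descartes.

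I expect the genuinely delicate point to be the very instance bound quoted at the start of this plan. The $\sOB$-estimates for \anewdsc imported from \cite{sm-anewdsc} are phrased through the roots of $f$ lying in the complex disc having $I$ as a diameter, yet a random polynomial with i.i.d.\ coefficients has $\Theta(d)$ roots in that disc --- far too many to close the argument. One must therefore re-derive the instance bound so that the Newton/clustering cost of \anewdsc is charged instead to the roots in the near-real-axis region $\Omega_d$ (for which \Cref{cor:probboundroots} does supply polylogarithmic moments), using the same Obreshkoff-area localization that underlies the \descartes analysis (cf.\ the proof of \Cref{thm:Descartes-steps}). Once the instance bound is stated with $\varrho(\fkf)$ in place of the unit-disc count, everything downstream is the routine Cauchy--Schwarz-plus-moments calculation described above.
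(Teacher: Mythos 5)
Your proposal mirrors the paper's proof: pre-process against the worst-case bound $\sOB(d^3+d^2\tau)$, then take expectations term by term with \Cref{cor:probboundroots} ($\ell=1$) for the $\varrho$-linear part and Cauchy--Schwarz plus \Cref{cor:probboundroots} ($\ell=2$) and \Cref{cor:realglobalexpectations} ($\ell=2$) for the $\varrho\log\condR$ part. The structure of the argument is the same; your explicit truncation inequality $\min\{\varrho d^2\log\condR(f),\,d^3+d^2\tau\}\le\varrho d^2\min\{\log\condR(f),\,d+\tau\}$ and the sentence on falling back to the worst case on the non-square-free event make cleaner what the paper compresses into a single displayed $\min$.

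Your closing flag is well taken and points at a genuine looseness in the paper's exposition rather than in your proof. The instance bound preceding \Cref{thm:aNewDsc-complexity-exp} is motivated by the phrase ``roots in the complex disc that has $I$ as a diameter,'' i.e.\ the unit disc, which for a random bit polynomial holds $\Theta(d)$ roots; that count does not have polylog moments. Yet the theorem's proof plugs $\varrho$ into \Cref{cor:probboundroots}, which governs the count $\varrho(\fkf)$ of roots in the near-real-axis region $\Omega_d$. For the expectation computation to go through, the instance bound must indeed be re-expressed so that \anewdsc's Newton/subdivision cost is charged to $\varrho(\fkf)$ via the Obreshkoff-area localization (as in the proof of \Cref{thm:Descartes-steps}) --- precisely the re-derivation you propose. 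The paper performs this re-interpretation silently; writing it out as you suggest would tighten the argument.
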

\begin{proof}
We only bound the number of bit operations; the bound for the number of steps
is analogous. The worst-case bound
$\sOB(d^3 + d^2 \tau)$. Thus the bit complexity of \anewdsc at $\fkf$ is at most
\[
\sOB\left(\min\{d^3 +d^2 \tau(\fkf), \varrho( d^2 + d \tau(\fkf) +  d^2 \log(\condR(f))) \}\right) = 
\sOB\left(\varrho( d^2 + d \tau(\fkf) +  d^2 \log(\condR(f))) \right).
\]
Now, we take expectations, and, by linearity, we only need to bound
\[
\bbE\,\varrho(\fkf) \tau(\fkf)
\quad \text{ and } \quad
\bbE\,\varrho(\fkf) \log \condR(\fkf)  .
\]
For the random bit polynomial $\fkf$, with 
 $\tau(\fkf)\geq 12 \ln(\enumber d)+2u(\fkf) = \Omega( \log{d}+ u(\fkf))$,
 using Corollary~\ref{cor:probboundroots},
 we have  
 $\bbE_\fkf (\varrho) = \OO(\log{d}(1 + u(\fkf)))$.
 
To bound the other expectation, we use Cauchy-Bunyakovsky-Schwarz inequality, 
that is 
\[
\bbE\,\varrho(\fkf) \log \condR(\fkf) 
\leq \sqrt{ \bbE\,\varrho(\fkf)^2}  \sqrt{\bbE\log \condR(\fkf)^2} 
\]
Using again Corollary~\ref{cor:probboundroots}, with $\ell =2$, 
we have that 
$\sqrt{ \bbE_\fkf (\varrho)^2} = \OO(\log{d}(1 + u(\fkf)))$.
Similarly, using 
Corollary~\ref{cor:realglobalexpectations} 
\[ \sqrt{\bbE\log \condR(\fkf)^2}  =\OO(\log{d} + u(\fkf)). \]

Combining all the previous bounds, we arrive at the announced bound.
\end{proof}
\subsection{\jssparse algorithm by Jindal and Sagraloff}
\label{sec:jindal-sagraloff}

An important, both from a theoretical and a practical point of view, variant of the (real) root isolation problem 
is the formulation that accounts for sparsity of the input equation. In this setting, the input consists of (i) the non-zero coefficients, let their set (or support) be $M$ and their number be $\abs{M}$,\
(ii) the bitsize of the polynomial, say it is $\tau$, 
and (iii) the degree of the polynomial, say $d$, However, in this  sparse encoding, we need $\OO(\log{d})$ bits to represent the degree.  
Thus, the input is of bitsize $\sO(\abs{M} \tau \log(d))$;
we call this the \emph{sparse encoding}.
In the dense case $\abs{M} = d$ and the input has bitsize $\OO(d \tau)$.

As already mentioned, in the worst case, the bitsize of the separation bound is $\log\Delta = \sO(d \tau)$.
This result rules out the existence of a polynomial time, with respect to sparse encoding, algorithm for root isolation. 
The current state-of-art algorithm by Jindal and Sagraloff \cite{JinSag-sparse-17},
we call it \jssparse.
It has bit complexity polynomial in quantities $\abs{M}, \tau$, and $\log\Delta$.
Using Theorem~\ref{theo:condbasedseparation}
we can express the complexity bound of \jssparse using the condition number of the polynomial. In particular:
\begin{prop}
\label{prop:JScomplexitycondition}
Given $f \in\Pd$ with support $|M|$, \jssparse  computes isolating intervals for all the roots of $f$ in $I$ by performing 
\[
\OB\left(|M|^{12}\log^{3}d \max\{\log^2\|f\|_1,\log^3 \condR(f)\}\right)
\]
bit operations.
\end{prop}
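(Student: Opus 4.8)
The plan is to take the complexity bound for the \jssparse algorithm established by Jindal and Sagraloff~\cite{JinSag-sparse-17} and re-express its dependence on the separation bound through the real global condition number $\condR(f)$. Their analysis gives the bit complexity of isolating the real roots of $f$ in a fixed bounded interval (here $I$) as a fixed polynomial in the number of terms $|M|$, the coefficient size $\log\|f\|_1$ (equivalently the bitsize $\tau$), $\log d$, and the logarithm of the separation bound $\log(1/\Delta)$ of the roots under consideration. The only genuinely new step is the substitution for $\log(1/\Delta)$; the rest is bookkeeping.

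First I would invoke \Cref{theo:condbasedseparation}: taking $\varepsilon = 0$ (or any $\varepsilon \in [0,\tfrac{1}{\enumber d\condR(f)})$), the separation of the roots of $f$ in $I$ satisfies $\Delta_\varepsilon\uR(f) \geq \tfrac{1}{12 d \condR(f)}$, hence
\[
  \log\frac{1}{\Delta_\varepsilon\uR(f)} \;\leq\; \log\bigl(12 d \condR(f)\bigr) \;=\; \OO\bigl(\log d + \log\condR(f)\bigr).
\]
Substituting this into the Jindal--Sagraloff bound, every occurrence of $\log^k(1/\Delta)$ becomes $\OO\bigl((\log d + \log\condR(f))^k\bigr) = \OO\bigl(\log^k d + \log^k\condR(f)\bigr)$. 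The $\log^k d$ contributions are polylogarithmic in $d$, so they are absorbed by $\sOB$ (and into the explicit $\log^3 d$ factor that the statement retains), while the remaining $\log^k\condR(f)$ terms together with the $\log^2\|f\|_1$ terms coming from the coefficient size are all dominated by $\max\{\log^2\|f\|_1,\log^3\condR(f)\}$ up to polylogarithmic factors in $d$. Grouping the $|M|$-dependence into $|M|^{12}$ and the $d$-dependence into $\log^3 d$ then yields the claimed bound. If $f$ is not square-free, the algorithm first replaces $f$ by its square-free part, which does not affect the estimate (cf. the corresponding remark for \descartes).

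The main obstacle is matching the exponents precisely. The Jindal--Sagraloff complexity bound is a sum of several terms with different multidegrees in $(|M|,\log\|f\|_1,\log(1/\Delta),\log d)$, so after the substitution one must check term by term that each is dominated by the single product $|M|^{12}\,\log^3 d\,\max\{\log^2\|f\|_1,\log^3\condR(f)\}$; in particular one has to confirm that the dominant $|M|$-power is indeed $12$ and that the worst dependence on the coefficient size and on the separation bound are $\log^2\|f\|_1$ and $\log^3(1/\Delta)$ respectively. A secondary point is to apply \Cref{theo:condbasedseparation} with the correct notion of separation: the \jssparse run restricted to $I$ only interacts with roots in a neighbourhood of $I$, which is exactly the $\varepsilon$-real separation $\Delta_\varepsilon\uR(f)$ that the theorem lower-bounds, so one never needs a global separation bound (which would be far weaker for sparse polynomials).
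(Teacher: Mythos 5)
Your proposal matches the paper's (very terse) approach exactly: the paper gives no displayed proof of this proposition, but in the preceding paragraph it says precisely that \jssparse has bit complexity polynomial in $|M|$, $\tau$, and $\log(1/\Delta)$, and that one can "express the complexity bound of \jssparse using the condition number of the polynomial" by applying \Cref{theo:condbasedseparation}; your substitution of $\log(1/\Delta) = \OO(\log d + \log\condR(f))$ is exactly that step, and your observation that it is the $\varepsilon$-real separation near $I$ rather than a global separation bound that matters is also consistent with the definition used in \Cref{theo:condbasedseparation}. The one honest caveat you raise — that one has to go back to Jindal–Sagraloff to confirm the exponents $|M|^{12}$, $\log^3 d$, $\log^2\|f\|_1$, $\log^3(1/\Delta)$ — is a real gap in your write-up, but it is equally a gap in the paper, which also does not display or check the exponents from \cite{JinSag-sparse-17}; so relative to what the paper actually proves, your proposal is complete.
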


Even though the worst case bound of \jssparse is exponential with respect to the sparse encoding, it is the fist algorithm that actually depends on the actual separation bound of the input polynomial and exploits the support. 

In out probabilistic setting, the following result is immediate 

\begin{theo}
	\label{thm:jssparse-complexity-exp}
If $\fkf$ is a uniform random bit polynomial of bitsize $\tau$ and $\tau =
\Omega(\log{d} +u(\fkf))$, 
having support $|M|$, then  \jssparse computes isolating intervals for all the roots of $f$ in $I$ 
    in expected bit complexity 
    \[
\sOB\left(|M|^{12} \, \tau^2  \, \log^{3}{d} \right)
\]
under the (reasonable) assumption that $\tau > \log^3{d}$.
\end{theo}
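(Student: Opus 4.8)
The plan is to feed the condition-based deterministic estimate of Proposition~\ref{prop:JScomplexitycondition} into the moment bounds for $\ln\condR$ provided by Corollary~\ref{cor:realglobalexpectations}, exactly as in the proofs of Theorems~\ref{thm:Descartes-complexity-exp} and~\ref{thm:aNewDsc-complexity-exp}. For every realization of $\fkf$, Proposition~\ref{prop:JScomplexitycondition} bounds the number of bit operations of \jssparse by $\OB(|M|^{12}\log^3 d\,\max\{\log^2\|\fkf\|_1,\log^3\condR(\fkf)\})$, and since $\max\{a,b\}\le a+b$ for nonnegative $a,b$, this is at most $\OB(|M|^{12}\log^3 d\,(\log^2\|\fkf\|_1+\log^3\condR(\fkf)))$. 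Hence, by linearity of expectation, it suffices to bound $\bbE\log^2\|\fkf\|_1$ and $\bbE\log^3\condR(\fkf)$ separately.

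The first term is deterministic: a uniform random bit polynomial of degree $d$ and bitsize $\tau$ satisfies $\|\fkf\|_1\le(d+1)2^\tau$ with probability $1$, so $\log^2\|\fkf\|_1\le(\tau+\log(d+1))^2=\OO(\tau^2)$ using the standing hypothesis $\tau=\Omega(\log d)$; this bound passes unchanged to the expectation.

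For the second term I would first observe that the condition number entering Proposition~\ref{prop:JScomplexitycondition} may be truncated: by the classical worst-case separation bound $\log(1/\Delta(\fkf))=\sO(d\tau)$ \cite{Dav:TR:85} (together with Theorem~\ref{theo:condbasedseparation}, after passing to the square-free part so that $\condR$ is finite), the quantity $\log^3\condR(\fkf)$ may be replaced, up to constants, by $(\min\{\ln\condR(\fkf),c\})^3$ with $c=\OO(d\tau)$. I then invoke Corollary~\ref{cor:realglobalexpectations} with $\ell=3$. A uniform random bit polynomial has $u(\fkf)=0$ (Example~\ref{ex:uniform} together with the remark after Definition~\ref{def:uniformity}), so the hypothesis $\tau(\fkf)\ge 5+3\log(d+1)+3u(\fkf)+2\ell(\log c-\log\ell)$ of the second part of the corollary reduces to $\tau=\Omega(\log d+\log(d\tau))=\Omega(\log d+\log\tau)$, which is comfortably implied by $\tau>\log^3 d$. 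The corollary then yields $\bbE\log^3\condR(\fkf)=\bbE(\min\{\ln\condR(\fkf),c\})^3=\OO((\log d)^3)$.

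Putting the two estimates together gives $\bbE\,\cost(\jssparse,\fkf)=\OB(|M|^{12}\log^3 d\,(\tau^2+\log^3 d))$, and the assumption $\tau>\log^3 d$, which forces $\tau^2>\log^3 d$, collapses the parenthesis to $\OO(\tau^2)$, producing the claimed $\sOB(|M|^{12}\tau^2\log^3 d)$. The main obstacle is the bookkeeping in the previous paragraph: one has to make sure the truncation level $c$ is genuinely polynomial in $d$ and $\tau$ (so that $\log c=\OO(\log d+\log\tau)$ and the hypotheses of Corollary~\ref{cor:realglobalexpectations} are met) and that non-square-free realizations — which occur only when the discriminant vanishes, hence with negligible probability, and contribute at most the worst-case cost — do not affect the expectation; granting these, the remainder is a routine manipulation of $\OB(\cdot)$ and $\sOB(\cdot)$ estimates.
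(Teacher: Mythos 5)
The paper gives no explicit proof of Theorem~\ref{thm:jssparse-complexity-exp} — it merely states that the result is ``immediate'' after Proposition~\ref{prop:JScomplexitycondition} — and your argument is exactly the intended one: bound $\log^2\|\fkf\|_1$ deterministically by $\OO(\tau^2)$, truncate $\log\condR(\fkf)$ at a polynomial level using the worst-case separation bound so that $\condR$ stays finite even for non-square-free realizations, and feed the truncated quantity into Corollary~\ref{cor:realglobalexpectations} with $\ell=3$ and $u(\fkf)=0$, finally absorbing the $\log^3 d$ term under the hypothesis $\tau>\log^3 d$. The only stylistic caveat is that the truncation is more cleanly justified by writing the cost as $\min\{\text{Prop.~\ref{prop:JScomplexitycondition} bound},\text{worst-case bound}\}$ and pulling the $\min$ inside the $\max$ (exactly as done in the proof of Theorem~\ref{thm:Descartes-complexity-exp}) rather than appealing to Theorem~\ref{theo:condbasedseparation}, which only bounds $\log(1/\Delta)$ by $\log\condR$ and not the other way around; but this is a presentational point and the substance of your derivation is correct and matches what the paper intends.
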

\label{sed:conclusion}

\begin{acks}
J.T-C. was partially supported by a postdoctoral fellowship of the 2020
``Interaction'' program of the Fondation Sciences Mathématiques de Paris, and some funds from 2023 AMS-Simons Travel Gran during the writing of this paper. He is
grateful to Evgenia Lagoda for emotional support during the thinking period,  Brittany Shannahan and Lewie-Napoleon III for
emotional support during the writing period and Jazz G. Suchen for useful
suggestions regarding
Proposition~\ref{prop:reallinearprojectiondiscreterandomvector}.
A.E. was partially supported by NSF CCF 2110075 and NSF CCF 2414160, J.T-C. and E.T. 
were partially
supported by ANR
JCJC GALOP (ANR-17-CE40-0009).
\end{acks}

\bibliographystyle{ACM-Reference-Format}
\bibliography{BIBLIO.bib}


\end{document}
\endinput